\def\BibTeX{{\rm B\kern-.05em{\sc i\kern-.025em b}\kern-.08em
    T\kern-.1667em\lower.7ex\hbox{E}\kern-.125emX}} 
\crefname{theorem}{thm.}{thm.}
\crefname{section}{sec.}{sec.}
\crefname{corollary}{cor.}{cor.}
\crefname{lemma}{lem.}{lem.}
\Crefname{theorem}{Thm.}{Thm.}
\Crefname{section}{Sec.}{Sec.}
\Crefname{figure}{Fig.}{Fig.}
\Crefname{equation}{Eqn.}{Eqn.}
\Crefname{assumption}{Assumption}{Assumption}
\pgfplotsset{compat=1.15}
\definecolor{classical}{HTML}{DBD3F5}
\definecolor{quantum}{HTML}{FBE7E6}
\tikzset{
     heat/.style={brown,thick,-Latex, decoration={coil,aspect=0},decorate}
  }
\newtheorem{theorem}{Theorem}
\newtheorem{lemma}{Lemma}
\newtheorem{assumption}{Assumption}
\newtheorem{remark}{Remark}
\newcommand{\Tv}{\mathbf{u}}
\newcommand{\ev}{\mathbf{e}}
\newcommand{\bv}{\mathbf{b}}
\newcommand{\xv}{\mathbf{x}}
\newcommand{\yv}{\mathbf{y}}
\newcommand{\psiv}{\mathbf{\psi}}
\newcommand{\phiv}{\mathbf{\phi}}
\newcommand{\pv}{\mathbf{p}}
\newcommand{\Zev}{\mathbf{0}}
\newcommand{\Am}{\mathbf{A}}
\newcommand{\Xm}{\mathbf{X}}
\newcommand{\Ym}{\mathbf{Y}}
\newcommand{\Zm}{\mathbf{Z}}
\newcommand{\Um}{\mathbf{U}}
\newcommand{\In}{\mathbf{I}}
\newcommand{\Id}{\mathbf{I}}
\newcommand{\Pm}{\mathbf{P}}
\newcommand{\Vm}{\mathbf{V}}
\newcommand{\epsv}{\mathbf{\epsilon}}
\newcommand{\thetav}{\mathbf{\theta}}
\newcommand{\ra}{\rangle}
\newcommand{\la}{\langle}
\newcommand{\uv}{\mathbf{u}}
\newcommand{\uvh}{\hat{\mathbf{u}}}
\newcommand{\uvo}{\overline{\mathbf{u}}}
\newcommand{\nx}{N}
\newcommand{\tsteps}{M}
\newcommand{\Rr}{\mathbb{R}}
\newcommand{\Cr}{\mathbb{C}}
\newcommand{\Nr}{\mathbb{N}}
\newcommand{\tra}{\prime}
\newcommand{\tauv}{\mathbf{\tau}}
\newcommand{\gv}{\mathbf{g}}
\newcommand{\po}{\mathbf{p}}
\newcommand{\Po}{\mathbf{X}}
\newcommand{\bm}{\mathbf{\beta}_m}
\newcommand{\bk}{\mathbf{\beta}_k}
\newcommand{\bo}{\mathbf{\beta}}
\newcommand{\aq}{a}
\newcommand{\fo}{f}
\newcommand{\fov}{\mathbf{f}}
\newcommand{\Bm}{\mathbf{\Lambda}}
\newcommand{\sv}{\mathbf{s}}
\definecolor{backcolour}{rgb}{0.95,0.95,0.92}
\definecolor{codegreen}{rgb}{0,0.6,0}
\lstdefinestyle{myStyle}{
    backgroundcolor=\color{backcolour},   
    commentstyle=\color{codegreen},
    basicstyle=\ttfamily\normalsize,
    breakatwhitespace=false,         
    breaklines=true,                 
    keepspaces=true,    
    numbersep=5pt,                  
    showspaces=false,                
    showstringspaces=false,
    showtabs=false,                  
    tabsize=2,
}
\definecolor{deepblue}{rgb}{0,0,0.5}
\definecolor{deepred}{rgb}{0.6,0,0}
\definecolor{deepgreen}{rgb}{0,0.5,0}
\newcommand\pythonstyle{\lstset{
language=Python,
basicstyle=\ttfamily,
morekeywords={self},              
keywordstyle=\ttfamily\color{deepblue},
emph={MyClass,__init__},          
emphstyle=\ttfamily\color{deepred},    
stringstyle=\color{deepgreen},
frame=tb,                         
showstringspaces=false
}}
\tikzset{
>=stealth',
  punktchain/.style={
    rectangle, 
    rounded corners, 
    draw=black, very thick,
    text width=11em, 
    minimum height=8em, 
    text centered, 
    on chain},
  line/.style={draw, thick, <-},
   punktchainVert/.style={
    rectangle, 
    rounded corners, 
    draw=black, very thick,
    text width=20em, 
    minimum height=8em, 
    text centered, 
    on chain},
  line/.style={draw, thick, <-}
  element/.style={
    tape,
    top color=white,
    bottom color=blue!50!black!60!,
    minimum width=8em,
    draw=blue!40!black!90, very thick,
    text width=10em, 
    minimum height=3.5em, 
    text centered, 
    on chain},
  every join/.style={->, thick,shorten >=1pt},
  decoration={brace},
  tuborg/.style={decorate},
  tubnode/.style={midway, right=2pt},
}
\tikzset{
  basic box/.style={
    shape=rectangle, rounded corners, align=center, draw=#1, fill=#1!25},
  header node/.style={
    node family/width=header nodes,
    text depth=+.3ex, fill=white, draw},
  header/.style={%
    inner ysep=+1.5em,
    append after command={
      \pgfextra{\let\TikZlastnode\tikzlastnode}
      node [header node] (header-\TikZlastnode) at (\TikZlastnode.north) {#1}
      node [span=(\TikZlastnode)(header-\TikZlastnode)]
           at (fit bounding box) (h-\TikZlastnode) {}
    }
  },
  fat blue line/.style={ultra thick, blue},
  fat black line/.style={thick, black}
}
\patchcmd\maketitle{\def\@makefnmark{\rlap{\@textsuperscript{\normalfont\@thefnmark}}}}{}{}{}
\def\thanksAAffil#1{
  \footnotemarkAAffil\protected@xdef\@thanks{\@thanks%
        \protect\footnotetextAAffil[\the \c@footnoteAAffil]{#1}}%
}
\def\thanksANote#1{%
  \footnotemarkANote%
  \protected@xdef\@thanks{\@thanks%
        \protect\footnotetextANote[\the \c@footnoteANote]{#1}}%
}
\author{
  Amit Surana%
  \thanks{amit.surana@rtx.com, RTX Technology Research Center (RTRC), East Hartford, CT, USA.}%
  $^{\hspace{5pt} ,}$\footnotemark[3]
  \hspace{5pt}, %
  Abeynaya Gnanasekaran%
  \thanks{abeynaya.gnanasekaran@rtx.com, RTX Technology Research Center (RTRC), Berkeley, CA, USA.}%
  $^{\hspace{5pt} ,}$\thanks{Both authors contributed equally to this work.}%
}
\title{Variational Quantum Framework for Partial Differential Equation Constrained Optimization}
\begin{document}
\maketitle

\begin{abstract}
We present a novel variational quantum framework for linear partial differential equation (PDE) constrained optimization problems. Such problems arise in many scientific and engineering domains. For instance, in aerodynamics, the PDE constraints are the conservation laws such as momentum, mass and energy balance, the design variables are vehicle shape parameters and material properties, and the objective could be to minimize the effect of transient heat loads on the vehicle or to maximize the lift-to-drag ratio. The proposed framework utilizes the variational quantum linear system (VQLS) algorithm and a black box optimizer as its two main building blocks. VQLS is used to solve the linear system, arising from the discretization of the PDE constraints for given design parameters, and evaluate the design cost/objective function. The black box optimizer is used to select next set of parameter values based on this evaluated cost, leading to nested bi-level optimization structure within a hybrid classical-quantum setting. We present detailed computational error and complexity analysis to highlight the potential advantages of our proposed framework over classical techniques. We implement our framework using the PennyLane library, apply it to a heat transfer optimization problem, and present simulation results using Bayesian optimization as the black box optimizer.
\end{abstract}

\section{Introduction}
We present a novel variational quantum framework for optimization problems constrained by linear partial differential equations (PDE). Variational quantum algorithms (VQAs) are hybrid classical-quantum algorithms that have emerged as promising candidates to optimally utilize today's Noisy Intermediate Scale Quantum (NISQ) devices. VQAs use a quantum subroutine to evaluate the cost/ energy function using a shallow parameterized quantum circuit. They provide a versatile framework to tackle a wide variety of problems. As such, they have been successfully used in many applications such as finding ground and excited states of Hamiltonians~\cite{app1,app2,app3,app4}, dynamical quantum simulation~\cite{ds1,ds2}, combinatorial optimization~\cite{bp_an1,c2}, solving linear systems~\cite{VQLS,ls2,ls3}, integer factorization~\cite{f1}, principal component analysis~\cite{pca1,pca2,pca3}  and quantum machine learning~\cite{biamonte2017quantum}. 

In this work we are interested in the application of the variational quantum linear solver (VQLS)~\cite{VQLS}, a type of VQA for solving ordinary differential equations (ODEs) and partial differential equations (PDEs)~\cite{liu2022application,demirdjian2022variational,liu2021variational,leong2022variational}. In this framework the ODEs/PDEs are discretized in space/time and embedded into a system of linear equations. VQLS is then applied to the linear system to obtain the normalized solution and compute observables of interest. While there are no theoretical guarantees associated with VQLS, its query complexity is empirically found to scale polylogarithmically with the size of the linear system~\cite{VQLS} and thus could be beneficial over classical linear solvers that have a polynomial scaling. Unlike variational algorithms, fault tolerant quantum algorithms (FTQA) for ODE/PDE simulation have also been proposed and have provable theoretical guarantees/advantages over classical methods. Such FTQA rely either on quantum linear system solver or Hamiltonian simulation framework, see for example~\cite{
berry2017quantum,li2023potential,babbush2023exponential,jennings2023cost,childs2021high,liu2021efficient,jin2022quantum,surana2024efficient,gnanasekaran2023efficient}. While, such FTQA can be used in our proposed PDE constrained optimization framework, in this paper we restrict to a variational setting with a focus on NISQ-era implementation. In particular, we extend the VQLS based ODE/PDE solution framework for linear PDE constrained design optimization problems. 

Many science and engineering applications necessitate the solution of optimization problems constrained by physical laws that are described by systems of PDEs. Such problems arise in several domains including aerodynamics, computational fluid dynamics (CFD), material science, computer vision, and inverse problems~\cite{antil2018frontiers}. For instance in aerodynamic design the PDE constraints are the conservation laws for momentum, mass,  and energy, the design variables are aerodynamic shape parameters and material properties, and the objective is to minimize heat transfer or to maximize lift-to-drag ratio. Inverse problems on other hand typically involve calibrating PDE model parameters to match the given measurements. Closed-form solutions are generally unavailable for PDE constrained optimization problems, necessitating the development of numerical methods~\cite{biegler2003large,hinze2008optimization,antil2018frontiers}. A variety of classical gradient-based and gradient-free numerical methods have been developed, and rely on repeated calls to the underlying PDE solver. Since PDE simulations are computationally expensive, using them within the design/optimization loop can become a bottleneck. 

To address this challenge we propose a variational quantum framework where a quantum device is used for the computationally demanding PDE simulations and design/optimization is carried out on a classical computer. To best of our knowledge, this is the first work considering the simulation and design problem jointly in the context of quantum algorithms. The main contributions of this work are as follows:
\begin{itemize}
  \item We formulate a bi-level variational quantum framework for solving linear PDE constrained optimization problems as discussed above. We refer to this as bi-level variational quantum PDE constrained optimization (BVQPCO) framework. Our framework utilizes the VQLS algorithm and a black box classical optimizer as its two main building blocks. VQLS is used to solve the linear system constraints, arising from the discretization of the underlying PDEs, for given design parameters, and evaluate the cost/objective function, while the black box optimizer is used to select the next set of parameter values based on the evaluated cost. Since, VQLS itself involves optimization of the ansatz parameters which parameterize the solution of the linear system, this leads to a sequence of nested bi-level optimization steps within a VQA setting. 
  \item We present detailed computational error analysis for VQLS based solution of linear ODEs under explicit and implicit Euler discretization schemes. Furthermore by combining this error analysis with the empirically known results for query complexity of VQLS, we assess potential utility of our BVQPCO framework over the classical methods.  
  \item We implement the BVQPCO framework using the PennyLane library, apply it to solve a prototypical heat transfer optimization problem, and present simulation results using Bayesian optimization (BO) as the black box optimizer.
\end{itemize}

The paper is organized as follows. In \Cref{sec:PDEopt} we motivate the general structure of PDE constrained design optimization problem through a detailed mathematical formulation of the 1D heat transfer optimization problem. We describe our BVQPCO framework and outline a pseudo-code for its implementation in~\Cref{sec:bilevel}. \Cref{sec:VQLS,sec:BO} provide an overview of VQLS and BO, respectively.  We  present the main results from the complexity analysis of the BVQPCO framework in~\Cref{sec:erranalysis}; detailed proofs are provided in the Appendix. In ~\Cref{sec:num} we illustrate the BVQPCO framework on the heat transfer optimization problem and provide details on the implementation using the PennyLane library. We finally conclude in~\Cref{sec:conc} with avenues for future work.  

\section{Notation}
Let $\Nr = \{1, 2,\dots\}$, $\Rr$, and $\Cr$ be the sets of positive integers,  real numbers, and  complex numbers respectively. We will denote vectors by small bold letters and  matrices by capital bold letters. $\Am^*$ and $\Am^{\prime}$ will denote the vector/matrix complex conjugate and vector/matrix transpose, respectively. $Tr(\Am)$ will denote the trace of a matrix. We will represent an identity matrix of size $s \times s$ by $\Id_s$.  Kronecker product will be denoted by $\otimes$. If  $\Am\in\Rr^{m\times n}$ and $\mathbf{B}\in\Rr^{p\times q}$, then their Kronecker product is given by
\begin{equation}\label{eq:kronM}
\mathbf{C}=\Am\otimes\mathbf{B}=\left(
                                 \begin{array}{ccc}
                                   a_{11}\mathbf{B} & \cdots & a_{1n}\mathbf{B} \\
                                   \vdots & \vdots & \vdots \\
                                    a_{m1}\mathbf{B} & \cdots & a_{mn}\mathbf{B} \\
                                 \end{array}
                               \right),
\end{equation}
where, $\mathbf{C}\in \Rr^{mp\times nq}$. 

The standard inner product between two vectors $\mathbf{x}$ and $\mathbf{y}$ will be denoted by $\la\mathbf{x},\mathbf{y}\ra$. The $l_p$-norm in the Euclidean space $\Rr^n$ will be denoted by $\|\cdot\|_p,p=1,2,\dots,\infty$ and is defined as follows
\begin{equation}\label{eq:normx}
\|\xv\|_p=\left(\sum_{j=1}^n|x_j|^p\right)^{1/p},\quad \xv\in \Rr^n.
\end{equation}
For the norm of a matrix $\Am \in \Rr^{n\times m}$, we use the induced norm, namely
\begin{equation}\label{eq:normA}
\|\Am\|_p=\max_{\xv \in \Rr^m} \frac{\|\Am \xv\|_p}{\|\xv\|_p}.
\end{equation}
We will use $p=2$, i.e., $l_2$ norm for vectors and spectral norm for matrices unless stated otherwise. The spectral radius of a matrix denoted by $\rho(\Am)$ is defined as 
\begin{equation}\label{eq:specA}
\rho(\Am)=\max\{|\lambda|:\lambda \mbox{ eignenvalues of } \Am\}.
\end{equation}
Trace norm of a matrix $\Am$ is defined as $\|\Am\|_{Tr}=Tr(\sqrt{\Am^*\Am})$ which is the  Schatten $q$-norms with $q=1$.

We will use standard braket notation, i.e. $|\psiv\ra$ and $\la\psiv|$ in representing the quantum states \cite{nielsen2010quantum}. The inner product between two quantum states $|\psiv\ra$ and $|\phiv\ra$ will be denoted by $\la\psiv|\phiv\ra$. For a vector $\mathbf{x}$, we denote by $|\mathbf{x}\ra=\frac{\mathbf{x}}{\|\mathbf{x}\|}$ as the corresponding quantum state. The trace norm $\rho(\psiv,\phiv)$ between two pure states $|\psiv\ra$ and $|\phiv\ra$ is defined as
\begin{equation}\label{eq:trace}
\rho(\psiv,\phiv)=\frac{1}{2}\| |\phiv\ra \la\phiv| -|\psiv\ra \la\psiv| \|_{Tr}=\sqrt{1-|\la \phiv|\psiv \ra|^2}.
\end{equation}

\section{PDE Constrained Optimization}\label{sec:PDEopt}
We consider a general class of PDE constrained design optimization problems of the form
\begin{eqnarray}
&&\min_{\mathbf{p},\mathbf{u}} C_d(\mathbf{u}, \mathbf{p}) \label{eq: gen_opt}\\
\text{subject to,} &&\mathbf{F}(\mathbf{u},\mathbf{p},t) = 0,\label{eq:pdecons}\\
&& g_i(\mathbf{p})\leq 0, i=1,\cdots,N_e,
\end{eqnarray}
where, $\mathbf{p}\in\Rr^{N_p}$ is the vector of design variable (e.g., material type, shape), $C_d$ is the cost function (e.g., heat flux, drag/ lift) and $\mathbf{F}(\mathbf{u}, \mathbf{p},t)$ are the PDE constraints (e.g., conservation laws, constitutive relations, boundary/ initial conditions) with $\mathbf{u}(\mathbf{x},t;\pv)$ being the solution of the PDE defined as a function of space $\mathbf{x}$ and time $t$ for given parameters, and $g_i(\mathbf{p}),i=1,\cdots,N_e$ are constraints on the parameters~\cite{de2015numerical}. Note that $\mathbf{F}$ depends on $\mathbf{u}$ and its partial derivatives in space/time, which we have not explicitly represented in $\mathbf{F}$ for brevity. 

In this paper we will restrict $\mathbf{F}$ to arise from linear PDEs. The cost function $C_d$ can in general be non-linear, non-convex and non-smooth. The design variables $\mathbf{p}$ can be mixed, i.e. both continuous and discrete valued. Though the underlying PDE is linear, the constraints $\mathbf{F}(\mathbf{u},\mathbf{p},t)$ could involve products of $\mathbf{u}$ and $\mathbf{p}$ making them nonlinear. For numerical purposes, the PDE is discretized in space and/or time and the discretized solution $\mathbf{u}\in \Rr^{\nx}$ (with slight abuse of notation) will be typically a very high dimensional vector, i.e., $\nx\gg 1$. Thus, in general the optimization problem is computationally challenging to solve. 

One popular classical approach to solve PDE constrained optimization is via gradient based methods~\cite{biegler2003large,hinze2008optimization,antil2018frontiers}. For example, the gradient of the cost function $C_d$ w.r.t to $\pv$ can be obtained via finite differencing. However, that requires solving the PDE (\ref{eq:pdecons}) at least $N_p+1$ times to approximate the derivative of the solution $\mathbf{u}$ with respect to the parameters. A more efficient approach is via adjoint formulation, where one needs to solve the PDE (\ref{eq:pdecons}) only one time and an additional adjoint equation to compute the required gradient \cite{biegler2003large,hinze2008optimization}. However, given that this gradient based approach requires forward simulation(s) for each gradient step, it can be computationally intensive. An alternative approach is surrogate model assisted optimization \cite{jiang2020surrogate,audet2000surrogate}, which has particularly gained significant interest with the recent advances in machine learning \cite{raissi2019physics,cai2021physics,sun2023physics,lu2021physics}.  In this approach a surrogate model is trained on the PDE variables calculated with solutions from a PDE solver at several different parameter values. The surrogate model which is computationally cheaper to evaluate than the PDE solver, can then be used for optimization. However, this requires large amount of training data or PDE solves, and the model is only approximate. Another related approach is black box optimization, such as Bayesian optimization (see Section \ref{sec:BO}) where the PDE solver is treated as a black box and surrogate model construction and optimization are accomplished simultaneously by sequentially/adaptively sampling parameters for querying the PDE solver \cite{morita2022applying,sarkar2019multifidelity,diessner2022investigating}. 

In all the above mentioned approaches obtaining the PDE solution repeatedly for different parameter values is the computational bottleneck. To alleviate this challenge, we propose to use VQLS based PDE solution, and embed that within a classical black box optimization scheme for design/optimization as discussed above.

\subsection{Heat Transfer Optimization Problem Formulation}\label{sec:heatqen}
In high temperature aerodynamic applications such as space reentry vehicles, specially-designed heat-resistant coating layers or materials are applied to protect the vehicle body from excessive heat loading. In such applications it is critical to choose appropriate thickness and material property (thermal diffusivity) of the coating layer. While a thicker coating layer improves the level of insulation, it also makes it more susceptible to thermo-mechanical cracking. Thus, one needs to choose the appropriate material while meeting the thickness constraints. To model this problem we consider a canonical setup of a uniform coating layer applied on a flat plate. The plate is subject to a time-dependent heat flux on one end as shown in Fig.~\ref{fig: flat_plate}.

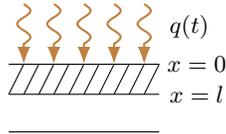
\begin{figure}[!htbp]
\centering
\begin{tikzpicture}
    \draw(0,-0.5) -- (2, -0.5);
    \draw (0, -0.9) -- (2,-0.9);
    \draw (0, -1.4) -- (2,-1.4);
    \foreach \x in {0.2,0.4, 0.6, 0.8, 1.0, 1.2, 1.4, 1.6, 1.8, 2.0} {
      \draw (\x, -0.5)--(\x -0.2, -0.9);
    }
    \foreach \x in {0.2,0.6,1.0,1.4,1.8} {
      \draw[heat](\x,0.3)--++(0, -0.8);
    }
    \node at (2.4, 0) {\small$q(t)$} ;
    \node at (2.5, -0.5) {\small$x=0$} ;
    \node at (2.5, -0.9) {\small$x=l$} ;
\end{tikzpicture}
\caption{Design of coating layers for a flat plate subject to a time-dependent flux.}
\label{fig: flat_plate}
\end{figure}

To model the heat flow through the plate, we use the 1D heat equation over the domain $[0,l]$ with Neumann boundary conditions
\begin{eqnarray}
  \frac{\partial T(x,t)}{\partial t}&=&\alpha \frac{\partial^2 T(x,t)}{\partial x^2},\label{eq:heateqn}\\
  -k\frac{\partial T(x,t)}{\partial x}|_{x=0} &=& q(t), \label{eq:bc1}\\
  -k\frac{\partial T(x,t)}{\partial x}|_{x=l} &=& 0,\label{eq:bc2}\\
   T(x,0)&=&T_0(x),\label{eq:ic}
\end{eqnarray}
where, $t\in [0,P]$, $q(t)$ is time dependent heat flux, $k$ is thermal conductivity, and $\alpha$ is thermal diffusivity. We consider following optimization problem
\begin{eqnarray}
&& \min_{l,\alpha}C_d(l,\alpha)\equiv \frac{w_1}{P}\frac{\int_{t=0}^P T(l,t)dt}{T_0(l)}+w_2\left(\frac{l}{l_{\min}}-1\right)^2+w_3\left(\frac{\alpha}{\alpha_{\max}}-1\right)^2, \label{eq:pdco}\\
\text{subject to:}&& \text{PDE constraints (\ref{eq:heateqn}-\ref{eq:ic})},\notag\\
&& l_{\min}\leq l\leq l_{\max},\quad \alpha_{\min} \leq \alpha\leq \alpha_{\max},\notag
\end{eqnarray}
where, $w_i\geq0,i=1,2,3$ are user selected weights. The optimization problem trades off the thickness $l$ and $\alpha$ with the average temperature at the interface $x=l$.

Discretizing the PDE (\ref{eq:heateqn}) with second order accuracy in space and the boundary conditions (\ref{eq:bc1}) and (\ref{eq:bc2}) with first order accuracy in space, the vector $\Tv(t)=(\,T_1(t), \dots, T_N(t)\,)^{\prime}$, where $T_i(t)=T(x_i,t), i=1, \dots, N$, $x_i=(i-1)\Delta x$ and $\Delta x =\frac{l}{N-1}$,  satisfies an ordinary differential equation (ODE)
\begin{equation*}
\frac{d T_i(t)}{d t}=\frac{\alpha}{(\Delta x)^2} (T_{i+1}-2T_i+T_{i-1}),
\end{equation*}
with
\begin{eqnarray}
  -k\frac{T_{1}(t)-T_{0}(t)}{\Delta x}&=& q(t), \Rightarrow  T_{0}(t)= \frac{q(t)\Delta x }{k}+T_1(t),\\
  -k\frac{T_{N+1}(t)-T_{N}(t)}{\Delta x}&=&0, \Rightarrow    T_{N+1}=T_{N}.
\end{eqnarray}
Therefore, for $i=1$
\begin{equation*}
\frac{d T_1(t)}{d t}=\frac{\alpha}{\Delta x^2} (T_{2}-2T_1+T_{0})=\frac{\alpha}{(\Delta x)^2} (T_2-T_1)+\frac{\alpha }{k\Delta x} q(t),
\end{equation*}
and, $i=N$
\begin{equation*}
\frac{d T_N}{d t}=\frac{\alpha}{(\Delta x)^2} (-T_N+T_{N-1}).
\end{equation*}
Thus one can express the above system of ODEs in a compact vector form as:
\begin{equation}
\frac{d \Tv}{d t}=\frac{\alpha}{(\Delta x)^2} \Am \Tv+\frac{q(t)}{k\Delta x}\ev_1, \label{eq:spaceODE}
\end{equation}
with the initial condition $\Tv(0)=\Tv_0=(\,T_0(x_1), \dots, T_0(x_N)\,)^\prime$. Here,  $\ev_i=(0, \dots, 1, \dots, 0)^\prime$ is the standard unit vector with entry $1$ at index $i$ and zero elsewhere and 
\begin{equation}\label{eq:A}
\Am=\begin{pmatrix*}[r]
 -1 & 1 &  &  & 0 \\
       1 & -2 & 1 &  &  \\
       & \ddots & \ddots & \ddots &  \\
       &  & \ddots  & -2 & 1 \\
      0 &  &  & 1 & -1 \\
\end{pmatrix*}.
\end{equation}

\subsubsection{Explicit Time Discretization} \label{sec:expED}
Furthermore, discretizing the Eqn.~(\ref{eq:spaceODE}) in time using the explicit or forward Euler scheme leads to
\begin{equation*}
\Tv^{k+1}_f=\bigg(\In+\frac{\alpha \Delta t}{(\Delta x)^2} \Am\bigg) \Tv^k_f+q^k \frac{\Delta t}{k\Delta x}\ev_1,\quad k=1,\dots,M-1
\end{equation*}
where, $\Tv^{k}_f=(T_{1k},T_{2k},\dots,T_{Nk})^\prime$ with $T_{ik}=T(x_i,t_k)$, $q^k=q((k-1)\Delta t)$, $t_k=(k-1)\Delta t$ and $\Delta t=\frac{P}{M-1}$. Note that for this discretization scheme to be stable, CFL condition should be met $\frac{\alpha \Delta t}{(\Delta x)^2}\leq \frac{1}{2}$.

One can express the set of difference equations above in a form of a single system of linear algebraic equation
\begin{equation}\label{eq:linsysexp}
\tilde{\Am}_f\tilde{\Tv}_f=\tilde{\bv},
\end{equation}
where, $\tilde{\Tv}_f=(\Tv^1_f,\Tv^2_f, \dots,\Tv^M_f)^\prime$, $\tilde{\bv}=(\Tv_0,\frac{\Delta t}{k\Delta x}q^1\ev_1,\frac{\Delta t}{k\Delta x}q^2\ev_1, \dots,\frac{\Delta t}{k\Delta x}q^{M-1}\ev_1)^\prime$
and
\begin{eqnarray}
\tilde{\Am}_f&=&\left(
    \begin{array}{ccccc}
      \In & 0 & 0 & \cdots & 0 \\
      -\In-\frac{\alpha \Delta t}{(\Delta x)^2} \Am & \In & 0 & 0 & \cdots \\
      0 &  -\In-\frac{\alpha \Delta t}{(\Delta x)^2} \Am & \In & 0 & \cdots \\
      \vdots & \vdots & \vdots  & \cdots & \vdots \\
      0 & 0 & \cdots &  -\In-\frac{\alpha \Delta t}{(\Delta x)^2} \Am &  \In\\
    \end{array}
  \right)\notag\\
 &=&  \left(
    \begin{array}{ccccc}
      \In & 0 & 0 & \cdots & 0 \\
      -\In & \In & 0 & 0 & \cdots \\
      0 & -\In & \In & 0 & \cdots \\
      \vdots & \vdots & \vdots  & \cdots & \vdots \\
      0 & 0 & \cdots & -\In & \In \\
 \end{array} \right)-\frac{\alpha \Delta t}{(\Delta x)^2} \left(
    \begin{array}{ccccc}
      0 & 0 & 0 & \cdots & 0 \\
      \Am & 0 & 0 & 0 & \cdots \\
      0 &  \Am & 0 & 0 & \cdots \\
      \vdots & \vdots & \vdots  & \cdots & \vdots \\
      0 & 0 & \cdots & \Am & 0 \\
  \end{array}\right)\notag \\
 &=&\tilde{\Am}_{f1}-\frac{\alpha \Delta t}{(\Delta x)^2}\tilde{\Am}_{f2}. 
  \addtocounter{equation}{-1}\refstepcounter{equation}
  \label{eq:Aexp}
\end{eqnarray}

To introduce explicit dependence on $l$, we define a rescaling $x=ly$, so that $y\in [0,1]$ and $\Delta x= l \Delta y$, leading to:
\begin{eqnarray*}
\tilde{\bv}(l,\alpha)&=& (\Tv_0,\frac{\Delta t}{kl\Delta y}q^1\ev_1^\prime,\frac{\Delta t}{kl\Delta y}q^2\ev_1^\prime, \dots,\frac{\Delta t}{k l\Delta y}q^{M-1}\ev_1^\prime)^\prime,\\
\tilde{\Am}_f(l,\alpha)&=&\tilde{\Am}_{f1}-\frac{\alpha \Delta t}{l^2 (\Delta y)^2}\tilde{\Am}_{f2}.
\end{eqnarray*}

\subsubsection{Implicit Time Discretization}\label{sec:impED}
Implicit or backward Euler time discretization scheme is unconditionally stable, and leads to
\begin{equation*}
\frac{\Tv^{k+1}_b-\Tv^k_b}{\Delta t}=\frac{\alpha }{(\Delta x)^2} \Am \Tv^{k+1}_b+q^k \frac{1}{k\Delta x}\ev_1,
\end{equation*}
or equivalently
\begin{equation*}
\bigg(I-\frac{\alpha \Delta t}{(\Delta x)^2} \Am\bigg)\Tv^{k+1}_b= \Tv^k_b+q^k \frac{\Delta t}{k\Delta x}\ev_1.
\end{equation*}
As for the explicit case, we can express the set of difference equations above into a single linear system
\begin{equation}\label{eq:linsysimp}
\tilde{\Am}_b\tilde{\Tv}_b=\tilde{\bv},
\end{equation}
where, $\tilde{\Tv}_b$ and $\tilde{\bv}$ are similarly defined as above, and
\begin{eqnarray}
\tilde{\Am}_b&=&\left(
    \begin{array}{ccccc}
      \In & 0 & 0 & \cdots & 0 \\
      -\In & (\In-\frac{\alpha \Delta t}{(\Delta x)^2} \Am) & 0 & 0 & \cdots \\
      0 & -\In & (\In-\frac{\alpha \Delta t}{(\Delta x)^2} \Am)  & 0 & \cdots \\
      \vdots & \vdots & \vdots  & \cdots & \vdots \\
      0 & 0 & \cdots & -\In & (\In-\frac{\alpha \Delta t}{(\Delta x)^2} \Am)  \\
    \end{array}
  \right)\notag\\
 &=&  \left(
    \begin{array}{ccccc}
      \In & 0 & 0 & \cdots & 0 \\
      -\In & \In & 0 & 0 & \cdots \\
      0 & -\In & \In & 0 & \cdots \\
      \vdots & \vdots & \vdots  & \cdots & \vdots \\
      0 & 0 & \cdots & -\In & \In \\
 \end{array} \right)-\frac{\alpha \Delta t}{(\Delta x)^2} \left(
    \begin{array}{ccccc}
      0 & 0 & 0 & \cdots & 0 \\
      0 & \Am & 0 & 0 & \cdots \\
      0 &  0 & \Am & 0 & \cdots \\
      \vdots & \vdots & \vdots  & \cdots & \vdots \\
      0 & 0 & \cdots & 0 & \Am \\
  \end{array}\right)\notag \\
 &=&\tilde{\Am}_{b1}-\frac{\alpha \Delta t}{(\Delta x)^2}\tilde{\Am}_{b2}. 
 \addtocounter{equation}{-1}\refstepcounter{equation}
 \label{eq:Aim}
\end{eqnarray}
Similar to the explicit case, one can introduce rescaling with $l$ in the above expression.

\subsubsection{Optimization Problem}
With these discretizations and rescaling, we can finally express the optimization problem (\ref{eq:pdco}) as:
\begin{eqnarray}
&& \min_{\mathbf{p},\tilde{\Tv}_s} C_d(\mathbf{p},\tilde{\Tv}_s)\equiv f(\frac{\langle \phiv_d,\tilde{\Tv}_s\rangle}{\langle \phiv_n,\tilde{\Tv}_s\rangle},\mathbf{p})= w_1 \frac{\Delta t}{P}\frac{\langle \phiv_d,\tilde{\Tv}_s\rangle}{\langle \phiv_n,\tilde{\Tv}_s\rangle}+w_2\bigg(\frac{l}{l_{\min}}-1\bigg)^2+w_3\left(\frac{\alpha}{\alpha_{\max}}-1\right)^2, \notag\\
\text{subject to,}&& \tilde{\Am}_s(\mathbf{p})\tilde{\Tv}_s=\tilde{\bv}(\mathbf{p}) \label{eq:optprob} \\
&& l_{\min}\leq l\leq l_{\max},\quad \alpha_{\min}\leq \alpha\leq \alpha_{\max},
\end{eqnarray}
where, $\mathbf{p}=(l,\alpha)$ are the design parameters and we have used the fact that $\int_{0}^PT(l,t)dt\approx \sum_{k=1}^{M} T_{Nk}\Delta t$ and $\sum_{k=1}^M T_{Nk}=\langle  \phiv_d,\tilde{\Tv}_s\rangle$ for an appropriate choice of vector $\phiv_d$ which picks the locations of $T_{Nk},k=1, \dots,M$ from the vector $\tilde{\Tv}_s$, and $\phiv_n$ is such that $T_0(l)=\langle \phiv_n,\tilde{\Tv}_s\rangle$. The subscript $s$ in $\tilde{\Tv}_s$ and $\tilde{\Am}_s$  can either be  $f$ or $b$ depending on the discretization scheme used. Finally, note that even though heat equation is a linear PDE, the constraints (\ref{eq:optprob}) are nonlinear in general. Thus, overall the optimization problem involves a quadratic cost function with nonlinear equality and linear inequality constraints, and thus a non-convex problem. 

In~\Cref{sec:bilevel} we propose BVQPCO, a bi-level variational quantum framework to solve the above optimization problem. The VQLS is used to solve the linear system constraint (\ref{eq:optprob}) for given parameters $l,\alpha$, and evaluate the ratio $\frac{\langle \phiv_d,\tilde{\Tv}_s\rangle}{\langle \phiv_n,\tilde{\Tv}_s\rangle}$ required for computing the cost function $C_d$, while a classical black box optimizer, e.g., Bayesian Optimization (BO) is used to select next set of parameters $l,\alpha$ values based on the evaluated cost. Since, VQLS itself involves optimization of the ansatz parameters which parameterizes the solution of the linear system, this leads to a sequence of nested bi-level optimization steps within a variational setting.

\section{Methods}
In this section we describe the general BVQPCO framework and provide an overview of VQLS and BO which are its two main building blocks. We also discuss the specific choice of parameters and methods within VQLS/BO for the heat transfer optimization problem discussed in the~\Cref{sec:heatqen}.

\subsection{BVQPCO Framework}\label{sec:bilevel}
Following the structure of the PDE constrained optimization problem discussed in the~\Cref{sec:heatqen}, we now consider more general optimization problems of the form
\begin{eqnarray}
&& \min_{\pv,\psiv}C_d(\pv,\psiv)\equiv f\left(\frac{\langle \phiv_d,\psiv\rangle}{\langle \phiv_n,\psiv\rangle},\pv\right), \label{eq:classopt}\\
\text{subject to,}&& \Am(\pv)\psiv=\bv(\pv), \label{eq:pdecns} \\
&& g_i(\pv)\leq 0,i=1,\cdots,N_e, \label{eq:parcns}
\end{eqnarray}
where, $\pv\in R^{N_p}$ are the design/optimization variables, $\psiv \in\Rr^{n_s}$ is the discretized solution vector of the underlying PDE represented in the form of the linear constraints (\ref{eq:pdecns}), $g_i(\pv)$ are the constraints on $\pv$, and $C_d$ is the cost function which can expressed as a function $f$ involving $\phiv_d,\phiv_n\in \Rr^{n_s}$ as fixed vectors associated with the cost function.

\begin{figure}[!htbp]
\begin{center}
\includegraphics[scale=0.9]{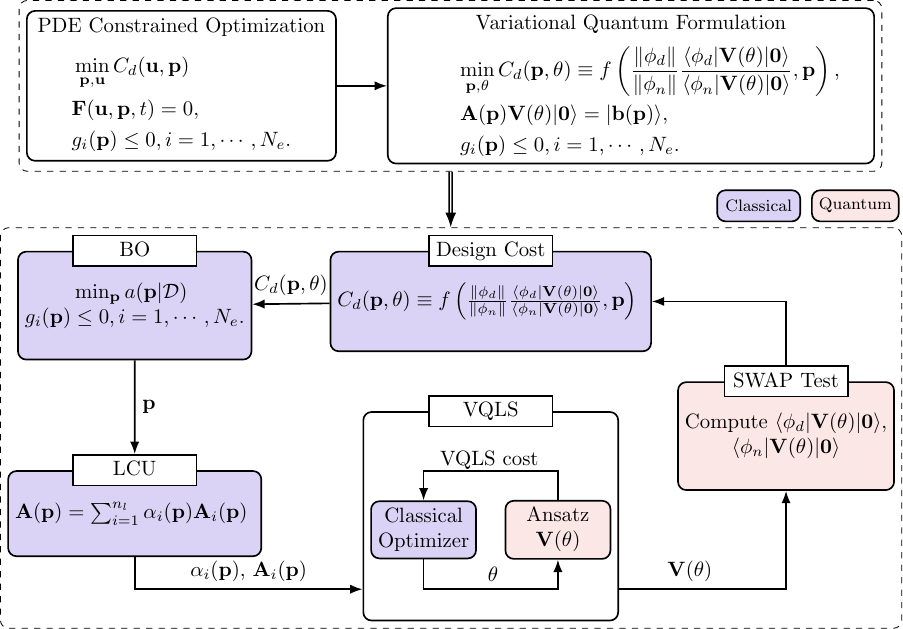}
\caption{Top: Schematic showing transformation of a PDE constrained optimization problem into a variational quantum form. Bottom: Flow diagram of the BVQPCO framework. VQLS uses an inner level optimization to solve the linear system constraints, arising from the discretization of the underlying PDEs, for given design parameters, and evaluate the quantities related to design cost/objective function. A black box optimizer, e.g., BO is used for the outer level optimization to select next set of parameters values based on the evaluated design cost.}
\label{fig:BVQPCO}
\end{center}
\end{figure}

\begin{algorithm}[hbt!]
\begin{algorithmic}[1]
\STATEx Input: $\Am(\pv)$, $\bv(\pv)$, $|\phiv_d\ra$,$|\phiv_n\ra$, $\{g_i(\pv)\}_{i=1}^{N_e}$,$\Vm(\thetav)$,$\gamma$,$n_{sh}$ and $\epsilon$
\STATEx Output: Optimal parameters: $\pv^*$
\STATE Initialize $k=0$ and $\pv^0=\pv_{in}$.
\STATE Determine the unitaries $\Um_d,\Um_n$ which prepare $|\phiv_d\ra$ and $|\phiv_n\ra$, respectively.
\WHILE{stopping criteria not met}
\STATE Compute LCU $\{\alpha_i(\pv^k),\Am_i(\pv^k)\}_{i=1}^{n_k}$ of $\Am(\pv^k)$, and an unitary $\Um_b^{k}$ such that $|\bv(\pv^k)\ra=\Um_b^k|\Zev\ra$.\label{algoline:LCU}
\STATE Compute $\thetav_*^k=VQLS(\{\alpha_i(\pv^k),\Am_i(\pv^k)\}_{i=1}^{n_k},\Um_b^k,\Vm(\thetav),\gamma,n_{sh})$ using the~\Cref{algo:VQLS}. \label{algoline:VQLS}
\STATE Let $|\psiv(\thetav_*^k)\ra=\Vm(\thetav_*^k)|\Zev\ra$. Compute $\la\phiv_d|\psiv(\thetav_*^k)\ra$ and $\la\phiv_n|\psiv(\thetav_*^k)\ra$ using associated quantum circuits, and evaluate the cost $C_d^k=f\left(\frac{\langle \phiv_d|\psiv(\thetav_*^k)\rangle}{\langle \phiv_n|\psiv(\thetav_*^k)\rangle},\pv^k\right)$ . \label{algoline:swap}
\STATE Use classical black box optimizer, e.g. BO~\Cref{algo:BO},  to select next $\pv^{k+1}$ subject to constraints $g_i(\pv^{k+1})\leq 0,i=0,\cdots,N_e$.  \label{algoline:opt}
\STATE Check the stopping criterion, e.g., whether $S(\pv^{k},\pv^{k+1})\leq \epsilon$. \label{algoline:conv}
\STATE $k\leftarrow k+1$
\ENDWHILE
\STATE Return $\pv^k$
\end{algorithmic}
\caption{BVQPCO Algorithm.} \label{algo:bilevel}
\end{algorithm}

To solve this problem in a variational quantum  framework, we propose BVQPCO framework, where the outer optimization level iteratively selects $\pv^k$ using a classical black box optimizer based on the cost function evaluated using solution $|\psiv\ra$ obtained via VQLS for the linear system $\Am(\pv^k)|\psiv\ra=|\bv(\pv^k)\ra$. Thus, the optimization problem (\ref{eq:classopt}) can be expressed in a variational quantum form as
\begin{eqnarray}
&& \min_{\pv,\thetav}C_d(\pv,\thetav)\equiv f\left(\frac{\|\phiv_d\|}{\|\phiv_n\|}\frac{\langle \phiv_d|\Vm(\thetav)|\Zev\ra}{\langle \phiv_n|\Vm(\thetav)|\Zev\ra},\pv\right), \label{eq:classqnt}\\
\text{subject to,} && \Am(\pv)\Vm(\thetav)|\Zev\ra=|\bv(\pv)\ra, \\
&& g_i(\pv)\leq 0,i=1,\cdots,N_e,
\end{eqnarray}
where, $\Vm(\thetav)$ is the VQLS ansatz (see~\Cref{sec:VQLS}) which parametrizes the solution of the linear system (\ref{eq:pdecns}), and $|\phiv_d\ra, |\phiv_n\ra \in R^{n_s}$ define quantum analogues of the cost function parameters $\phiv_d$ and $\phiv_n$, respectively. Note that since,
  \begin{equation}\label{eq:ratio}
    \frac{\langle \phiv_d,\psiv\rangle}{\langle \phiv_n,\psiv\rangle}=\frac{\|\phiv_d\|}{\|\phiv_n\|}\frac{\la \frac{\phiv_d}{\|\phiv_d\|},\frac{\psiv}{\|\psiv\|}\ra}{\langle \frac{\phiv_n}{|\|\phiv_n \|}|\frac{\psiv}{\|\psiv\|}\ra}=\frac{\|\phiv_d\|}{\|\phiv_n\|}\frac{\langle \phiv_d|\psiv\ra}{\langle \phiv_n|\psiv\ra},
\end{equation}
the cost function (\ref{eq:classopt}) can be equivalently expressed in the quantum form (\ref{eq:classqnt}).~\Cref{fig:BVQPCO} shows the overall flow diagram of the BVQPCO framework. The steps to solve (\ref{eq:classqnt})  are described in~\Cref{algo:bilevel}. Some remarks follow:
\begin{itemize}
   \item VQLS in the line (\ref{algoline:VQLS}) is implemented as described in the~\Cref{algo:VQLS}. The inputs include the linear combination of unitaries (LCU) decomposition of $\Am(\pv)$ (see ~\Cref{sec:lcu}), unitary $\Um_b^k$ which prepares $|\bv(\pv^k)\ra$ (see~\Cref{sec:stateprep}),  $\Vm(\thetav)$ is the selected ansatz as discussed above (see~\Cref{sec:ansatz}), $\gamma$ is the stopping threshold (see ~\Cref{sec:costfn}) and $n_{sh}$ is the number of shots used in VQLS cost function evaluation. 
  \item Depending on how $\Am$ depends on the parameters $\pv$, a parameter dependent LCU decomposition $\{\alpha_i(\pv),\Am_i\}$ can be pre computed once, thus saving computational effort. For instance, for the optimization problem (\ref{eq:optprob}), $\Am$ for both the explicit (see~\Cref{eq:Aexp}) and implicit (see~\Cref{eq:Aim}) discretization schemes has a separable structure. Thus, LCU for the matrices $\tilde{\Am}_{f1},\tilde{\Am}_{f2}$ (and similarly $\tilde{\Am}_{b1},\tilde{\Am}_{b2}$) can be computed once, and the LCU of $\Am$ can be readily constructed for any choice of parameters $\pv=(l,\alpha)$.  
   \item For computing, $\la\phiv_d|\psiv(\thetav_*^k)\ra$ and $\la\phiv_n|\psiv(\thetav_*^k)\ra$ in line \ref{algoline:swap}, one can use the circuit associated with SWAP test.
  \item Note that VQLS provides solution of the given linear system only upto a normalization constant. However, due to the assumed structure of the cost function, it is invariant to the normalization constant as shown by the expression~(\ref{eq:ratio}). In fact, there is no loss of generality in the assumed form, i.e., $\frac{\langle \phiv_d,\psiv\rangle}{\langle \phiv_n,\psiv\rangle}$, as one can always choose $\langle \phiv_n,\psiv\rangle$ to depend only the initial condition which is known a priori as part of the problem specification.
  \item For the outer optimization in line (\ref{algoline:opt}) one can use any black box optimization method \cite{alarie2021two}. We propose to use BO (see~\Cref{sec:BO}) which is a sequential design strategy for global optimization of black-box functions that does not require derivatives, is robust to noisy evaluation of the cost function, and uses exploration/exploitation trade off to find optimal solution with minimum number of function calls.

  \item For convergence, line \ref{algoline:conv} uses a step size tolerance, i.e., $S(\pv^{k},\pv^{k+1})=\|\pv^{k+1}-\pv^{k}\|$. However, other conditions can be employed, such as functional tolerance, i.e. the algorithm is terminated when the change $|C_d^{k+1}-C_d^k|$ is within a specified tolerance $\epsilon$.
   
  \item By using VQLS for solving the linear system associated with discretized PDE one can potentially obtain computational advantage over classical methods discussed in the~\Cref{sec:PDEopt}. In the \Cref{sec:erranalysis} we provide a detailed  computational error analysis for VQLS based solution of linear ODEs. Since, there are no theoretical results available for query complexity of VQLS, we employ empirically known results along with our rigorous VQLS error analysis to assess potential advantage of the BVQPCO framework over classical methods.
  \item Note that in the BVQPCO framework one can replace VQLS  by a quantum linear system algorithm (QLSA) \cite{dervovic2018quantum,harrow2009quantum,costa2022optimal}. This could be beneficial since QLSA have provable exponential advantage over classical linear system solvers. However, since QLSA have high quantum resource needs, incorporating QLSA within our proposed framework may only be feasible in a fault tolerant quantum computing setting.    
\end{itemize}

\subsection{Variational Quantum Linear Solver}\label{sec:VQLS}
VQLS is a variational quantum algorithm for solving linear systems of equations on NISQ quantum devices~\cite{VQLS}. Given a linear system of the form
\begin{equation}\label{eq:linsys}
\Am\psiv=\bv,
\end{equation}
where, $\Am\in R^{N\times N}$ and $\psiv,\bv\in R^N$ , the VQLS finds a normalized $|\psi\ra$ to fulfill the relationship
\begin{equation}\label{eq:linsysq}
\Am|\psiv\ra=|\bv\ra,
\end{equation}
where, $|\bv\ra$ denotes the quantum state prepared from the vector $\bv$, and same for $|\psiv\ra$. The inputs to this algorithm are the matrix $\Am$ given as a LCU matrices $\Am_i$ with the coefficients $\alpha_i$
\begin{equation}\label{eq:LCU}
\Am=\sum_{i=1}^{n_l}\alpha_i \Am_i,
\end{equation}
and, a short-depth quantum circuit $\Um_b$ for state preparation $|\bv\ra=\Um_b|\Zev\ra$. Then a cost function $C(\thetav)$ is constructed and evaluated with a devised parameterized ansatz $\Vm(\thetav)$. Through the hybrid quantum-classical optimization loop, the optimal parameters $\thetav$ for the ansatz circuit $\Vm(\thetav)$ can be found such that the cost function $C(\thetav)$ achieves the selected convergence criterion. At the end of this feedback loop, the ansatz $\Vm(\thetav_*)$ prepares the quantum state $|\psiv_*\ra=\Vm(\thetav_*)|\Zev\ra$ which is proportional to the solution $\psiv$ of the linear system (\ref{eq:linsys}).

While no theoretical analysis is available, VQLS is empirically found to scale $O((\log N)^{8.5}\kappa\log(1/\epsilon))$ (i.e., polylogarithmically in matrix size $N$), where $\kappa$ is the condition number of $\Am$, and $\epsilon$ is the desired error tolerance in the solution. Below we summarize comparison of VQLS with classical linear solvers~\cite{patil2022variational}:
\begin{itemize}
 \item Iterative classical methods like conjugate gradient (CG) scale as $O(N s \kappa\log(1/\epsilon))$. VQLS is expected to show advantage over CG for large system sizes, $N >2\times 10^{12} $ ($> 41$ qubits) with sparsity $s = 0.5$, precision $\epsilon= 0.01$ and condition number $\kappa = 1$  However, unlike VQLS the CG method is only applicable for positive definite matrices.
 \item For classical iterative solvers like GMRES (Generalized Minimum Residual Method) which is more general and scales as $O(N^2)$, VQLS would perform better for matrix sizes larger than $ N> 2\times 10^5$ (corresponding to $> 17$ qubits).
\item When we compare VQLS to a direct solver like Gaussian Elimination with $O(N^3)$ complexity we can expect VQLS to show advantage for $N > 750$ ($>9$ qubits).
\end{itemize}

Mainly, VQLS provides an advantage for applications which require only a small part of the full solution vector or only the expectation value of a linear operator acting on the solution vector. Finally, note that like other variational methods, VQLS is plagued by the barren plateau issue which can impose significant computational challenge. 

\subsubsection{LCU Decomposition}\label{sec:lcu}
In general, the matrix $\Am$ is a non-unitary operator. In order to solve the linear system using VQLS, the matrix is decomposed in terms of unitary operators. One popular approach is based on Pauli basis formed from the identity $\In$ and the Pauli gates $\Xm$, $\Ym$ and $\Zm$. A matrix $\Am$ with the size $2^n \times 2^n$ can be written as a linear combination of elements selected from the basis set
\begin{equation*}
P_n=\{\Pm_1\otimes \Pm_2 \dots \otimes \Pm_n:\Pm_i\in\{\In,\Xm,\Ym,\Zm\}\}.
\end{equation*}
with complex coefficients $\alpha_i$ as given in~\Cref{eq:LCU}. In this basis, each $\Am_i\in P_n$ and its corresponding coefficient $\alpha_i$ can be determined via different approaches~\cite{lcu}. For instance, it can be shown that $\alpha_i=Tr(\Am_i\cdot \Am)/2^n$. This matrix multiplication based approach can be computationally very expensive. A more efficient approach based on matrix slicing was proposed in~\cite{lcu} and we adapt it in this work.  For the matrix in \Cref{eq:Aim}, the number of terms scale almost linearly with the matrix size. The decomposition has 14 terms for a $8\times 8$ matrix, 26 terms for $16 \times 16$ matrix and 54 terms for $32 \times 32$ matrix. 

\subsubsection{State Preparation}\label{sec:stateprep}
A normalized complex vector of a quantum state $|\bv\ra$ is prepared by applying a unitary operation
$\Um_b$ to the ground state $|\Zev \ra$, i.e.
\begin{equation*}
|\bv\ra=\Um_b|\Zev\ra.
\end{equation*}
General purpose methods exist to compute the unitary operator $\Um_b$~\cite{shende2005synthesis,mottonen2004transformation}. However, such methods may not always be efficient and can lead to unnecessarily deep circuits which in turn can affect the performance. Instead, we hand-craft circuits to encode states of form shown in ~\Cref{eq:linsysexp} using only controlled rotations.  ~\Cref{fig:stateprep}(a),(b) show the quantum circuits for the state on $3$ and $4$ qubits. The rotation angles are computed using user-defined values of the parameters in the state $\tilde{\bv}$. In contrast, the quantum circuit for the same state constructed using Mottonen state preparation method is deeper as shown in Fig.~ \ref{fig:stateprep} (c).

\begin{figure}
\centering
    \begin{subfigure}[t]{0.5\textwidth}
        \centering
        \includegraphics[scale=0.4]{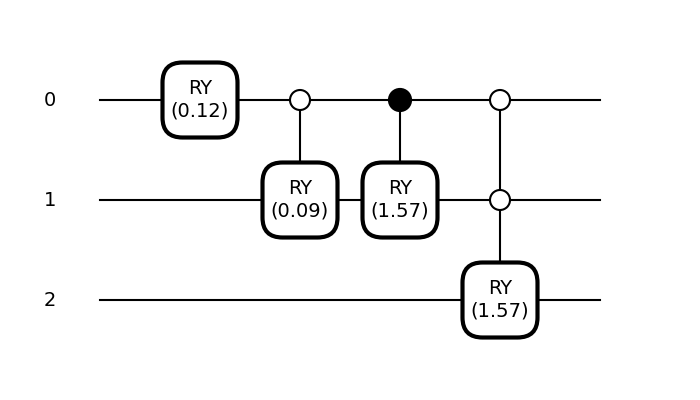}
        \caption{3 qubits, our state preparation method.}
    \end{subfigure}%
    ~ 
    \begin{subfigure}[t]{0.5\textwidth}
        \centering
        \includegraphics[scale=0.4]{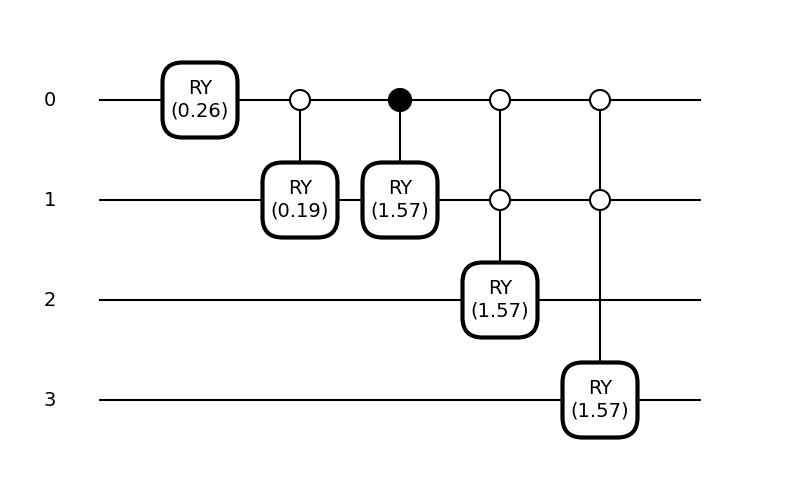}
        \caption{4 qubits, our state preparation method.}
    \end{subfigure}
    
    \begin{subfigure}[t]{1.0\textwidth}
        \centering
        \includegraphics[scale=0.21]{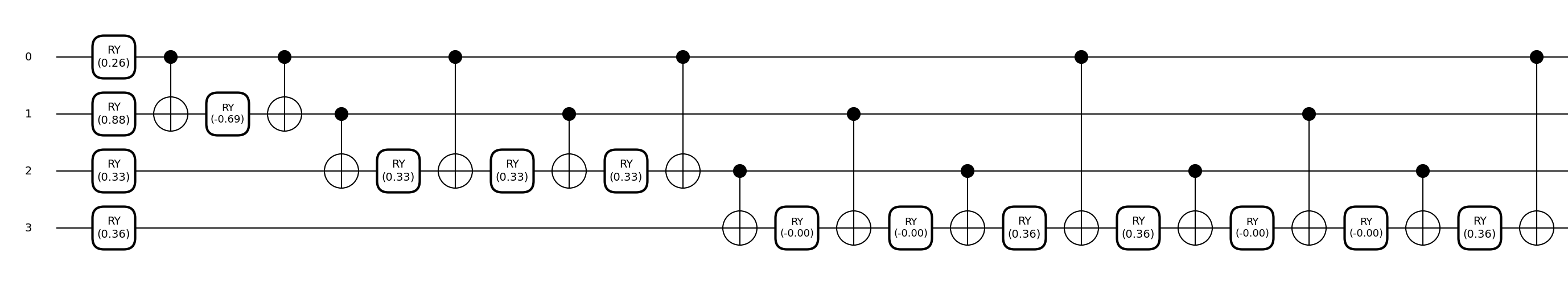}
        \caption{4 qubits, Mottonen state preparation}
    \end{subfigure}
    \caption{Quantum circuit to prepare state $\Tilde{\bv}$ in~\Cref{eq:linsysexp} with  $T_0(x)= 2\big(\frac{\Delta t}{k\Delta x}q^1 \big), q^i = q^1 = 50 \forall i, k=1.0, \Delta t=0.25, \Delta x=0.2$ (a) 3 qubits, $n_x = 2, n_t=4$, and (b) 4 qubits  $n_x = 4, n_t=4$. (c) Mottonen state preparation using built-in PennyLane implementation for 4 qubits, $n_x = 4, n_t=4$ case.}
    \label{fig:stateprep}
\end{figure}

\subsubsection{Ansatz Selection}\label{sec:ansatz}
The VQLS employs a parameterized quantum circuit (PQC) or an ansatz which is a gate sequence $\Vm(\thetav)$ which simulates a potential solution $|\psiv(\thetav)\ra=\Vm(\thetav)|\Zev\ra$. There are variety of choices for ansatz which broadly fall into the categories of fixed-structure ansatz, problem-specific ansatz, and dynamic/adaptive ansatz~\cite{VQLS,patil2022variational}. One popular choice of fixed-structure ansatz is the hardware-efficient ansatz (HEA). HEA are designed without taking into account the specific problem being solved, but rather only the topology (backend connectivity of the qubits) and available gates of a specific quantum computer. While these ansatz can be constructed to be more resistant to noise on any specific available quantum device,, they may not be sufficiently expressive and suffer from barren plateau resulting in trainability issue for large scale problems. Problem-specific ansatz does not take into account the specific quantum device being used, and rather tries to exploit the knowledge of the problem available. The Quantum Alternating Operator Ansatz (QAOA)~\cite{bp_an1} is one such proposed problem-specific ansatz, using two Hamiltonians, known as the driver and the mixer, constructed from specific knowledge of the problem. The requirement of Hamiltonian simulation from the QAOA makes these anatz far less near-term. In variable structure or dynamic ansatz one optimizes over both the gate parameters $\thetav$ and the gate placement in the circuit~\cite{patil2022variational}. 

For real valued problems, such as the example heat transfer optimization problem in~\Cref{sec:heatqen}, the quantum gates used in the ansatz can be restricted appropriately. In this work, we use the fixed structure real-valued ansatz shown in \Cref{fig:ansatz}. This is the modified circuit 9 from~\cite{sim2019expressibility} that was the least expressible with a high KL divergence value of 0.68. 

\begin{figure}[htbp]
\centering
\includegraphics[scale=0.4]{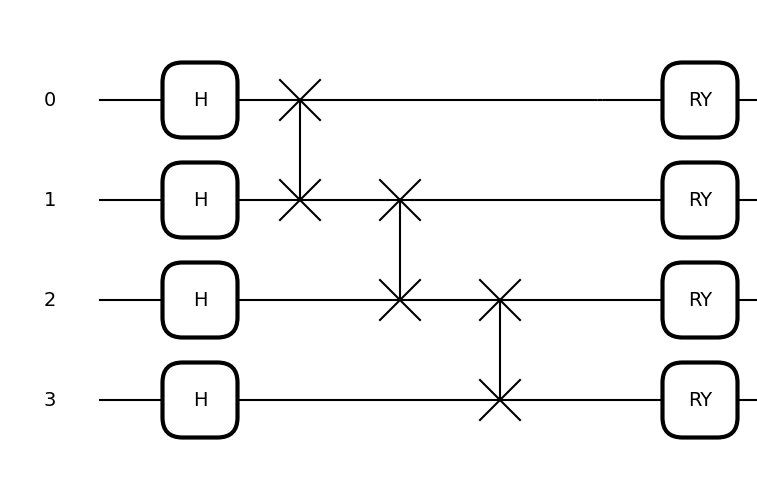}
\caption{Real-valued ansatz (modified circuit 9) on 4 qubits comprising of Hadamard gate (H), SWAP gates and rotation about y-axis gates (RY).}
\label{fig:ansatz}
\end{figure}

\subsubsection{Cost Function}\label{sec:costfn}
The VQLS algorithm aims to minimize the cost function $C_{ug}(\thetav)$  (or its variant as discussed below)
\begin{equation}\label{eq:costun}
C_{ug}(\thetav)=Tr(|\phiv\ra\la\phiv|(\In-|\bv\ra\la\bv|))=\la \psiv|\mathbf{H}_{g}|\psiv\ra,
\end{equation}
where, $|\phiv\ra=\Am|\psiv(\thetav)\ra $ and
\begin{equation}\label{eq:Hg}
\mathbf{H}_{g}=\Am^*\Um_b(\In-|\Zev\ra\la\Zev|)(\Um_b)^*\Am.
\end{equation}
$C_{ug}(\thetav)$ is large when $|\phiv\ra$ is close to being orthogonal to $|\bv\ra$.  Conversely, when $|\phiv\ra$ is nearly proportional to $|\bv\ra$ or small in magnitude,  $C_{ug}(\thetav)$ is small. The latter case does not represent a true solution. To deal with this, the cost function can be normalized as
\begin{equation}\label{eq:costn}
C_{g}(\thetav)=\frac{C_{ug}}{\la\phiv|\phiv\ra}=1-\frac{|\la\bv|\phiv\ra|^2}{\la\phiv|\phiv\ra}.
\end{equation}
Thus, $C_{g}$ can be calculated by computing $\la\bv|\phiv\ra$ and $\la\phiv|\phiv\ra$. Given the LCU for the matrix $\Am$ and the ansatz  $V(\thetav)$, this can be accomplished by noting that
\begin{eqnarray}
  \la\phiv|\phiv\ra &=& \sum_{i=1}^{n_l}\sum_{j=1}^{n_l} \beta_{ij}\alpha_i\alpha_j^*,\quad \beta_{ij}=\la \Zev|\Vm^*\Am_j^*\Am_i\Vm|\Zev\ra, \\
  |\la\bv|\phiv\ra|^2  &=& \sum_{i=1}^{n_l}\sum_{j=1}^{n_l}\gamma_{ij}\alpha_i\alpha_j^*,\quad \gamma_{ij}=\la \Zev|(\Um^b)^*\Am_i\Vm|\Zev\ra \la \Zev|\Vm^*\Am_j^*\Um^b|\Zev\ra.
\end{eqnarray}
There are $n_l(n_l-1)/2$ different $\beta_{ij}$ terms which can be estimated using the Hadamard test.

For calculating the terms $\gamma_{ij}$,  when $i\neq j$, one can compute $\gamma_{ij}$ by applying Hadamard test to estimate $\la\Zev|(\Um_b)^*\Am_i\Vm|\Zev\ra$ and $\la \Zev|\Vm^*\Am_j^*\Um_b|\Zev\ra$. However, this requires controlling all the unitaries, i.e.,  $((\Um_b)^*,\Am_i,\Vm)$. This might be experimentally challenging, especially when the ansatz $\Vm$ consists of many layers. Instead, Hadamard-Overlap test can be used that directly computes $\gamma_{ij}$ without having to control $\Vm$ or $\Um_b$ but at the expense of doubling the number of qubits~\cite{VQLS}. The terms $\gamma_{ii}$ can be  estimated by applying $(\Um_b)^*\Am_i\Vm$ to $|\Zev\ra$ , and then measuring the probability of the all-zeros outcome.

The global cost functions discussed above can exhibit barren plateaus, i.e., cost function gradients vanish exponentially as a function of number of qubits $n$. To improve trainability for large $n$ and mitigate the barren plateau issue, the use of local cost function has been proposed. Such unnormalized and normalized local cost functions can be defined as
\begin{equation}\label{eq:costunloc}
C_{ul}(\thetav)=\la \psiv|\mathbf{H}_{l}|\psiv\ra,\quad C_{l}(\thetav)=\frac{C_{ul}}{\la\phiv|\phiv\ra},
\end{equation}
respectively, where the effective Hamiltonian $\mathbf{H}_l$ is
\begin{equation}
\mathbf{H}_l=\Am^*\Um_b\left(\In-\frac{1}{n}\sum_{i=1}^{n}|0_k\ra\la 0_k|\otimes \In_{\tilde{k}}\right)(\Um_b)^*\Am, \label{eq:Hl}
\end{equation}
with $|0_k\ra$ being a zero state on qubit $k$, and $\In_{\tilde{k}}$ being identity on all qubits except the qubit $k$. Computation of $C_{l}$ can be accomplished via a slight variation of Hadamard test circuit, see~\cite{VQLS} for the details.

Following bounds hold in general for the different cost functions discussed above:
\begin{equation*}
C_{ul}\leq C_{ug}\leq n C_{ul}, \quad C_{l}\leq C_{g}\leq n C_{l},
\end{equation*}
where, $N=2^n$. Consequently, all cost functions vanish under precisely the same condition, namely $|\phiv\ra\propto |\bv\ra$ and thus consistent with each other. Furthermore, assuming $\|\Am\|\leq 1$ the cost functions satisfy (see~\Cref{lem:vqlserr})
\begin{equation}\label{eq:errbounds1}
C_{ug}\geq \frac{\epsilon^2}{\kappa^2} \quad,C_{g}\geq \frac{\epsilon^2}{\kappa^2} \quad,   C_{ul} \geq \frac{\epsilon^2}{n\kappa^2}, \quad C_{l}\ge\frac{\epsilon^2}{n\kappa^2},
\end{equation}
where, $\epsilon=\rho(\psi,\psi(\thetav_*))$ is the error in the normalized solution with $|\psi\ra$ being the exact solution of~\Cref{eq:linsysq}, and  $|\psi(\thetav_*)\ra$ being the approximate VQLS solution. These relations can be used in convergence criterion as discussed later.

The global cost function is used for the numerical studies in~\Cref{sec:num}.  Typically, the evaluation of the global cost function is faster than the local cost function as it requires fewer Hadamard/ Hadamard Overlap tests.  We found that global cost function performed satisfactorily for the system sizes studied in the work. 

\subsubsection{Classical Optimizers}\label{sec:opt}
There are many different classical optimizers available, both gradient-free and gradient-based, that can be used in VQLS for optimization of the ansatz parameters $\thetav$\cite{pellow2021comparison}. Gradient-based methods require gradient information during the optimization process, which can either be obtained analytically by a quantum gradient function, or estimated by means of repeated cost function evaluations. Examples include, limited-memory Broyden–Fletcher–Goldfarb–Shanno (L-BFGS), adaptive gradient algorithm (Adagrad), ADAM and AMSGrad. Gradient-free methods operate as black-box optimizers, which require no extra information besides the values of the cost function evaluations.  Examples in this category include, constrained optimization by linear approximation (COBLYA), Nelder-Mead, Modified Powell method, and simultaneous perturbation stochastic approximation (SPSA).  We experimented with Adagrad and COBLYA in our numerical studies, see Section \ref{sec:num} for the details.

\subsubsection{VQLS Algorithm}
The steps for VQLS are summarized in~\Cref{algo:VQLS}. To initialize the VQLS iterations (see line (\ref{algo:vqlsiniti})) the parameter vector $\thetav_{in}$ needs to be prescribed. Randomly choosing these parameters with highly likelihood will cause VQLS to start in the region of barren plateau which is not desirable. To overcome these, several other methods for initialization can be employed, see for example~\cite{kulshrestha2022beinit,grant2019initialization}.  From the relations (\ref{eq:errbounds1}), one can select threshold $\gamma$ such that, for instance, for $C_{g}\leq \gamma$, results in approximation error $\epsilon\leq \kappa \sqrt{\gamma \log N}$.  The number of shots $n_{sh}$ determines the number of times the quantum circuit is run to estimate each of the term (i.e., $\beta_{ij},\gamma_{ij}$, see the~\Cref{sec:costfn}) in the cost evaluation. Finally, any of the classical optimizers discussed in the previous section can be used in the Step \ref{algo:vqlsopt}.

\begin{algorithm}[hbt!]
\begin{algorithmic}[1]
\STATEx Input: LCU $\{\alpha_i,\Am_i\}_{i=1}^{n_l}$, unitary $\Um_b$ to prepare $|\bv\ra$, ansatz $\Vm(\thetav)$, convergence threshold $\gamma$ and number of shots $n_{sh}$.
\STATEx Output: Optimal parameters: $\thetav_*$.
\STATE Initialize $k=0$, $\thetav^0=\thetav_{in}$ and $c=\infty$. \label{algo:vqlsiniti}
\WHILE{$c\geq \gamma$}
\STATE Compute cost $c=C_g(\thetav^{k})$ using associated quantum circuits with $n_{sh}$ number of shots.
\STATE Use classical optimizer to select next $\thetav^{k+1}$.\label{algo:vqlsopt}
\STATE $k\leftarrow k+1$.
\ENDWHILE
\STATE Return $\thetav_{k}$
\end{algorithmic}
\caption{Outline of VQLS algorithm.} \label{algo:VQLS}
\end{algorithm}

\subsection{Bayesian Optimization}\label{sec:BO}
Bayesian optimization (BO) is an approach for global optimization of black box objective functions that take a long time (minutes or hours) to evaluate~\cite{shahriari2015taking,frazier2018tutorial}.  Specifically, we consider the following problem:
\begin{eqnarray}
&& \min_{\po}\fo(\po)\notag \\
\text{subject to,}&& \gv(\po)\leq 0, \label{eq:Boopt}
\end{eqnarray}
where, $\po\in \Rr^{N_p}$ is a parameter vector, $\fo:\Rr^{N_p}\rightarrow \Rr$ is a scalar objective function, and $\gv:\Rr^{N_p}\rightarrow \Rr^{N_e}$ is the set of $N_e$ constraints. The functional form of $\fo$ is unknown but we can query a  black box function at any point $\po$ and evaluate a noisy value of $\fo(\po)$. The functional form of the constraints $\gv$ is assumed to be known.

BO builds a surrogate model for the objective function $\fo$ and quantities the uncertainty in that surrogate using a Bayesian machine learning technique. It then uses an acquisition function (AF) defined based on this surrogate to decide where to sample next. This process is repeated till selected convergence criteria are met.

\subsubsection{Surrogate Model}
The most popular and effective BO relies on a surrogate model in the form of a Gaussian process (GP) due to its flexibility to represent a prior over function. GP is characterized by a prior mean function $\mu_{\bm}(\cdot)$  and a covariance function $\kappa_{\bk}(\cdot,\cdot)$, where $\bm,\bk$ are the hyperparameters. Consider a finite collection of data pairs $\mathcal{D}=(\Po,\yv)$ of the unknown function $y=f(\po)+\epsv$ with noise $\epsv\sim \mathcal{N}(0,\sigma^2_{\epsilon}(\po))$, where $\Po=[\po_1 \mbox{ } \po_2  \mbox{ }  \cdots  \mbox{ } \po_M]$ are the inputs with corresponding noisy outputs  $\yv=(y_1,\cdots,y_M)^{\tra}$ forming the $M$ training samples. The GP model assumes that the observed data are drawn from a multivariate Gaussian distribution $\mathcal{N}$, and for a new data point $\po$, the joint distribution of the observed outputs $\yv$ and the predicted output $y$ are:
\begin{equation*}
\left(
  \begin{array}{c}
    y \\
    \yv \\
  \end{array}
\right)\sim \mathcal{N}\left(\cdot,\left(\begin{array}{c}\mu_{\bm}(\po)\\
                                                  \mu_{\bm}(\Po)\end{array}\right),\left[ \begin{array}{cc}
                                       \kappa_{\bk}(\po,\po)  & K^{\tra}_{\bk}(\Po,\po) \\
                                        K_{\bk}(\Po,\po) & K_{\bk}(\Po,\Po)+\Sigma_{\epsilon}
                                      \end{array}
\right]\right),
\end{equation*}
where,
\begin{itemize}
  \item $\mu_{\bm}(\Po)=[\mu_{\bm}(\po_1) \mbox{ } \mu_{\bm}(\po_2) \cdots \mu_{\bm}(\po_M)]^{\tra}$ in a $M\times 1$ vector,
  \item $K_{\bk}(\Po,\Po)=[\kappa_{\bk}(\po_i,\po_j)]$ is a $M\times M$ correlation matrix evaluated at each pair of the training points,
  \item $K_{\bk}(\Po,\po)=[\kappa_{\bk}(\po_1,\po) \mbox{ } \kappa_{\bk}(\po_2,\po) \cdots \kappa_{\bk}(\po_M,\po)]^{\tra}$ denotes a $M \times 1$ correlation vector evaluated at all pairs of training and test point,
  \item and $\Sigma_{\epsilon}=\mbox{diag}(\sigma^2(\po_1),\cdots,\sigma^2(\po_M))$ is a diagonal matrix.
\end{itemize}
The conditional distribution $p(y|\po,\Po,\yv)\sim \mathcal{N}(\mu(\po),\sigma^2(\po))$ is then a multivariate Gaussian distribution, where the mean and variance of the predicted output $y$ are given by
\begin{align}
  \mu(\po) &=\mu_{\beta_m}(\po)+ K_{\bk}(\po,\Po)(K_{\bk}(\Po,\Po)+\Sigma_{\epsilon})^{-1}(\yv-\mu_{\beta_m}(\Po)), \notag \\
  \sigma^2(\po) &= \kappa_{\bk}(\po,\po)-K_{\bk}^\prime(\Po,\po)(K_{\bk}(\Po,\Po)+\Sigma_{\epsilon})^{-1} K_{\bk}(\Po,\po), \label{eq:boup}
\end{align}
respectively. Commonly, the mean function $\mu_{\bm}(\cdot)$ is taken to be a zero function, and the kernel function is selected as the squared exponential kernel or the Matáern kernel with hyperparameters, such as length scale, signal variance, and noise variance. We will denote by $\bo$ all the hyperparameters involved. Given the data $\mathcal{D}$, the optimal hyperparameters $\bo^*$ are inferred using MLE techniques such as by maximizing the log marginal likelihood. Alternatively, MAP and full Bayesian approaches can also be applied though they tend to be computationally more expensive~\cite{williams2006gaussian}.

\subsubsection{Acquisition Function}
Given the surrogate, AFs are the utility functions $\aq(\po|\mathcal{D})$ that guide the search to reach the optimum of the objective function by identifying where to sample next. The guiding principle behind AFs is to strike a balance between exploration and exploitation, which is achieved by querying samples from both known high-fitness-value regions and regions that have not been sufficiently explored so far. Also AF should be computable readily and easy to optimize. Note that any constraints on $\po$ can be incorporated while optimizing the AF, i.e.
\begin{eqnarray}
&& \min_{\po}\aq(\po|\mathcal{D})\notag\\
\text{subject to,} && \gv(\po)\leq 0. \label{eq:aqopt}
\end{eqnarray}
Thus, each selected point $\po_{M+1}$ via above optimization is feasible.

Some popular choices of AF include probability of improvement, expected improvement (EI), knowledge gradient, and information theoretic measures, see~\cite{frazier2018tutorial}. We will use EI and its noisy extension for our application. Given the data $\mathcal{D}$, for noiseless case, i.e. $\epsilon=0$, EI is defined as
\begin{equation*}
\aq_{EI}(\po|\mathcal{D})=E_{y\sim \mathcal{N}(\cdot, \mu(\po),\sigma^2(\po))}[\max(0,\fo^*-y)],
\end{equation*}
where, $\fo^*=\min\{y_i|y_i\in \mathcal{D}\}$ and $\mu(\po)$ and $\sigma(\po)$ are as defined in~\Cref{eq:boup} with $\Sigma_{\epsilon}=0$. This expectation has a closed form expression
\begin{equation}
\aq_{EI}(\po|\mathcal{D})=\sigma(\po)z\Phi(z)+\sigma(\po)\phi(z), \label{eq:EI}
\end{equation}
where, $z=\frac{\fo^*-\mu(\po)}{\sigma(\po)}$, and $\Phi$ and $\phi$ are the cumulative density function and the density function of the standard Normal distribution, respectively. Thus, EI is easy to implement and optimize.

For noisy case $\epsilon\neq 0$, $f^*$ cannot be evaluated exactly since only noisy values of $\fo_i$ are available. There are various approaches to handle this case. We use the formulation from~\cite{letham2019constrained}, where EI is replaced by expected EI, defined as:
\begin{equation*}
\aq_{NEI}(\po|\mathcal{D})=\int_{\fov}\aq_{EI}(\po|\mathcal{D}_{\fov})p(\fov|\mathcal{D})d\fov,
\end{equation*}
where, $\fov=(f_1,\cdots,f_M)^{\tra}$,  $p(\cdot|\mathcal{D})=\mathcal{N}(\cdot,\mu_{\bm}(\Po),K_{\bk}(\Po,\Po)+\Sigma_{\epsilon})$ is multivariate Gaussian distribution based on noisy data $\mathcal{D}$, and $\aq_{EI}(\po|\mathcal{D}_{\fov})$ is evaluated based on a noiseless GP model fitted to data $\mathcal{D}_{\fov}=\{(\po_i,f_i)\}$.

This function does not have an analytic expression, but can be estimated using Monte Carlo methods. Let $K_{\bk}(\Po,\Po)+\Sigma_{\epsilon}=\Bm\Bm^\tra$, e.g., obtained via Cholskey decomposition. Then
\begin{equation*}
\aq_{NEI}(\po|\mathcal{D})\approx\frac{1}{N_m}\sum_{1}^{N_m}\aq_{EI}(\po|\mathcal{D}_{\fov_i}),
\end{equation*}
where, $\fov_i=\Bm\Phi^{-1}(\sv_i)+\mu_{\bm}(\Po)$ with $\sv_i\in [0,1]^{N_p}$ is a random sample in unit cube of dimension $N_p$. For sampling $\sv_i$ one can use Monte Carlo or Quasi Monte Carlo (QMC) techniques, with latter being preferred as it gives more uniform coverage of the samples. We used a variation of this technique known as the qNoisy EI ~\cite{balandat2020botorch}, where the samples are taken from the joint posterior over $q$ test points and the previously observed points.

\subsubsection{BO Algorithm}
Overall steps of BO algorithm are described in~\Cref{algo:BO}. In Step~\ref{bo:init}, $N_o$ QMC samples of input parameters are chosen and the cost function is evaluated $N_s$ times for each sample, to determine the mean and variance of the cost function.  In Step~\ref{bo:MLE} MLE is applied to determine GP hyperparameters $\beta$, and the GP model is updated in Step~\ref{bo:up}. The AF is optimized in Step~\ref{bo:afopt} to determine the next sampling point. In Step \ref{bo:ev} the black box function is queried $N_s$ times to evaluate mean/variance of the cost function at the new sampling point. Finally, if the convergence criteria are not met in Step~\ref{bo:conv}, the iterations repeat from Step~\ref{bo:MLE}. The convergence criteria can be based on stepsize or functional tolerance, similar to as discussed for the VQLS. Finally, note that if the error in evaluating the cost function is negligible, i.e. $\epsilon\approx 0$, there is no need to compute variance in cost function evaluation. Therefore, $N_s$ can be taken to be a single sample, $\Sigma_{\epsilon}$ is set to be zero in the BO update (\ref{eq:boup}), and the noiseless EI formula (\ref{eq:EI}) can be used for the AF optimization in Step~\ref{bo:afopt}. 

\begin{algorithm}[hbt!]
\begin{algorithmic}[1]
\STATEx Input: GP mean/kernel functions and associated parameters $\beta$, black box function $\fo$, number of initial samples $N_o$, number of noisy function evaluation $N_s$, convergence criteria.
\STATEx Output: Optimal parameters: $\po_*$
\STATE Randomly select a set of QMC points $\po_i,i=1,\cdots,N_o$. Let $y_{ij},j=1\cdots,N_s$ be samples of noisy values of $f(\po_i)$ evaluated at $\po_i$. Let
\begin{eqnarray*}
  \overline{y}_i &=& \frac{1}{N_s}\sum_{j-1}^{N_s}y_{ij}, \quad \Delta_i = \frac{1}{N_s}\sum_{j-1}^{N_s} (y_{ij}-\overline{y}_i)^2,
\end{eqnarray*}
be the estimated mean and variance at $\po_i$, respectively. Let $\mathcal{D}=\{(\po_i,\overline{y}_i,\Delta_i)\}$.\label{bo:init}
\STATE Given $\mathcal{D}$ use MLE to infer the prior hyperparameters $\beta$. \label{bo:MLE}
\STATE Update the GP posterior mean and covariance using~\Cref{eq:boup}, with $\Sigma_{\epsilon}=\mbox{diag}(\Delta_1,\cdots,\Delta_{N_o})$. \label{bo:up}
\STATE Optimize the acquisition function $\aq_{NEI}(\po|\mathcal{D})$ as per (\ref{eq:aqopt}) and determine next sampling point $\po_{n}$. \label{bo:afopt}
\STATE Compute $\overline{y}_n$ and associated $\Delta_n$ at $\po_n$. Augment the data $\mathcal{D}\leftarrow \mathcal{D}\bigcup \{\po_n,\overline{y}_n,\Delta_n\}$. \label{bo:ev}
\STATE If convergence criterion not met, go to the Step \ref{bo:MLE}. \label{bo:conv}
\end{algorithmic}
\caption{Outline of BO Algorithm.} \label{algo:BO}
\end{algorithm}

\section{Computational Error and Complexity Analysis}
\label{sec:erranalysis}

In this section we present a detailed complexity analysis for VQLS based solution of linear ODEs under explicit and implicit Euler discretization schemes, and use that to assess potential advantage of BVQPCO framework over classical techniques.  
Consider the initial value problem (IVP) for an inhomogeneous system of linear ODEs
\begin{eqnarray}
\dot{\uv}&=&\Am \uv+\bv(t), \label{eq:qdc}\\
\uv(0)&=&\uv_0\in\Rr^\nx,\notag
\end{eqnarray}
where, $\Am\in \Rr^{\nx\times\nx}$, $\bv(t)\in\Rr^{\nx}$, $\uv(t)\in \Rr^\nx$ and $t\in[0,T]$. The above ODE for instance could arise from spatial discretization of a linear PDE, as illustrated in~\Cref{sec:heatqen}. We shall denote by $\uv_c$ as the vector of solution 
\begin{equation}
\uv_c=\left(\begin{array}{cc}
     \uv(0)  \\
     \uv(h)   \\
     \vdots \\
     \uv(h(M-1))
\end{array}\right),\label{eq:uc}
\end{equation}
sampled at $t_k=(k-1)h,k=1,\cdots,M$ for a given sampling step size $h>0$ and an integer $M>0$.

\begin{assumption}\label{assumption1}
For system (\ref{eq:qdc}), we assume
\begin{itemize}
  \item A1: $\Am$ is diagonisable with eigenvalues $\lambda_i,i=1,\cdots,\nx$ satisfying
  \begin{equation}\label{eq:lam}
    Re(\lambda_i)< 0.
  \end{equation}
  Furthermore, if $\Am$ has real eigenvalues, the assumption can be relaxed to $\lambda_i \leq 0$.
  \item A2: $\|\bv(t)\|=\sup_{[0,T]}\|\bv(t)\|_2$ and $\|\dot{\bv}(t)\|=\sup_{[0,T]}\|\dot{\bv}(t)\|_2$ are bounded.
\end{itemize}
\end{assumption}
We will denote by $\rho(\Am)$ as the spectral radius of $\Am$, and let $\nu=\|\Am\|_2$ be its spectral norm.

Before providing detailed error analysis for VQLS based solution of this IVP, we present two lemmas which will be necessary for the proofs that follow.

\begin{lemma}\label{lem:vqlserr}
For the linear system (\ref{eq:linsys}), following bounds hold for the different VQLS cost functions
\begin{equation}\label{eq:errbounds}
C_{ug}\geq \frac{\epsilon^2}{\kappa^2} \quad,C_{g}\geq \frac{\epsilon^2}{\kappa^2\|\Am\|} \quad,   C_{ul} \geq \frac{\epsilon^2}{n\kappa^2}, \quad C_{l}\ge\frac{\epsilon^2}{n\kappa^2\|\Am\|},
\end{equation}
where, $\kappa$ and $N=2^n$ are the condition number and size of $\Am$ in (\ref{eq:linsys}) respectively, and $\epsilon$ is the error tolerance 
\begin{equation*}
\epsilon=\rho(\psi,\psi(\thetav^*)),
\end{equation*}
with $\rho$ being the trace norm,  $|\psi\ra$ being the exact solution of~\Cref{eq:linsysq}, and $|\psi(\thetav_*)\ra$ being the approximate VQLS solution.
\end{lemma}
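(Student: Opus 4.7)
The plan is to prove the four bounds by first establishing the inequality for $C_{ug}$ and then propagating it to $C_g$ via the normalization, and to $C_{ul},C_l$ via the sandwich inequalities $C_{ul}\leq C_{ug}\leq nC_{ul}$ and $C_l\leq C_g\leq nC_l$ stated just before the lemma. Let $|\psiv\ra$ denote the normalized exact solution of the linear system, so that $\Am|\psiv\ra=\mu|\bv\ra$ with $\mu=1/\|\Am^{-1}|\bv\ra\|$. Since the projector $\Pm=\In-|\bv\ra\la\bv|$ annihilates $|\bv\ra$, the effective Hamiltonian~\Cref{eq:Hg} satisfies $\Hm_g|\psiv\ra=0$; the exact solution is a zero-eigenvector of $\Hm_g$. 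Decomposing the VQLS output as $|\psiv(\thetav_*)\ra=\alpha|\psiv\ra+\beta|\psiv^\perp\ra$ with $|\psiv^\perp\ra\perp|\psiv\ra$ of unit norm, the pure-state trace-distance identity yields $|\beta|^2=1-|\la\psiv|\psiv(\thetav_*)\ra|^2=\epsilon^2$.

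The relation $\Hm_g|\psiv\ra=0$ kills the cross terms, so
\begin{equation*}
C_{ug}=\la\psiv(\thetav_*)|\Hm_g|\psiv(\thetav_*)\ra=\epsilon^2\la\psiv^\perp|\Hm_g|\psiv^\perp\ra=\epsilon^2\,\|\Pm\Am|\psiv^\perp\ra\|^2,
\end{equation*}
and the technical core is the uniform lower bound $\|\Pm\Am|\psiv^\perp\ra\|^2\geq 1/\|\Am^{-1}\|^2$ for every unit $|\psiv^\perp\ra\perp|\psiv\ra$. To derive it I would decompose $\Am|\psiv^\perp\ra=c|\bv\ra+|\mathbf{w}\ra$ with $\la\bv|\mathbf{w}\ra=0$, so that $\|\Pm\Am|\psiv^\perp\ra\|^2=\|\mathbf{w}\|^2$. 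Applying $\Am^{-1}$ and using $\Am^{-1}|\bv\ra=|\psiv\ra/\mu$ gives $|\psiv^\perp\ra=(c/\mu)|\psiv\ra+\Am^{-1}|\mathbf{w}\ra$; enforcing $\la\psiv|\psiv^\perp\ra=0$ forces $c/\mu=-\la\psiv|\Am^{-1}|\mathbf{w}\ra$, and substituting into $\||\psiv^\perp\ra\|^2=1$ yields $\|\Am^{-1}|\mathbf{w}\ra\|^2=1+|\la\psiv|\Am^{-1}|\mathbf{w}\ra|^2\geq 1$, whence $\|\mathbf{w}\|^2\geq\|\Am^{-1}|\mathbf{w}\ra\|^2/\|\Am^{-1}\|^2\geq 1/\|\Am^{-1}\|^2$. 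Under the standard VQLS normalization $\|\Am\|\leq 1$ (so that $1/\|\Am^{-1}\|^2\geq 1/\kappa^2$ with $\kappa=\|\Am\|\|\Am^{-1}\|$) this yields $C_{ug}\geq\epsilon^2/\kappa^2$.

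For $C_g$, I would use $C_g=C_{ug}/\la\phiv|\phiv\ra$ together with $\la\phiv|\phiv\ra=\|\Am|\psiv(\thetav_*)\ra\|^2\leq\|\Am\|^2\leq\|\Am\|$ (valid under $\|\Am\|\leq 1$) to obtain $C_g\geq\epsilon^2/(\kappa^2\|\Am\|)$. The local-cost bounds $C_{ul}\geq\epsilon^2/(n\kappa^2)$ and $C_l\geq\epsilon^2/(n\kappa^2\|\Am\|)$ then follow immediately by dividing the corresponding global bounds by $n$ via the sandwich inequalities cited above, which themselves rest on the elementary observation that for any state the all-zeros probability is bounded above by the marginal probability of measuring zero on any single qubit.

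The main obstacle is the spectral lower bound $\|\Pm\Am|\psiv^\perp\ra\|^2\geq 1/\|\Am^{-1}\|^2$: a naive chain $\|\Am|\psiv^\perp\ra\|\geq\sigma_{\min}(\Am)$ followed by the projection $\Pm$ is not sharp, since $\Pm$ could in principle annihilate most of the mass of $\Am|\psiv^\perp\ra$ lying along $|\bv\ra$. The essential maneuver is to invert $\Am$ first, exploit the orthogonality $|\psiv^\perp\ra\perp|\psiv\ra\propto\Am^{-1}|\bv\ra$ to cancel the unknown coefficient $c$, and thereby transfer a clean unit lower bound on $\|\Am^{-1}|\mathbf{w}\ra\|$ back to $\|\mathbf{w}\|$ through the operator norm $\|\Am^{-1}\|$.
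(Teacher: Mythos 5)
Your derivation is self-contained, which is more than the paper itself offers: the paper does not prove this lemma but simply defers to the original VQLS reference. Your route --- showing that the exact normalized solution spans the kernel of $\Hm_g=\Am^*(\In-|\bv\ra\la\bv|)\Am$, using Hermiticity to kill the cross terms, and then lower-bounding $\la\psiv^\perp|\Hm_g|\psiv^\perp\ra$ uniformly over unit states orthogonal to the solution by pulling the estimate back through $\Am^{-1}$ --- is correct and amounts to a clean proof that the spectral gap of $\Hm_g$ above its one-dimensional kernel is at least $\sigma_{\min}^2(\Am)=1/\|\Am^{-1}\|^2$. The identity $\|\Am^{-1}|\mathbf{w}\ra\|^2=1+|\la\psiv|\Am^{-1}|\mathbf{w}\ra|^2$ checks out, as do the reductions for $C_g$ (via $\la\phiv|\phiv\ra\leq\|\Am\|^2\leq\|\Am\|$) and for $C_{ul},C_l$ via the sandwich inequalities.

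The one step that does not survive scrutiny is the final conversion from $1/\|\Am^{-1}\|^2$ to $1/\kappa^2$. With $\kappa=\|\Am\|\,\|\Am^{-1}\|$ and the normalization $\|\Am\|\leq 1$, you have $\|\Am^{-1}\|=\kappa/\|\Am\|\geq\kappa$, hence $1/\|\Am^{-1}\|^2=\|\Am\|^2/\kappa^2\leq 1/\kappa^2$: the inequality points the wrong way, so your (correct) bound $C_{ug}\geq\epsilon^2/\|\Am^{-1}\|^2$ is \emph{weaker} than the claimed $C_{ug}\geq\epsilon^2/\kappa^2$, not stronger. Indeed, for $\Am=c\In$ with $c<1$ one has $\kappa=1$ and $C_{ug}=c^2\epsilon^2<\epsilon^2$, so the stated bound cannot hold under the literal reading $\kappa=\|\Am\|\,\|\Am^{-1}\|$ with $\|\Am\|\leq 1$. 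The lemma is true under the standard QLSP convention that the singular values of $\Am$ lie in $[1/\kappa,1]$, i.e.\ $\kappa\geq\|\Am^{-1}\|$, and under that reading your argument closes immediately since then $1/\|\Am^{-1}\|^2\geq1/\kappa^2$. This is a convention ambiguity inherited from the cited reference rather than a structural flaw in your proof, but as written your parenthetical justification asserts a false implication; the honest conclusion of your argument is $C_{ug}\geq\epsilon^2\,\sigma_{\min}^2(\Am)$, which you should then match to the stated form by invoking $\sigma_{\min}(\Am)\geq1/\kappa$ explicitly.
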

The proof of above result can be found in~\cite{VQLS}.

\begin{lemma}\label{eq:relnorms}
The trace norm and $l_2$ norm between $|\psiv\ra$ and $|\phiv\ra$ are related as follows:
\begin{equation}\label{eq:relnorms1}
\left(1-\frac{\||\psiv\ra-|\phiv\ra\|_2^2}{2}\right)^2+\rho^2(\psiv,\phiv)=1.
\end{equation}
\end{lemma}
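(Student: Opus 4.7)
The plan is to compute both sides directly by expanding the $l_2$ norm in the inner product and using the fact that $|\psiv\ra$ and $|\phiv\ra$ are unit vectors. First, since both states are normalized, I would write
\begin{equation*}
\||\psiv\ra-|\phiv\ra\|_2^2 = \la\psiv|\psiv\ra + \la\phiv|\phiv\ra - \la\psiv|\phiv\ra - \la\phiv|\psiv\ra = 2 - 2\,\mathrm{Re}\,\la\phiv|\psiv\ra,
\end{equation*}
so that $1 - \tfrac{1}{2}\||\psiv\ra-|\phiv\ra\|_2^2 = \mathrm{Re}\,\la\phiv|\psiv\ra$. Squaring gives
\begin{equation*}
\left(1-\tfrac{1}{2}\||\psiv\ra-|\phiv\ra\|_2^2\right)^2 = \bigl(\mathrm{Re}\,\la\phiv|\psiv\ra\bigr)^2.
\end{equation*}

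Next, I would recall from the paper's notation that $\rho^2(\psiv,\phiv) = 1 - |\la\phiv|\psiv\ra|^2$. Adding the two terms gives
\begin{equation*}
\bigl(\mathrm{Re}\,\la\phiv|\psiv\ra\bigr)^2 + 1 - |\la\phiv|\psiv\ra|^2.
\end{equation*}
For the identity to collapse to $1$ one needs $(\mathrm{Re}\,\la\phiv|\psiv\ra)^2 = |\la\phiv|\psiv\ra|^2$, i.e.\ the inner product must be real. In the setting of this paper this is automatic: the ansatz, the matrix $\Am$, and the right-hand side are all real valued (see~\Cref{sec:ansatz} and the heat transfer formulation), so the amplitudes $\la\phiv|\psiv\ra$ are real. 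More generally, since the trace-norm distance $\rho(\psiv,\phiv)$ is invariant under global phase changes of either state, one may without loss of generality choose the global phase of $|\phiv\ra$ so that $\la\phiv|\psiv\ra\ge 0$, rendering it real; the $l_2$-norm identity on the left-hand side is then interpreted with that canonical phase choice. With this understood, $(\mathrm{Re}\,\la\phiv|\psiv\ra)^2 = |\la\phiv|\psiv\ra|^2$ and the right-hand side becomes $1$, as claimed.

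The only subtle point, and the one I would flag explicitly, is this phase-choice issue. It is not really a technical obstacle but rather a convention the reader needs to accept; once the phases are aligned (or when the underlying vector space is real, as here), the proof reduces to the one-line algebraic expansion above. I would therefore state the lemma with a brief remark clarifying the real/canonical-phase assumption, and the computation in the display equations would constitute the entire proof.
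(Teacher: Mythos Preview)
Your proof is correct and amounts to the same computation as the paper's, just parametrized differently: the paper writes $|\psiv\ra=\cos\theta\,|\phiv\ra+\sin\theta\,|\phiv^{\perp}\ra$ and then reads off $\rho=|\sin\theta|$ and $\||\psiv\ra-|\phiv\ra\|^2=2-2\cos\theta$, reducing the identity to $\cos^2\theta+\sin^2\theta=1$, whereas you expand directly in terms of $\la\phiv|\psiv\ra$. Both arguments implicitly need $\la\phiv|\psiv\ra$ to be real (the paper's decomposition already bakes this in by taking a real $\cos\theta$ coefficient), and you are actually more explicit than the paper in flagging and justifying that phase convention.
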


\begin{proof}
Without loss of generality, we can write $|\psiv\ra$ in terms of  $|\phiv\ra$ and state  $|\phiv^{\perp}\ra$ orthogonal to it as
\begin{equation*}
|\psiv\ra=\cos\theta|\phiv\ra +\sin\theta |\phiv^{\perp}\ra.
\end{equation*}
Then as shown in~\cite{wilde2011classical} (see pages 274-275)
\begin{equation*}
\rho(\psiv,\phiv)=|\sin \theta|.
\end{equation*}
Also
\begin{equation*}
\||\psiv\ra -|\phiv\ra\|^2=2-2\cos\theta.
\end{equation*}
Finally, using the identity $\cos^2\theta+\sin^2\theta=1$, leads to the desired result (\ref{eq:relnorms1}).
\end{proof}

\begin{assumption}\label{assumption2}
We assume that the query complexity of VQLS scales as $\mathcal{O}(\log^{8.5} N\kappa\log(1/\epsilon))$, where $\kappa$ is the condition number and $N$ is size of the matrix $\Am$, and $\epsilon$ is the desired accuracy of the VQLS solution as described in~\Cref{lem:vqlserr}.
\end{assumption}
Note that above assumption is based on an empirical scaling study in~\cite{VQLS} and no theoretical guarantees are available. 

\subsection{Error Analysis for VQLS Based Explicit Euler Linear ODE Solver}
The application of the forward Euler discretization scheme to the IVP (\ref{eq:qdc}) involves following steps.

\paragraph{Step 1:} Applying the forward Euler method to the system with step size $h$ yields the difference equation
\begin{equation}\label{eq:euler}
\uvh_f^{k+1}=(\Id+\Am h)\uvh_f^k+h\bv^k, \quad k=1,\cdots,\tsteps-1,
\end{equation}
where, $\uvh_f^k$ approximates $\uv(t_k)=\uv((k-1)h)$, with $\uvh_f^1=\uv(0)=\uv_0$, $\bv^k=\bv((k-1)h)$ and  $\tsteps-1=T/h$. The error introduced by Euler discretization is characterized by~\Cref{lem:eulerex}.

\paragraph{Step 2:}  The iterative system (\ref{eq:euler}) can be expressed as a system of linear equations
\begin{equation}\label{eq:linp}
\overbrace{\left(
  \begin{array}{cccc}
    \Id & 0 & 0 & \cdots \\
    -[\Id+\Am h] & \Id & 0 & \cdots \\
    0 & 0 & \ddots & \ddots \\
    0 & 0 &  -[\Id+\Am h] & \Id \\
  \end{array}
\right)}^{\tilde{\Am}_f}\overbrace{\left(\begin{array}{c}
         \uvh_f^1 \\
         \uvh_f^2 \\
         \vdots \\
         \uvh_f^{\tsteps} \\
       \end{array}\right)}^{\tilde{\uv}_f}
=\overbrace{\left(\begin{array}{c}
         \uv_{0} \\
         h\bv^1 \\
         \vdots \\
         h\bv^{\tsteps-1} \\
       \end{array}\right)}^{\tilde{\bv}}.
\end{equation}
In the VQLS framework the linear system (\ref{eq:linp}) is further transformed into the form
\begin{equation}
\tilde{\Am}_f|\tilde{\uv}_f\ra=|\tilde{\bv}\ra,\label{eq:linqp}
\end{equation}
where, $|\tilde{\bv}\ra=\tilde{\bv}/\|\tilde{\bv}\|$ and $|\tilde{\uv}_f\ra=\tilde{\uv}_f/\|\tilde{\uv}_f\|$ are the normalized vectors. The VLQS algorithm then optimizes the parameter $\thetav$ of the ansatz $\Vm(\thetav)$ such that
\begin{equation*}
\tilde{\Am}_f\Vm(\thetav)|\Zev\ra=|\tilde{\bv}\ra.
\end{equation*}
The optimal parameter $\thetav_*$ can be used to prepare a solution $|\uvo_f\ra=\Vm(\thetav_*)|\Zev\ra$ which is an approximation to $|\tilde{\uv}_f\ra$. The approximation error is characterized by the~\Cref{lem:vqlserr}.~\Cref{fig:errfig} shows a schematic of the steps involved in transforming and solving a given PDE using VQLS and the associated variables/approximation errors.

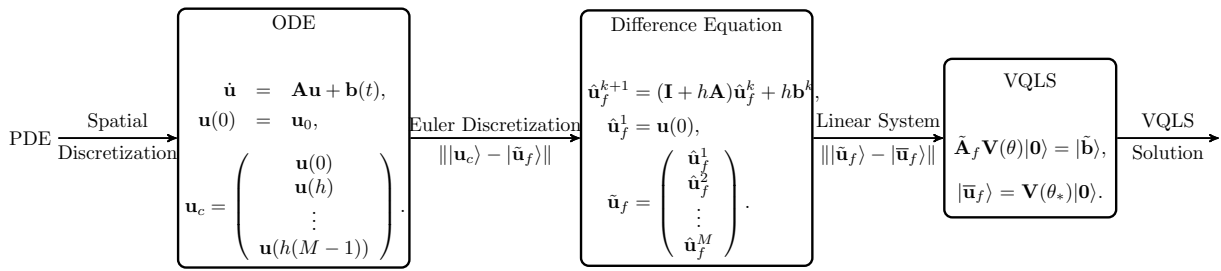
\begin{figure}[!htbp]
\begin{center}
\scalebox{0.75}{\begin{tikzpicture}
  [node distance=3cm,
  start chain=going right,]
     \node[punktchain] (intro) {ODE\\ 
     \begin{eqnarray*}
\dot{\uv}&=&\Am \uv+\bv(t),\\
\uv(0)&=&\uv_0,
\end{eqnarray*}
\begin{equation*}
\uv_c=\left(\begin{array}{cc}
     \uv(0)  \\
     \uv(h)   \\
     \vdots \\
     \uv(h(M-1))
\end{array}\right).
\end{equation*}};
     \node[punktchain] (euler)      {Difference Equation\\ 
     \begin{align*}
\uvh_f^{k+1}&=(\Id+h\Am)\uvh_f^k +h\bv^k,\\
\uvh_f^1&=\uv(0),\\
\tilde{\uv}_f&=\left(\begin{array}{c}
         \uvh_f^1 \\
         \uvh_f^2 \\
         \vdots \\
         \uvh_f^\tsteps \\
       \end{array}\right).
\end{align*}
};
     \node[punktchain, text width= 8em, shift=(left:2em)] (vqls)      {VQLS\\ 
     \begin{equation*}
\tilde{\Am}_f\Vm(\thetav)|\Zev\ra=|\tilde{\bv}\ra,
\end{equation*}
$|\uvo_f\ra=\Vm(\thetav_*)|\Zev\ra$.};

\node (foo) [left=2.1cm of intro] {PDE};
\node (end) [right=1.8cm of vqls] {};

\draw [thick, ->]  (foo)  to node[above] {Spatial} (intro);
\draw [thick, ->]  (foo)  to node[below] {Discretization} (intro);
\draw [thick, ->] (intro) to node[above] {Euler Discretization} (euler);
\draw [thick, ->] (euler) to node[above] {Linear System} (vqls);
\draw [thick, ->] (intro) to node[below] {$\||\uv_c\ra-|\tilde{\uv}_f\ra\|$} (euler);
\draw [thick, ->] (euler) to node[below] {$\||\tilde{\uv}_f\ra-|\uvo_f\ra\|$} (vqls);
\draw [thick, ->]  (vqls)  to node[above] {VQLS} (end);
\draw [thick, ->]  (vqls)  to node[below] {Solution} (end);
  \end{tikzpicture}}
\caption{Schematic showing different stages in going from the PDE to the VQLS solution, associated variables and the error in approximations. The case of explicit/forward Euler scheme is shown as an example. Similar flow applies for the case of implicit/backward Euler scheme.}\label{fig:errfig}
\end{center}
\end{figure}

\begin{restatable}{lemma}{hchoice}\label{lem:kappa}
Let $h$ be chosen such that $h\nu\leq 2$, where $\nu=\|\Am\|_2$. Then the stiffness $\kappa$ of $\tilde{\Am}_f$ in (\ref{eq:linqp}) is bounded as follows
\begin{equation}
\label{eq:kappa}
\kappa(\tilde{\Am}_f)\leq \frac{12 e^{\nu T}}{h\nu}.
\end{equation}
\end{restatable}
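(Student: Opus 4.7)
The plan is to exploit the block Kronecker structure of $\tilde{\Am}_f$. First I would write $\mathbf{B} := \Id + h\Am$ and let $\mathbf{S}$ denote the $M\times M$ lower-shift matrix (ones on the first subdiagonal, zeros elsewhere), so that
\[
\tilde{\Am}_f \;=\; \Id_{MN} - \mathbf{S}\otimes \mathbf{B}.
\]
Because $\mathbf{S}$ is nilpotent with $\mathbf{S}^M = 0$, the Neumann series terminates and one obtains the explicit closed form
\[
\tilde{\Am}_f^{-1} \;=\; \sum_{k=0}^{M-1}(\mathbf{S}\otimes\mathbf{B})^k \;=\; \sum_{k=0}^{M-1}\mathbf{S}^k\otimes \mathbf{B}^k,
\]
which is simply the matrix form of the discrete-time propagator for the forward-Euler recurrence $\uvh_f^k = \sum_{j=1}^{k}\mathbf{B}^{k-j}\tilde{\bv}^j$.

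Next I would bound the two spectral norms that define $\kappa(\tilde{\Am}_f)$. For the numerator, triangle inequality and submultiplicativity give $\|\tilde{\Am}_f\|_2 \le 1 + \|\mathbf{S}\|_2\|\mathbf{B}\|_2 \le 2 + h\nu$. For the denominator, using $\|\mathbf{S}^k\|_2 \le 1$ and $\|\mathbf{B}\|_2 \le 1 + h\nu$, the geometric series yields
\[
\|\tilde{\Am}_f^{-1}\|_2 \;\le\; \sum_{k=0}^{M-1}(1+h\nu)^k \;=\; \frac{(1+h\nu)^M - 1}{h\nu}.
\]

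The key — and arguably the only nontrivial — step is to convert $(1+h\nu)^M$ into a clean exponential with the right constant. I would factor $(1+h\nu)^M = (1+h\nu)(1+h\nu)^{M-1}$ and apply $(1+x)^{M-1}\le e^{(M-1)x}$ together with the identity $(M-1)h = T$, obtaining $(1+h\nu)^M \le (1+h\nu)\,e^{\nu T}$ and hence $\|\tilde{\Am}_f^{-1}\|_2 \le (1+h\nu)\,e^{\nu T}/(h\nu)$. Combining with the numerator bound gives
\[
\kappa(\tilde{\Am}_f) \;\le\; \frac{(2+h\nu)(1+h\nu)\,e^{\nu T}}{h\nu}.
\]
Since $(2+x)(1+x) = x^2 + 3x + 2$ is monotonically increasing on $[0,2]$ with maximum value $12$ at $x=2$, the hypothesis $h\nu \le 2$ yields the claimed estimate $\kappa(\tilde{\Am}_f) \le 12\,e^{\nu T}/(h\nu)$.

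The main obstacle is quite mild: one must be careful not to squander a factor of $e^{h\nu}$ by using $Mh\nu$ in the exponent instead of $(M-1)h\nu$; peeling off a single factor of $(1+h\nu)$ before exponentiating is exactly what makes the final constant land at $12$ rather than the cruder $4e^2\approx 29.6$ one would get from a naive bound. Note that \Cref{assumption1} is not required for this particular estimate; it enters later when controlling the consistency error of the Euler discretization rather than the conditioning of $\tilde{\Am}_f$.
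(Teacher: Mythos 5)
Your proposal is correct and follows essentially the same route as the paper's proof: bound $\|\tilde{\Am}_f\|_2\le 2+h\nu$ by the triangle inequality, bound $\|\tilde{\Am}_f^{-1}\|_2$ by the geometric series $\sum_{k=0}^{M-1}(1+h\nu)^k=\frac{(1+h\nu)^M-1}{h\nu}$, peel off one factor of $(1+h\nu)$ before applying $(1+x)^{M-1}\le e^{(M-1)x}=e^{\nu T}$, and evaluate $(2+h\nu)(1+h\nu)\le 12$ at $h\nu=2$. The only cosmetic difference is that you write the inverse as $\sum_k \mathbf{S}^k\otimes\mathbf{B}^k$ via the nilpotent shift matrix, which is a slightly cleaner justification of the same block-wise norm bound the paper obtains directly from the explicit block form of $\tilde{\Am}_f^{-1}$.
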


\begin{restatable}{lemma}{eulerex}\label{lem:eulerex}
Let $\epsilon>0$ be the desired error, then one can choose the step size $h$ in the forward Euler scheme (\ref{eq:euler}), such that
\begin{equation}\label{eq:errexp1}
\||\uv_c\ra-|\tilde{\uv}_f\ra\|\leq \epsilon,
\end{equation}
where, $\uv_c$ is as defined in (\ref{eq:uc}) with this selected $h$ and $|\tilde{\uv}_f\ra$ is solution of the system (\ref{eq:linqp}).
\end{restatable}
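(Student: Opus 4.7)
The approach is to factor the bound into two pieces: (i) a classical global-error estimate $\|\uv_c-\tilde{\uv}_f\|_2 = O(\sqrt{h})$ for forward Euler applied to (\ref{eq:qdc}), and (ii) a conversion lemma that turns an unnormalized $l_2$ bound into a bound on the normalized quantum states $|\uv_c\ra$ and $|\tilde{\uv}_f\ra$. Choosing $h$ polynomially small in $\epsilon$ at the end closes the argument.

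\textbf{Step 1 (classical Euler error).} Set $e^k = \uv(t_k)-\uvh_f^k$ and substitute $\uv(t_k)$ into (\ref{eq:euler}). A Taylor expansion yields
\begin{equation*}
e^{k+1} = (\Id+h\Am)\,e^k + \tau^k, \qquad e^1 = 0, \qquad \tau^k = \tfrac{h^2}{2}\,\ddot{\uv}(\xi_k).
\end{equation*}
Because $\ddot{\uv} = \Am^2\uv + \Am\bv + \dot{\bv}$, Assumption A2 together with the boundedness of $\uv(t)$ on $[0,T]$ implied by A1 (using diagonalizability and $Re(\lambda_i)<0$ to bound $\|e^{\Am t}\|$) gives $\|\tau^k\|\leq C_\tau h^2$ for a constant $C_\tau$ that depends only on $\nu,T,\|\uv_0\|,\|\bv\|,\|\dot{\bv}\|$. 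Iterating the recursion and applying the stability bound $\|(\Id+h\Am)^j\|\leq K_{\mathrm{stab}}\,e^{jhC_A}$ (valid once $h\nu \leq 2$, which is exactly the hypothesis of Lemma~\ref{lem:kappa}) produces the standard $O(h)$ global-error estimate
\begin{equation*}
\|e^k\|_2 \;\leq\; C_\tau h^2 \sum_{j=0}^{k-2} K_{\mathrm{stab}}\,e^{jhC_A} \;\leq\; C_1\,h, \qquad k=1,\dots,M.
\end{equation*}

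\textbf{Step 2 (aggregate and normalize).} Stacking the block-errors gives
\begin{equation*}
\|\uv_c-\tilde{\uv}_f\|_2^2 \;=\; \sum_{k=1}^{M}\|e^k\|_2^2 \;\leq\; M\,C_1^2 h^2 \;\leq\; C_1^2\,(T+h)\,h,
\end{equation*}
so $\|\uv_c-\tilde{\uv}_f\|_2 = O(\sqrt{h})$. Next I use the elementary inequality $\bigl\|a/\|a\|-b/\|b\|\bigr\|_2 \leq 2\|a-b\|_2/\max(\|a\|_2,\|b\|_2)$ for nonzero $a,b$. Since the first block of both $\uv_c$ and $\tilde{\uv}_f$ is $\uv_0$, we have $\min(\|\uv_c\|_2,\|\tilde{\uv}_f\|_2)\geq \|\uv_0\|_2$, and therefore
\begin{equation*}
\bigl\||\uv_c\ra-|\tilde{\uv}_f\ra\bigr\|_2 \;\leq\; \frac{2\,C_1\sqrt{(T+h)h}}{\|\uv_0\|_2}.
\end{equation*}
Choosing any $h$ satisfying $h\leq \min\!\bigl(2/\nu,\; \epsilon^2\|\uv_0\|_2^2 / (8 C_1^2 T)\bigr)$ then guarantees the desired bound $\||\uv_c\ra-|\tilde{\uv}_f\ra\|\leq \epsilon$.

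\textbf{Main obstacle.} The conceptually clean part is the Euler truncation analysis; the careful bookkeeping lies in two places. First, Assumption A1 only gives diagonalizability, not unitary diagonalizability, so the constant $K_{\mathrm{stab}}$ in the stability bound picks up the condition number of the eigenvector basis of $\Am$; one must verify this is absorbed into the single constant $C_1$ without any $h$- or $M$-dependence. Second, the passage from the unnormalized $O(\sqrt{h})$ bound to a normalized-state bound depends crucially on the uniform lower bound $\|\uv_0\|_2$ on $\|\tilde{\uv}_f\|_2$ and $\|\uv_c\|_2$, which is why the initial block of $\tilde{\bv}$ in (\ref{eq:linp}) is $\uv_0$ rather than $h\uv_0$ — a detail that must be tracked explicitly. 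Both are manageable, so the proof reduces essentially to a clean application of the classical forward-Euler global-error theorem followed by a one-line normalization inequality.
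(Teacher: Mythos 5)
Your proof is correct and follows essentially the same route as the paper's: the classical $O(h)$ global error bound for forward Euler, aggregation over the $M$ blocks to get $\|\uv_c-\tilde{\uv}_f\|=O(\sqrt{hT})$, and a normalization inequality of the form $\bigl\|\tfrac{a}{\|a\|}-\tfrac{b}{\|b\|}\bigr\|\leq 2\|a-b\|/\|a\|$ to pass to the quantum states. The only (harmless) differences are that you derive the Euler global-error estimate from scratch where the paper cites a textbook, and you lower-bound the normalization denominator by $\|\uv_0\|$ whereas the paper keeps $\|\uv_c\|$ and absorbs it into the choice $\epsilon'=\|\uv_c\|\epsilon/2$.
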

See~\Cref{sec:proof_lem3,sec:proof_lem4} for proofs of the above two lemmas.

\begin{theorem}\label{thm:errfor}
Consider the IVP (\ref{eq:qdc}) under~\Cref{assumption1}.  Let $0<\epsilon<1$ be the desired solution accuracy, then one can choose the step size $h$ in the forward Euler scheme (\ref{eq:euler}), and $\gamma$ the stopping threshold in VQLS Algorithm~\ref{algo:VQLS} such that
\begin{equation}\label{eq:errexp2}
\| |\uv_c\ra-|\uvo_f\ra\|\leq \epsilon,
\end{equation}
where, $|\uvo_f\ra=\Vm(\thetav^*)|\Zev\ra$ is solution generated by applying VQLS to the linear system (\ref{eq:linqp}). Furthermore, under the Assumption \ref{assumption2} the query complexity scales as 
\begin{equation}
 \mathcal{O}\left(\log^{8.5} \left(\frac{\nx T^2}{\|\uv_c\|^2\epsilon^2}\right)\frac{T e^{\nu T}}{\|\uv_c\|^2\epsilon^2}\log(1/\epsilon)\right).
\end{equation}
\end{theorem}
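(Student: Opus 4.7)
The natural starting point is the triangle inequality
\[
\bigl\||\uv_c\ra-|\uvo_f\ra\bigr\|
\;\leq\;
\bigl\||\uv_c\ra-|\tilde{\uv}_f\ra\bigr\|
+\bigl\||\tilde{\uv}_f\ra-|\uvo_f\ra\bigr\|,
\]
which splits the total error into (i) the forward Euler discretization error and (ii) the VQLS approximation error on the resulting linear system (\ref{eq:linqp}). The plan is to force each piece to be at most $\epsilon/2$ by making two independent choices: $h$ controls (i), and the VQLS stopping threshold $\gamma$ controls (ii). The schematic in~\Cref{fig:errfig} is precisely this decomposition.

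For (i), \Cref{lem:eulerex} is directly applicable: it guarantees that one can select a step size $h=h(\epsilon,T,\nu,\|\uv_c\|,\ldots)$ so that $\||\uv_c\ra-|\tilde{\uv}_f\ra\|\leq \epsilon/2$. I would read off the explicit $h$ from the proof of that lemma; based on the stated target complexity, I expect $1/h$ to scale like $T/(\|\uv_c\|^2\epsilon^2)$ (up to logs), which in turn fixes $\tsteps=T/h+1$ and hence the size $N=\nx \tsteps$ of the VQLS matrix. For (ii), I would use \Cref{eq:relnorms1} from \Cref{eq:relnorms} to convert the desired $l_2$ bound $\||\tilde{\uv}_f\ra-|\uvo_f\ra\|\leq \epsilon/2$ into the trace norm bound $\rho(\tilde{\uv}_f,\uvo_f)\leq \epsilon/2$ (the simple estimate $\rho\leq \|\cdot\|_2$ suffices, since $1-\sqrt{1-\rho^2}\leq \rho^2$ on $[0,1]$). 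Then \Cref{lem:vqlserr}, applied say to $C_g$, gives $C_g\geq \epsilon_v^2/\kappa^2$, so setting the VQLS stopping threshold to $\gamma \lesssim \epsilon^2/\kappa(\tilde{\Am}_f)^2$ ensures (ii). Finally, \Cref{lem:kappa} bounds $\kappa(\tilde{\Am}_f)\leq 12 e^{\nu T}/(h\nu)$, consistent with the choice of $h$ above.

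With $h$ and $\gamma$ fixed, the query complexity follows from \Cref{assumption2}: the VQLS scaling is $\mathcal{O}(\log^{8.5}N\cdot\kappa\cdot\log(1/\epsilon))$, and substituting $N=\nx \tsteps = \Theta(\nx T/h)$ together with the $\kappa$-bound from \Cref{lem:kappa} gives
\[
\mathcal{O}\!\left(\log^{8.5}\!\left(\frac{\nx T}{h}\right)\cdot \frac{e^{\nu T}}{h\nu}\cdot\log(1/\epsilon)\right),
\]
into which the $h\sim \|\uv_c\|^2\epsilon^2/T$ substitution yields the claimed expression. The verification that the $T$, $\nu$, and $\|\uv_c\|$ factors line up with the statement is then routine.

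\paragraph{Main obstacle.}
The delicate step is the propagation of normalization: \Cref{lem:eulerex} measures error between normalized quantum states, but the underlying Euler accuracy is controlled unnormalized, where a tighter $h$ is needed when $\|\uv_c\|$ is small (small signal amplifies normalization-induced error). Getting the precise $1/\|\uv_c\|^2$ factor in the final complexity therefore requires careful bookkeeping of how the unnormalized local truncation error $O(h^2)$ per step inflates after normalization — this is already absorbed in \Cref{lem:eulerex}, but unpacking it to set $h$ explicitly (and confirming that the resulting $\log^{8.5}$ argument is $\nx T^2/(\|\uv_c\|^2\epsilon^2)$ rather than e.g. $\nx T/\epsilon^2$) is the one place where constants genuinely matter. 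Everything else is triangle inequality, substitution, and the three lemmas already at hand.
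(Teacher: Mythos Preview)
Your proposal is correct and matches the paper's proof almost exactly: triangle-inequality split, \Cref{lem:eulerex} for the discretization half, \Cref{lem:vqlserr} together with \Cref{lem:kappa} for the VQLS half via \Cref{eq:relnorms}, then substitution into the empirical complexity from \Cref{assumption2}. One small slip to fix: the inequality you need is $\|\cdot\|_2\leq \sqrt{2}\,\rho$ (which indeed follows from your parenthetical $1-\sqrt{1-\rho^2}\leq \rho^2$ and \Cref{eq:relnorms}), not ``$\rho\leq \|\cdot\|_2$'' as written---the latter is the wrong direction for converting a trace-norm guarantee into an $l_2$ bound.
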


\begin{proof}
From Lemmas \ref{lem:eulerex} and \ref{lem:kappa}, we choose $h$
\begin{equation*}
h<\min\left\{\frac{\epsilon^2\|\uv_c\|^2}{16C^2T},\frac{2}{\nu}\right\},
\end{equation*}
such that
\begin{equation}\label{eq:err1p}
\||\uv_c\ra-|\tilde{\uv}_f\ra\|\leq \epsilon/2.
\end{equation}
Let $\gamma$ be
\begin{equation}\label{eq:gamma}
\gamma=\frac{1-(1-\frac{\epsilon^2}{8})^2}{\left(\frac{12e^{\nu T}}{h\nu}\right)^2(\log(\nx \tsteps))(2+h\nu)}<1,
\end{equation}
where, $N,M>1$. Let the VQLS algorithm be terminated under the condition (with similar analysis for other VQLS cost functions)
\begin{equation*}
C_{g}\leq \gamma,
\end{equation*}
then from the Lemmas \ref{lem:vqlserr} and \ref{lem:kappa}
\begin{equation}
\rho^2(|\tilde{\uv}_f\ra,|\uvo_f\ra)=(\epsilon^\prime)^2\leq n\kappa^2\|\tilde{\Am}_f\|\gamma \leq 1-\left(1-\frac{\epsilon^2}{8}\right)^2\leq \frac{\epsilon^2}{8}(2-\frac{\epsilon^2}{8})\leq \frac{\epsilon^2}{4},\label{eq:epserr}
\end{equation}
where, the number of qubits $n=\log(\nx\tsteps)$ and we have used the inequality (\ref{eq:normAF}) .
It then follows from~\Cref{eq:relnorms}
\begin{equation*}
1-\left(1-\frac{\||\tilde{\uv}_f\ra-|\uvo_f\ra\|_2^2}{2}\right)^2\leq 1-\left(1-\frac{\epsilon^2}{8}\right)^2,
\end{equation*}
which implies
\begin{equation}\label{eq:err2p}
\||\tilde{\uv}_f\ra-|\uvo_f\ra\|_2\leq \epsilon/2.
\end{equation}
Thus, using the relations (\ref{eq:err1p}) and (\ref{eq:err2p})
\begin{equation}
\| |\uv_c\ra-|\uvo_f\ra\|\leq \| |\uv_c\ra-|\tilde{\uv}_f\ra\|+\||\tilde{\uv}_f\ra-|\uvo_f\ra\|\leq \epsilon.
\end{equation}

Furthermore, under~\Cref{assumption2} the query complexity of VLQS scales as $\mathcal{O}(\log^{8.5} (\nx\tsteps)\kappa\log(1/\epsilon^\prime))$ which can be simplified as follows
\begin{eqnarray}
 && \mathcal{O}(\log^{8.5} (\nx\tsteps)\kappa\log(1/\epsilon^\prime)) = \mathcal{O}\left(\log^{8.5}(\nx T/ h)\frac{e^{\nu T}}{h\nu}\log(1/\epsilon)\right)\notag\\
  &=&  \mathcal{O}\left(\log^{8.5}\left(\nx T\left(\frac{\nu}{2}+\frac{4C^2T}{\|\uv_c\|^2\epsilon^2}\right)\right)e^{\nu T}\left(\frac{1}{2}+\frac{4C^2T}{\|\uv_c\|^2\epsilon^2}\right)\log(1/\epsilon)\right)\notag\\
  &=& \mathcal{O}\left(\log^{8.5} \left(\frac{\nx T^2}{\|\uv_c\|^2\epsilon^2}\right)\frac{T e^{\nu T}}{\|\uv_c\|^2\epsilon^2}\log(1/\epsilon)\right) \label{eq:complfe}.
\end{eqnarray}

\end{proof}
\begin{remark}
From the estimate of query complexity, one can see that while it scales polylogarithmically with the system size $\nx$, there is an exponential dependence on time period of integration, i.e. $T$. The next theorem shows that under the additional assumption that $\Am$ is unitarily diagonalizable (i.e., $\Am$ is a normal matrix), the exponential dependence on $T$ reduces to polynomial dependence.    
\end{remark}

\begin{restatable}{theorem}{normat}\label{thm:normat}
Consider the IVP (\ref{eq:qdc}) under~\Cref{assumption1}. Additionally, assume that $\Am$ in  (\ref{eq:qdc}) is a normal matrix. Let $0<\epsilon<1$ be the desired solution accuracy, then one can choose the step size $h$ in the forward Euler scheme (\ref{eq:euler}), and $\gamma$ the stopping threshold in VQLS algorithm (\ref{algo:VQLS}) such that
\begin{equation}
\| |\uv_c\ra-|\uvo_f\ra\|\leq \epsilon,
\end{equation}
where, $|\uvo_f\ra=\Vm(\thetav^*)|\Zev\ra$ is solution generated by applying VQLS to the linear system (\ref{eq:linqp}). Furthermore, under the Assumption \ref{assumption2} the query complexity scales as
\begin{equation}
\mathcal{O}\left(\log^{8.5} \left(\frac{\nx T^4}{\|\uv_c\|^2\epsilon^2}\right)\frac{T^3}{\|\uv_c\|^2\epsilon^2}\log(1/\epsilon)\right)\label{eq:complfenor}.
\end{equation}
\end{restatable}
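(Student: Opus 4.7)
The plan is to revisit each place in the proof of Theorem~\ref{thm:errfor} where the factor $e^{\nu T}$ appeared, and show that normality of $\Am$ lets us replace it by a polynomial in $T$. The only two ingredients that must be re-derived are (i) the condition number bound of~\Cref{lem:kappa} and (ii) the Euler discretization bound of~\Cref{lem:eulerex}; the passage from these bounds to the stated complexity estimate is then an exact mimic of the computation that produced (\ref{eq:complfe}).

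First I would diagonalize $\Am=\Um \Dm \Um^*$ with $\Um$ unitary and $\Dm=\mathrm{diag}(\lambda_1,\dots,\lambda_\nx)$. Because $\Um$ is unitary, the spectral norm satisfies $\|(\Id+h\Am)^j\|_2 = \max_i|1+h\lambda_i|^j$, and $\nu = \|\Am\|_2 = \max_i|\lambda_i|$. By choosing $h\le 2/\nu$, the usual forward-Euler stability condition $|1+h\lambda_i|\le 1$ holds for every eigenvalue with $\mathrm{Re}(\lambda_i)\le 0$ (after a short argument: $|1+h\lambda_i|^2 = 1+2h\,\mathrm{Re}(\lambda_i)+h^2|\lambda_i|^2 \le 1$ when $h\le -2\mathrm{Re}(\lambda_i)/|\lambda_i|^2$, which is implied by $h\le 2/\nu$ in the real eigenvalue case and by a slightly tighter but still $O(1/\nu)$ bound in the complex case). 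Hence $\|(\Id+h\Am)^j\|_2\le 1$ for every $j$.

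Next I would repeat the block-triangular inversion of $\tilde{\Am}_f$ used in the proof of~\Cref{lem:kappa}: its inverse has block $(i,j)$-entry equal to $(\Id+h\Am)^{i-j}$ for $i\ge j$ and $0$ otherwise. For general diagonalizable $\Am$ the block norms were bounded by $e^{\nu T}$; under normality they are bounded by $1$. Summing over the $\tsteps$ block rows yields
\begin{equation*}
\|\tilde{\Am}_f^{-1}\|_2 \le \tsteps = T/h+1,
\end{equation*}
and combining with the easy bound $\|\tilde{\Am}_f\|_2\le 2+h\nu$ used in~\Cref{lem:kappa} gives
\begin{equation*}
\kappa(\tilde{\Am}_f) \le (2+h\nu)(T/h+1) = O(T/h),
\end{equation*}
replacing the $e^{\nu T}/(h\nu)$ bound by a factor polynomial in $T$. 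In parallel, I would revisit the Euler local/global truncation argument behind~\Cref{lem:eulerex}. The classical unrolling yields
\begin{equation*}
\uv(t_{k+1})-\uvh_f^{k+1} = (\Id+h\Am)(\uv(t_k)-\uvh_f^k)+\tauv_k,
\end{equation*}
with $\|\tauv_k\|=O(h^2)$ from Taylor's theorem applied to $\uv$. Iterating and using $\|(\Id+h\Am)^j\|\le 1$ (instead of $e^{j h\nu}$) reduces the global error at step $k$ to $O(k h^2) = O(T h)$, so the unnormalized error $\|\uv_c - \tilde{\uv}_f\|$ is $O(T^{3/2} h)$ and the corresponding normalized bound $\||\uv_c\ra - |\tilde{\uv}_f\ra\|\le \epsilon/2$ is achieved with $h = O(\epsilon^2 \|\uv_c\|^2/T^3)$, again polynomial in $T$.

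Finally I would plug the improved choice of $h$ and the improved $\kappa$ into the derivation that followed (\ref{eq:gamma})-(\ref{eq:epserr}): choose $\gamma$ so that $n\kappa^2\|\tilde{\Am}_f\|\gamma \le \epsilon^2/4$, invoke~\Cref{lem:vqlserr} and~\Cref{eq:relnorms} to get $\||\tilde{\uv}_f\ra-|\uvo_f\ra\|\le \epsilon/2$, and apply the triangle inequality to conclude $\||\uv_c\ra-|\uvo_f\ra\|\le\epsilon$. Substituting $h = \Theta(\epsilon^2\|\uv_c\|^2/T^3)$ and $\kappa = O(T/h) = O(T^4/(\epsilon^2\|\uv_c\|^2))$ into the empirical VQLS query complexity $\mathcal{O}(\log^{8.5}(\nx \tsteps)\kappa\log(1/\epsilon^\prime))$ of~\Cref{assumption2} and simplifying exactly as in (\ref{eq:complfe}) yields the claimed scaling
\begin{equation*}
\mathcal{O}\!\left(\log^{8.5}\!\left(\frac{\nx T^4}{\|\uv_c\|^2\epsilon^2}\right)\frac{T^3}{\|\uv_c\|^2\epsilon^2}\log(1/\epsilon)\right).
\end{equation*}

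The main obstacle is the careful re-derivation of the two lemmas: one must verify that normality together with $\mathrm{Re}(\lambda_i)\le 0$ genuinely gives $\|(\Id+h\Am)^j\|_2\le 1$ under a CFL condition that is $O(1/\nu)$ (including the complex-eigenvalue subcase, where a slightly smaller $h$ may be needed but still $h=\Theta(1/\nu)$), and that the Euler truncation argument does not hide another $e^{\nu T}$ factor when $\dot\uv$ is bounded via the variation-of-constants formula. Everything downstream is a routine bookkeeping reuse of the proof of Theorem~\ref{thm:errfor}.
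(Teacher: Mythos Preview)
Your proposal is correct and follows essentially the same route as the paper: use normality to get $\|(\Id+h\Am)^j\|_2\le 1$ under a CFL-type step restriction, rerun the block-triangular inversion to obtain $\kappa(\tilde{\Am}_f)=O(\tsteps)$, rerun the Euler accumulation to get $e_k=O(kh^2)$, and then repeat the $\gamma$/triangle-inequality argument of Theorem~\ref{thm:errfor}. Two small slips to fix when you write it up: the unnormalized error is $O(T^{3/2}h^{1/2})$, not $O(T^{3/2}h)$ (your subsequent choice $h=O(\epsilon^2\|\uv_c\|^2/T^3)$ is already consistent with the correct exponent), and the step restriction in the complex-eigenvalue case is $h\le \min_j 2|\mathrm{Re}\,\lambda_j|/|\lambda_j|^2$, which is not in general $\Theta(1/\nu)$---the paper handles this by introducing an $h_0$ and then assuming $O(h_0)=1$ in the final simplification.
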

The proof is given in~\Cref{sec:proof_thm2}.

\begin{remark}
The matrix $\Am$ (see~\Cref{eq:A}) that arises in the discretization of heat equation is symmetric and hence a normal matrix. 
\end{remark} 

\subsubsection{Comparison with Classical Explicit Euler Linear ODE Solver}\label{sec:cmpexp}
Given the update eqn. (\ref{eq:euler}),  the computational complexity of classical explicit Euler method can be estimated as
\begin{equation*}
O(sNM)=O(sNT/h)
\end{equation*}
where, $s$ is sparsity of $\Am$. Following the analysis in Lemma \ref{lem:eulerex}, we choose $h$ such that
\begin{equation*}
h<\min\left\{\frac{\epsilon^2\|\uv_c\|^2}{C^2T},\frac{2}{\nu}\right\}.
\end{equation*}
Under this condition, the classical Euler approach generates a normalized solution  vector $|\tilde{\uv}^c_f\ra$, such that
\begin{equation}\label{eq:err1class}
\| |\uv_c\ra-|\tilde{\uv}^c_f\ra\|\leq \epsilon,
\end{equation}
consistent with the VQLS approach. Thus computational complexity of classical explicit Euler scheme scales as
\begin{equation*}
\mathcal{O}\left(sN\frac{T}{h}\right)=\mathcal{O}\left(sNT\left(\frac{\nu}{2}+\frac{C^2T}{\|\uv_c\|^2\epsilon^2}\right)\right)=\mathcal{O}\left(\frac{sNT^2}{\|\uv_c\|^2\epsilon^2}\right).
\end{equation*}
On the other hand, based on~\Cref{thm:errfor,thm:normat}, the query complexity of VQLS based explicit Euler scheme scales polylogarithmically w.r.t $N$.   

We next discuss implications of this result from the perspective of BVQPCO framework. We assume that, as in  BVQPCO framework, an outer optimization loop for design optimization is used with the underlying classical explicit Euler method. Further assuming that number of outer loop iterations are similar in both this classical and the BVQPCO framework, the query complexity of BVQPCO will scale polylogarithmically with $N$  under~\Cref{assumption2}, and thus could provide a significant computational advantage for simulation based design problems.

\subsection{Error Analysis for VQLS Based Implicit Euler Linear ODE Solver}
The application of the implicit Euler discretization scheme to the IVP (\ref{eq:qdc}) involves following steps.

\paragraph{Step 1:} Applying the backward Euler method to the system with step size $h$ yields,
\begin{equation}\label{eq:eulerb}
(\Id-\Am h)\uvh_b^{k+1}=\uvh_b^k+h\bv^k, \quad k=1,\dots,\tsteps-1,
\end{equation}
where, $\uvh_b^k$ approximates $\uv(t_k)=\uv((k-1)h)$, with $\uvh_b^1=\uv(0)=\uv_0$, $\bv^k=\bv((k-1)h)$ and  $\tsteps-1=T/h$. The error introduced by Euler discretization is characterized by~\Cref{lem:eulerexb}.

\paragraph{Step 2:}  The iterative system (\ref{eq:eulerb}) can be expressed as a system of linear equations,
\begin{equation}\label{eq:linpb}
\overbrace{\left(
  \begin{array}{cccc}
    \Id & 0 & 0 & \cdots \\
    -\Id & [\Id-\Am h] & 0 & \cdots \\
    0 & 0 & \ddots & \ddots \\
    0 & 0 &  -\Id & [\Id-\Am h] \\
  \end{array}
\right)}^{\tilde{\Am}_b}\overbrace{\left(\begin{array}{c}
         \uvh_b^1 \\
         \uvh_b^2 \\
         \vdots \\
         \uvh_b^{\tsteps} \\
       \end{array}\right)}^{\tilde{\uv}_b}
=
\overbrace{\left(\begin{array}{c}
         \uv_{0} \\
         h\bv^1 \\
         \vdots \\
         h\bv^{\tsteps-1} \\
       \end{array}\right)}^{\tilde{\bv}},
\end{equation}
which in the VQLS framework is further transformed into the form,
\begin{equation}
\tilde{\Am}_b|\tilde{\uv}_b\ra=|\tilde{\bv}\ra,\label{eq:linqpb}
\end{equation}
where, $|\tilde{\uv}_b\ra=\tilde{\uv}_b/\|\tilde{\uv}_b\|$ is the normalized vector. We shall denote by $|\uvo_b\ra=\Vm(\thetav^*)|\Zev\ra$, the solution produced by VQLS.

\begin{restatable}{lemma}{kappab}\label{lem:kappab}
Let $h$ be chosen such that $h\nu\leq 0.5$, where $\nu=\|\Am\|$. Then the stiffness $\kappa$ of $\tilde{\Am}_b$ in (~\ref{eq:linqpb}) is bounded as follows
\begin{equation}\label{eq:kappab1}
\kappa(\tilde{\Am}_b)\leq 2.5(T/h+1) ,
\end{equation}
if $\|(\Id-h\Am)^{-1}\|\leq 1$ or 
\begin{equation}\label{eq:kappab2}
\kappa(\tilde{\Am}_b)\leq  \frac{5e^{2\nu T}}{h\nu},
\end{equation}
if  $\|(\Id-h\Am)^{-1}\|>1$.
\end{restatable}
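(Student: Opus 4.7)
The plan is to bound the condition number as $\kappa(\tilde{\Am}_b)=\|\tilde{\Am}_b\|_2\cdot\|\tilde{\Am}_b^{-1}\|_2$ and handle the two factors separately. Writing $\tilde{\Am}_b = D+L$, where $D$ is block diagonal with entries $\Id,(\Id-h\Am),\dots,(\Id-h\Am)$ and $L$ is the block subdiagonal of $-\Id$ blocks, the triangle inequality gives $\|\tilde{\Am}_b\|_2 \leq \|D\|_2 + \|L\|_2 \leq (1+h\nu)+1 \leq 2.5$ under the hypothesis $h\nu\leq 0.5$. The real work is bounding $\|\tilde{\Am}_b^{-1}\|_2$, and the bound must come out differently in the two regimes of $\|B\|_2$ where $B:=(\Id-h\Am)^{-1}$.

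Next, I would invert $\tilde{\Am}_b$ explicitly from the forward-substitution recursion: solving $\tilde{\Am}_b\mathbf{y} = \mathbf{z}$ block-by-block yields $y_1 = z_1$ and $y_k = B^{k-1}z_1 + \sum_{j=2}^{k} B^{k-j+1} z_j$ for $k\geq 2$, so the nonzero block entries of $\tilde{\Am}_b^{-1}$ are $B^{k-j+1}$ (for $2\leq j\leq k$), $B^{k-1}$ (for $j=1$, $k\geq 2$), and $\Id$ at $(1,1)$. I would then use the standard reduction that the spectral norm of a block matrix is controlled by the spectral norm of the scalar matrix $\hat{M}$ of block-norms: $\|\tilde{\Am}_b^{-1}\|_2 \leq \|\hat{M}\|_2 \leq \sqrt{\|\hat{M}\|_1\|\hat{M}\|_\infty}$. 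The maximum row and column sums of $\hat{M}$ are both bounded by $1 + \sum_{m=1}^{M-1}\|B^m\|_2$, so the remaining task is to control this sum.

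In Case 1 ($\|B\|_2\leq 1$), every power satisfies $\|B^m\|_2\leq 1$, so the row/column sum bound gives $\|\tilde{\Am}_b^{-1}\|_2\leq M = T/h+1$ and hence $\kappa(\tilde{\Am}_b)\leq 2.5(T/h+1)$, which is the first claim. In Case 2 ($\|B\|_2>1$), I would invoke the resolvent integral $B^j=\frac{1}{(j-1)!}\int_0^\infty t^{j-1}e^{-t}\,e^{th\Am}\,dt$, valid because $\mathrm{Re}(1-h\lambda_i)\geq 1$ under Assumption \ref{assumption1}, and combine it with the elementary estimate $\|e^{th\Am}\|_2\leq e^{th\nu}$ to obtain $\|B^j\|_2\leq(1-h\nu)^{-j}$. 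Using the convexity inequality $(1-x)^{-1}\leq e^{2x}$ on $[0,1/2]$, this upgrades to $\|B^j\|_2\leq e^{2jh\nu}$, and the geometric-series telescoping yields $\sum_{m=1}^{M-1}\|B^m\|_2 \leq ((1-h\nu)^{-(M-1)}-1)/(h\nu) \leq e^{2\nu T}/(h\nu)$. Combining with the block-norm reduction and Step 1 then gives $\kappa(\tilde{\Am}_b)\leq 2.5\cdot 2e^{2\nu T}/(h\nu) = 5e^{2\nu T}/(h\nu)$.

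The main obstacle is the non-normal regime: the spectral radius of $B$ is at most one under Assumption \ref{assumption1}, but this does not transfer to $\|B^j\|_2$ without extra structure, so one cannot argue directly from eigenvalues. The integral representation is the key tool because it bypasses diagonalization and reduces the control of $\|B^j\|_2$ to the scalar integral $\int_0^\infty t^{j-1}e^{-t(1-h\nu)}\,dt$; the assumption $h\nu\leq 1/2$ is used precisely to guarantee integrability and the sharp constant $(1-h\nu)^{-j}$. A second subtle point is that a naive termwise bound $\|B^m\|_2\leq e^{2\nu T}$ followed by summing would give an extra factor of $M$; it is the telescoping of the geometric series that produces the $1/(h\nu)$ factor in the second bound and keeps it independent of $M$.
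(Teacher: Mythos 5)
Your proof is correct and follows essentially the same route as the paper: bound $\|\tilde{\Am}_b\|_2\leq 2+h\nu\leq 2.5$, write down the explicit block inverse, and bound $\|\tilde{\Am}_b^{-1}\|_2$ by summing block norms, splitting into the two cases according to $\|(\Id-h\Am)^{-1}\|_2$. The only real difference is your resolvent-integral representation of $B^j$, which is heavier machinery than needed: the Neumann series already gives $\|B\|_2\leq (1-h\nu)^{-1}$ for $h\nu<1/2$ (this is what the paper uses), and submultiplicativity then yields $\|B^j\|_2\leq (1-h\nu)^{-j}$ directly, so the non-normality "obstacle" you flag is not actually an obstacle here.
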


\begin{restatable}{lemma}{eulerexb}\label{lem:eulerexb}
Let $\epsilon>0$ be the desired error, the one can choose the step size $h$ in implicit Euler scheme (\ref{eq:eulerb}), such that
\begin{equation}\label{eq:errimp1}
\||\uv_c\ra-|\tilde{\uv}_b\ra\|\leq \epsilon,
\end{equation}
where, $\uv_c$ is as defined in (\ref{eq:uc})  with this selected $h$ and $|\tilde{\uv}_b\ra$ is solution of the system (\ref{eq:linqpb}).
\end{restatable}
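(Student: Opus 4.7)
The plan is to mirror the strategy used for \Cref{lem:eulerex} in the explicit case, but with the amplification factor for the implicit update handled through \Cref{lem:kappab}. Broadly, I would proceed in three stages: (i) derive a local truncation bound for the backward Euler scheme applied to the IVP (\ref{eq:qdc}), (ii) iterate that bound to obtain a global error estimate on the unnormalized solution vector $\uv_c - \tilde{\uv}_b$, and (iii) convert to a bound on the normalized quantum states $|\uv_c\ra$ and $|\tilde{\uv}_b\ra$ by the standard inequality $\||\uv_c\ra-|\tilde{\uv}_b\ra\|\leq 2\|\uv_c-\tilde{\uv}_b\|/\|\uv_c\|$, finally selecting $h$ small enough that the resulting expression is at most $\epsilon$.

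For stage (i), I would Taylor expand $\uv(t_{k+1})$ around $t_{k+1}$ to write $\uv(t_k)=\uv(t_{k+1})-h\dot{\uv}(t_{k+1})+\tfrac{h^2}{2}\ddot{\uv}(\xi_k)$ and substitute $\dot{\uv}=\Am\uv+\bv$, so that the exact solution satisfies $(\Id-h\Am)\uv(t_{k+1})=\uv(t_k)+h\bv(t_{k+1})+\tfrac{h^2}{2}\ddot{\uv}(\xi_k)$. Comparing with the scheme (\ref{eq:eulerb}), which uses $\bv^k=\bv(t_k)$, the local truncation vector becomes $\tauv^k=h(\bv(t_{k+1})-\bv(t_k))+\tfrac{h^2}{2}\ddot{\uv}(\xi_k)$. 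Under \Cref{assumption1}.A2 the first piece is $O(h^2)$ through $\|\dot{\bv}\|_\infty$, and since $\ddot{\uv}=\Am\dot{\uv}+\dot{\bv}=\Am^2\uv+\Am\bv+\dot{\bv}$ is uniformly bounded on $[0,T]$ by A1–A2, the second piece is $O(h^2)$ as well. Writing $\ev^k=\uv(t_k)-\uvh_b^k$ I then get the recursion $(\Id-h\Am)\ev^{k+1}=\ev^k+\tauv^k$, i.e. $\ev^{k+1}=(\Id-h\Am)^{-1}\ev^k+(\Id-h\Am)^{-1}\tauv^k$, with $\ev^1=\Zev$.

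Stage (ii) splits along the two cases identified in \Cref{lem:kappab}. When $\|(\Id-h\Am)^{-1}\|\leq 1$, iterating the recursion gives $\|\ev^k\|\leq\sum_{j=1}^{k-1}\|\tauv^j\|\leq (k-1)C_1 h^2$ for some constant $C_1$ depending on $\|\dot{\bv}\|_\infty$, $\nu$, $\|\uv_0\|$ and $\|\bv\|_\infty$; this gives $\|\ev^k\|\leq C_1 T h$. When $\|(\Id-h\Am)^{-1}\|>1$, I would follow the same reasoning used to obtain (\ref{eq:kappab2}) by bounding $\|(\Id-h\Am)^{-k}\|\leq C_2 e^{2\nu T}$ uniformly in $k\leq\tsteps$ via the diagonalization of $\Am$ and the fact that $|1-h\lambda_i|\geq 1$ for $\Re(\lambda_i)<0$, so that a discrete Gr\"onwall argument yields $\|\ev^k\|\leq C_2 e^{2\nu T} T h$. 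Summing gives $\|\uv_c-\tilde{\uv}_b\|^2=\sum_{k=1}^{\tsteps}\|\ev^k\|^2\leq \tsteps \cdot (C_\star T h)^2 = C_\star^2 T^3 h$ with a constant $C_\star$ absorbing the two cases, hence $\|\uv_c-\tilde{\uv}_b\|\leq C_\star T^{3/2}h^{1/2}$.

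For stage (iii), using $\||\uv_c\ra-|\tilde{\uv}_b\ra\|\leq 2\|\uv_c-\tilde{\uv}_b\|/\|\uv_c\|$ and the hypothesis $h\nu\leq 1/2$ from \Cref{lem:kappab}, I would finally choose
\[
h\leq\min\Bigl\{\frac{1}{2\nu},\ \frac{\|\uv_c\|^2\epsilon^2}{4C_\star^2 T^3}\Bigr\},
\]
which gives $\||\uv_c\ra-|\tilde{\uv}_b\ra\|\leq\epsilon$ as required. The main obstacle I anticipate is the second case of stage (ii): getting a clean, $M$-independent bound on $\|(\Id-h\Am)^{-k}\|$ for general diagonalizable (non-normal) $\Am$ without the similarity-transform condition number blowing up. I expect to handle it exactly as \Cref{lem:kappab} does, by exploiting $\Re(\lambda_i)<0$ to show $\|(\Id-h\Am)^{-1}\|\leq 1+O(h\nu)$ and then using $(1+h\nu)^{\tsteps}\leq e^{\nu T}$, so that the extra exponential factor is exactly the one appearing in (\ref{eq:kappab2}) and does not interfere with the final choice of $h$.
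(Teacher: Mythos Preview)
Your proposal is correct and follows essentially the same approach as the paper: derive the backward-Euler error recursion $e_{k+1}\le\|(\Id-h\Am)^{-1}\|(e_k+\|\tauv_k\|)$, split on whether $\|(\Id-h\Am)^{-1}\|\le 1$, sum over $k$, and pass to normalized states via $\||\uv_c\ra-|\tilde{\uv}_b\ra\|\le 2\|\uv_c-\tilde{\uv}_b\|/\|\uv_c\|$. The only substantive difference is that the paper keeps the two cases separate all the way to the choice of $h$ (Case~I gives $h\lesssim(\epsilon\|\uv_c\|)^2/T^3$, Case~II gives $h\lesssim(\epsilon\|\uv_c\|)^2/T$), because in Case~II it uses the geometric-sum bound $\sum_j(1+2h\nu)^j\le e^{2\nu T}/(2h\nu)$ to get $\|\ev^k\|\le C_2 h$ rather than your coarser $\|\ev^k\|\le C_2 e^{2\nu T}Th$; this sharper $T$-dependence in Case~II is what feeds into the two distinct complexity estimates in \Cref{thm:errback}, so you should retain the case distinction rather than merging into a single $C_\star T^{3/2}h^{1/2}$ bound.
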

For the proofs of above two lemmas see~\Cref{sec:proof_lem5,sec:proof_lem6}.

\begin{theorem}\label{thm:errback}
Consider the IVP (\ref{eq:qdc}) under the Assumptions \ref{assumption1}.  Let $0<\epsilon<1$ be the desired solution accuracy, then one can choose the step size $h$ in the forward Euler scheme (\ref{eq:eulerb}), and $\gamma$ the stopping threshold in VQLS algorithm (\ref{algo:VQLS}) such that
\begin{equation}\label{eq:errimp2}
\| |\uv_c\ra-|\uvo_b\ra\|\leq \epsilon,
\end{equation}
where, $|\uvo_b\ra=\Vm(\thetav_*)|\Zev\ra$ is solution generated by applying VQLS to the linear system (\ref{eq:linqpb}). Furthermore, under~\Cref{assumption2} and 
\begin{itemize}
    \item $\|(\Id-h\Am)^{-1}\|\leq 1$, the query complexity scales as 
    \begin{equation}
\mathcal{O}\left(\log^{8.5} \left(\frac{\nx T^4}{\|\uv_c\|^2\epsilon^2}\right)\frac{T^3}{\|\uv_c\|^2\epsilon^2}\log(1/\epsilon)\right).
\end{equation}
    \item $\|(\Id-h\Am)^{-1}\|> 1$, the query complexity scales as 
    \begin{equation}
 \mathcal{O}\left(\log^{8.5} \left(\frac{\nx T^2}{\|\uv_c\|^2\epsilon^2}\right)\frac{T e^{2\nu T}}{\|\uv_c\|^2\epsilon^2}\log(1/\epsilon)\right).
\end{equation}
\end{itemize}

\end{theorem}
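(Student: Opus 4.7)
The plan is to mirror the proof of \Cref{thm:errfor} but substitute the backward-Euler analogs, namely \Cref{lem:eulerexb} for the discretization error and \Cref{lem:kappab} for the condition number bound. First I would invoke \Cref{lem:eulerexb} to choose a step size $h$ (also imposing $h\nu \leq 0.5$ so that the hypothesis of \Cref{lem:kappab} is satisfied) such that $\||\uv_c\ra - |\tilde{\uv}_b\ra\| \leq \epsilon/2$; as in the forward case, this forces $h$ to scale like $\mathcal{O}(\epsilon^2 \|\uv_c\|^2 / T)$ up to constants coming from the backward-Euler local truncation error.

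Next I would select the VQLS stopping threshold $\gamma$ so that $C_g \leq \gamma$ implies $\rho^2(|\tilde{\uv}_b\ra, |\uvo_b\ra) \leq \epsilon^2/4$, which via \Cref{eq:relnorms1} yields $\||\tilde{\uv}_b\ra - |\uvo_b\ra\| \leq \epsilon/2$. Exactly as in \Cref{eq:gamma}, one takes
\begin{equation*}
\gamma = \frac{1-(1-\epsilon^2/8)^2}{\kappa(\tilde{\Am}_b)^2 \, \log(\nx \tsteps) \, \|\tilde{\Am}_b\|},
\end{equation*}
with $\kappa(\tilde{\Am}_b)$ bounded by \Cref{lem:kappab}. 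A triangle inequality then gives $\||\uv_c\ra - |\uvo_b\ra\| \leq \epsilon$, establishing \Cref{eq:errimp2}.

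For the complexity estimates, \Cref{assumption2} gives $\mathcal{O}(\log^{8.5}(\nx\tsteps)\,\kappa(\tilde{\Am}_b)\,\log(1/\epsilon))$ with $\tsteps = T/h + 1$. The two cases of \Cref{lem:kappab} then produce two distinct scalings. In the case $\|(\Id - h\Am)^{-1}\| \leq 1$, one has $\kappa(\tilde{\Am}_b) \leq 2.5(T/h + 1)$, which is linear in $T/h$; substituting $h = \Theta(\|\uv_c\|^2 \epsilon^2 / T)$ gives an extra factor of $T^2/(\|\uv_c\|^2 \epsilon^2)$, so $\kappa \cdot T/h = \mathcal{O}(T^3/(\|\uv_c\|^2 \epsilon^2))$, matching the first complexity bound. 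In the case $\|(\Id - h\Am)^{-1}\| > 1$, one has $\kappa(\tilde{\Am}_b) \leq 5 e^{2\nu T}/(h\nu)$, and substituting the chosen $h$ yields $\kappa = \mathcal{O}(T e^{2\nu T}/(\|\uv_c\|^2 \epsilon^2))$, which matches the second complexity bound after simplifying the $\log^{8.5}$ factor as in (\ref{eq:complfe}).

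The main obstacle is purely bookkeeping: carefully tracking which factors of $h$, $\nu$, $T$, and $\|\uv_c\|$ cancel after substituting the chosen $h$ into both the condition number and the log-factor $\log(\nx T/h)$, and verifying that the simplifications in the two regimes produce exactly the stated complexity expressions. No new ideas beyond those in \Cref{thm:errfor} and \Cref{thm:normat} are needed — in particular, the implicit-Euler case with $\|(\Id - h\Am)^{-1}\| \leq 1$ parallels the normal-matrix analysis of \Cref{thm:normat} (polynomial scaling in $T$), while the case $\|(\Id - h\Am)^{-1}\| > 1$ parallels the general forward-Euler analysis (exponential scaling in $T$).
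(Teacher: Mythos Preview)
Your approach is exactly the paper's: it too simply says to use \Cref{lem:kappab,lem:eulerexb} in place of their forward-Euler analogs and repeat the steps of \Cref{thm:errfor,thm:normat}. One small bookkeeping slip: you state a single step-size scaling $h=\Theta(\|\uv_c\|^2\epsilon^2/T)$, but \Cref{lem:eulerexb} actually gives \emph{two} different choices of $h$ — in Case~I ($\|(\Id-h\Am)^{-1}\|\leq 1$) the error bound $e_k\leq C_1(k-1)h^2$ forces $h=\Theta(\|\uv_c\|^2\epsilon^2/T^3)$ (paralleling \Cref{thm:normat}), while only in Case~II does $h=\Theta(\|\uv_c\|^2\epsilon^2/T)$; with the correct Case~I step size, $\kappa\approx T/h$ alone (not $\kappa\cdot T/h$) already yields the $T^3/(\|\uv_c\|^2\epsilon^2)$ factor and the $\log^{8.5}(\nx T^4/\cdots)$ argument.
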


\begin{proof}
By using~\Cref{lem:kappab,lem:eulerexb}, and following similar steps as in~\Cref{thm:errfor,thm:normat}, the result follows. 
\end{proof}

\begin{remark}
Given the update eqn. (\ref{eq:eulerb}),  the computational complexity of classical implicit Euler method can be estimated as
\begin{equation*}
O(\nx^2M)=O(\nx^2T/h),
\end{equation*}
where, we have assumed a GMRES based matrix inversion approach. Following similar analysis as in~\Cref{sec:cmpexp}, we  can conclude that BVQPCO will scale polylogarithmically with $N$ , under~\Cref{assumption2}, compared to classical approach and thus could provide a significant computational advantage.    
\end{remark}

\section{Numerical Results}\label{sec:num}

In this section, we demonstrate the application of our BVQPCO framework to solve the heat transfer optimization problem described in ~\Cref{sec:heatqen}. The problem is discretized on a grid of size $N=8$ and $\Delta x = \frac{l}{N-1} = 1/7$. We consider the implicit time discretization given in~\Cref{sec:impED} with $\Delta t = 0.25 s$ and $M = 4$ time steps. The following values of the parameters are used: $q_0 = 50.0, k=1.0, l_{\min} = 2.0, l_{\max} = 4.0, \alpha_{\min}=0.2, \alpha_{\max}=0.3, w_1 = 10.0, w_2 = 1.0, w_3 =5.0$.   There is a gradient in the initial temperature distribution across the grid as shown in~\Cref{fig:vqls_conv} at $t=0s$. This can be achieved by choosing the rotation angles in the state preparation circuit shown in~\Cref{fig:stateprep}.

The algorithm was implemented using the PennyLane software framework for quantum computing. PennyLane's \texttt{lightning.qubit} device was used as the simulator, \texttt{autograd} interface was used as the automatic differentiation library and the adjoint method was used for gradient computations. We do not consider any device or measurement noise throughout this study. There are several choices of optimizers which can be used with VQLS as discussed in~\Cref{sec:opt}. We experimented with the Adagrad optimizer from PennyLane with step size of $0.8$ in~\Cref{sec:conv_vqls} and COBYLA optimizer from SciPy~\cite{2020SciPy-NMeth} in~\Cref{sec:conv_design}, both of which worked well in our application. The convergence criteria for VQLS was set to a maximum of $150$ iterations. The optimizer is initialized randomly with samples taken from the Beta distribution with shape parameters $\alpha=\beta=0.5$. 

The implementation of Bayesian optimization from Ax, BoTorch~\cite{balandat2020botorch} was used with Ax's Loop API. Sobol sequences are used to generate ten initial samples in the parameter space and twenty BO iterations are performed. Fixed noise Gaussian process (where noise is provided as an input) is used as the surrogate model and qNoisy EI is used as the acquisition function. While we do not consider device or measurement noise in this work, we  use qNoisy EI model to capture the error in estimation of design cost from the inner VQLS optimization loop. We input standard error to the BO model as $\sigma_{\epsilon}(l,\alpha) \approx c\sqrt{C_g(l,\alpha) \log N}$ using~\Cref{eq:errbounds1}, where $c$ is a fixed constant we selected empirically. 

\subsection{Convergence of VQLS}
\label{sec:conv_vqls}
 \begin{figure}[htbp!]
\centering
\begin{subfigure}[b]{0.48\linewidth}
        \centering
    \includegraphics[scale=0.25]{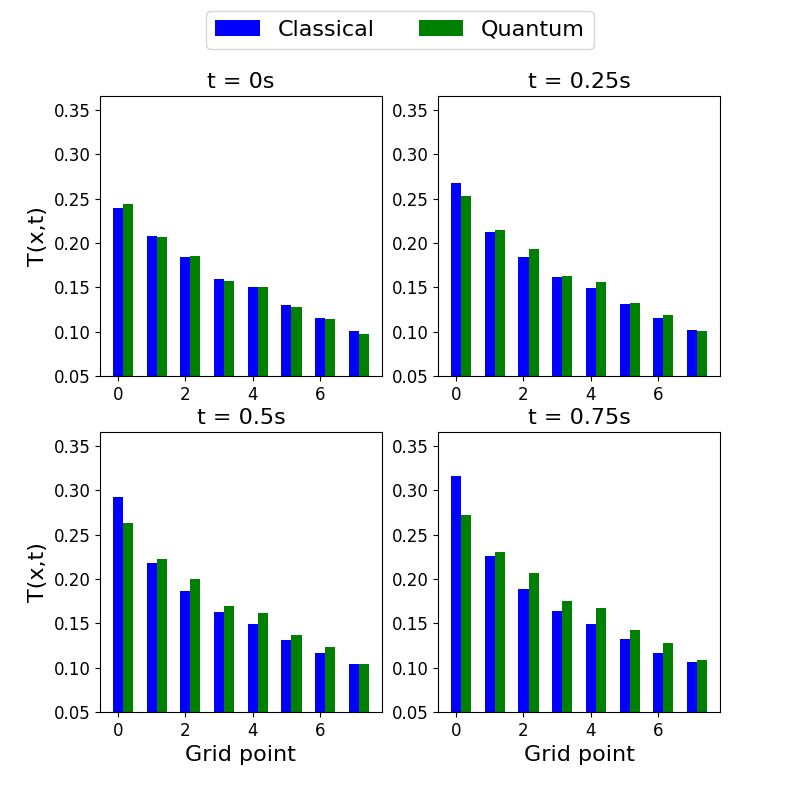}
    ~\hspace{-12pt}
    \raisebox{5pt}{
    \includegraphics[scale=0.25]{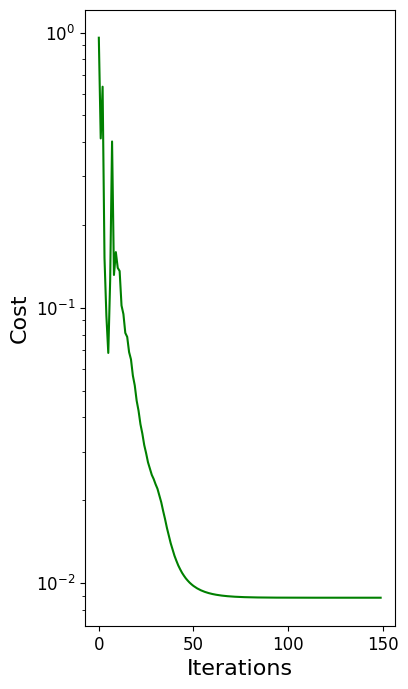}}
        \caption{$l=4.0$, $\alpha=0.2$}
    \end{subfigure}
    ~
    \begin{subfigure}[b]{0.48\linewidth}
        \centering
        \includegraphics[scale=0.25]{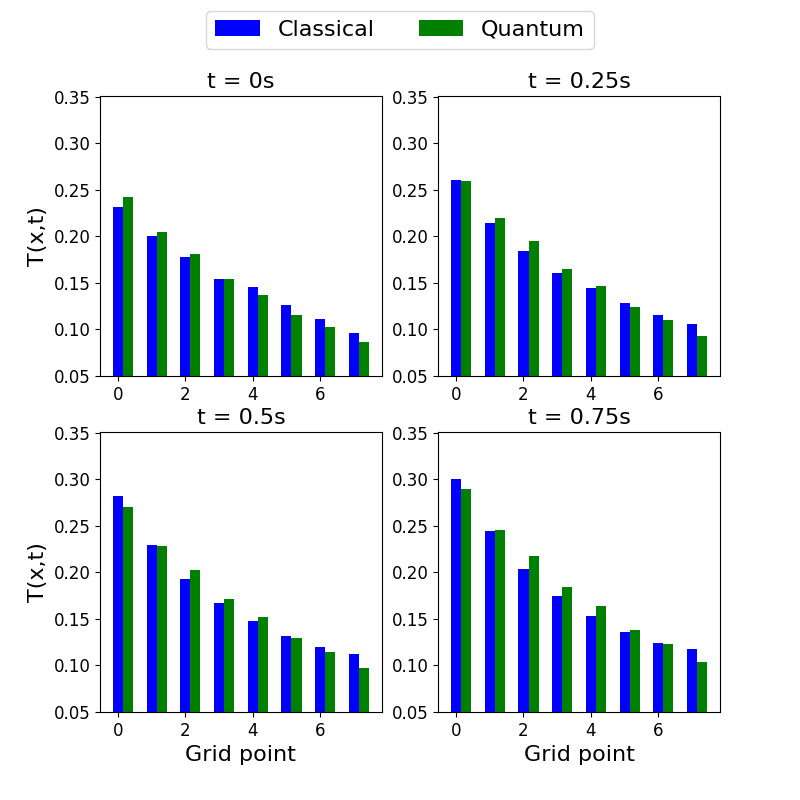}
        ~\hspace{-12pt}
         \raisebox{5pt}{\includegraphics[scale=0.25]{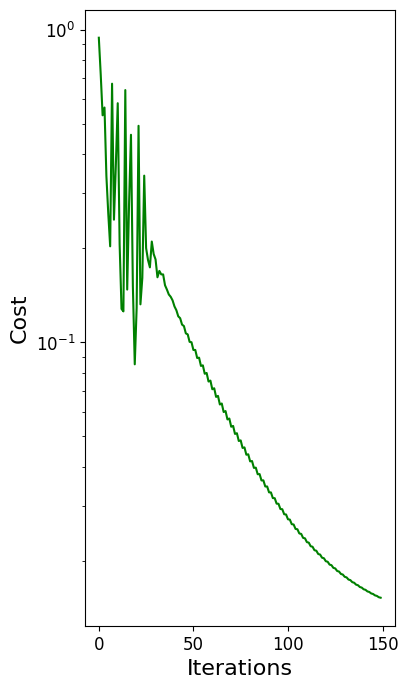}}
        \caption{$l=2.0$, $\alpha=0.3$}
    \end{subfigure}
    \caption{Convergence of VQLS for two different values of the design parameters $l, \alpha$. The global cost converges to less than $2 \times 10^{-2}$ within a maximum of 150 iterations.}
    \label{fig:vqls_conv}
\end{figure}

\begin{figure}[htbp!]
\centering
\includegraphics[scale=0.6]{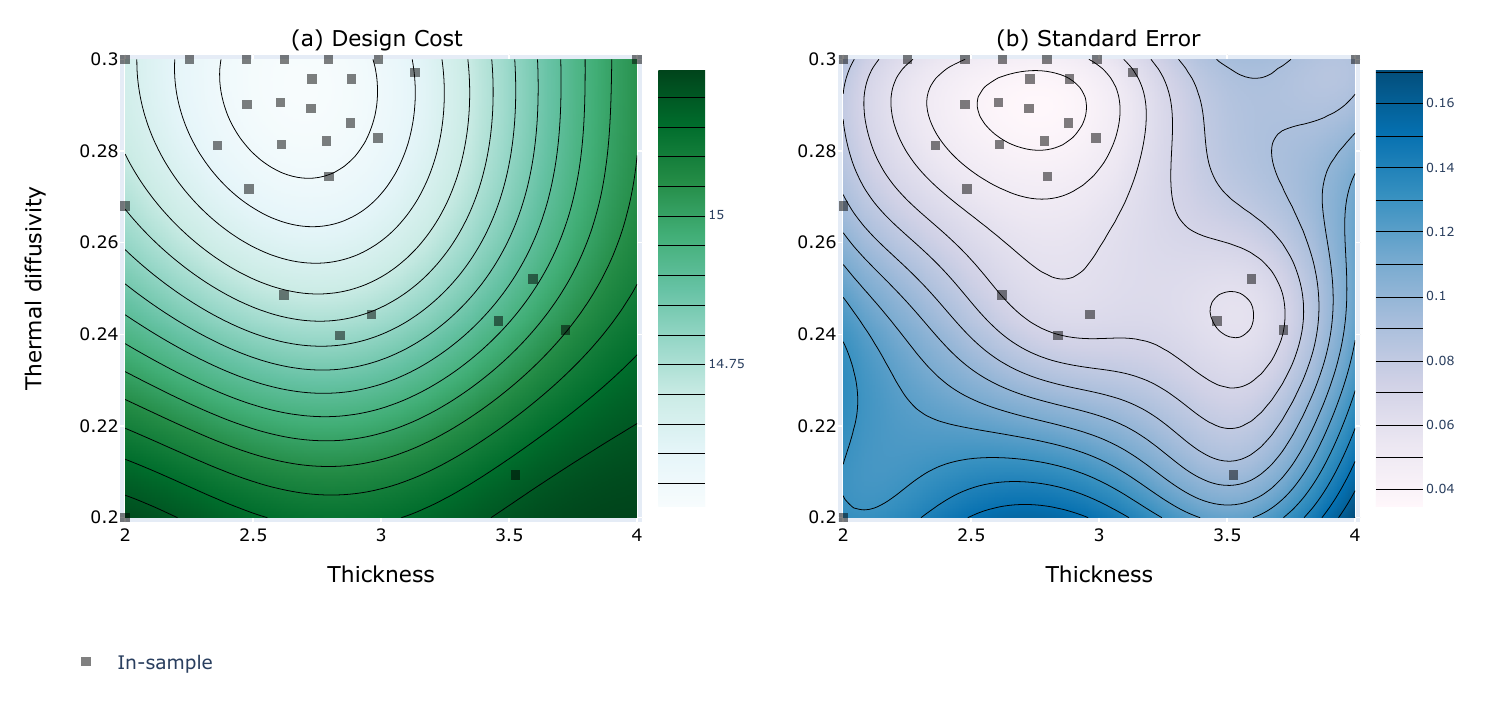}

\includegraphics[scale=0.6]{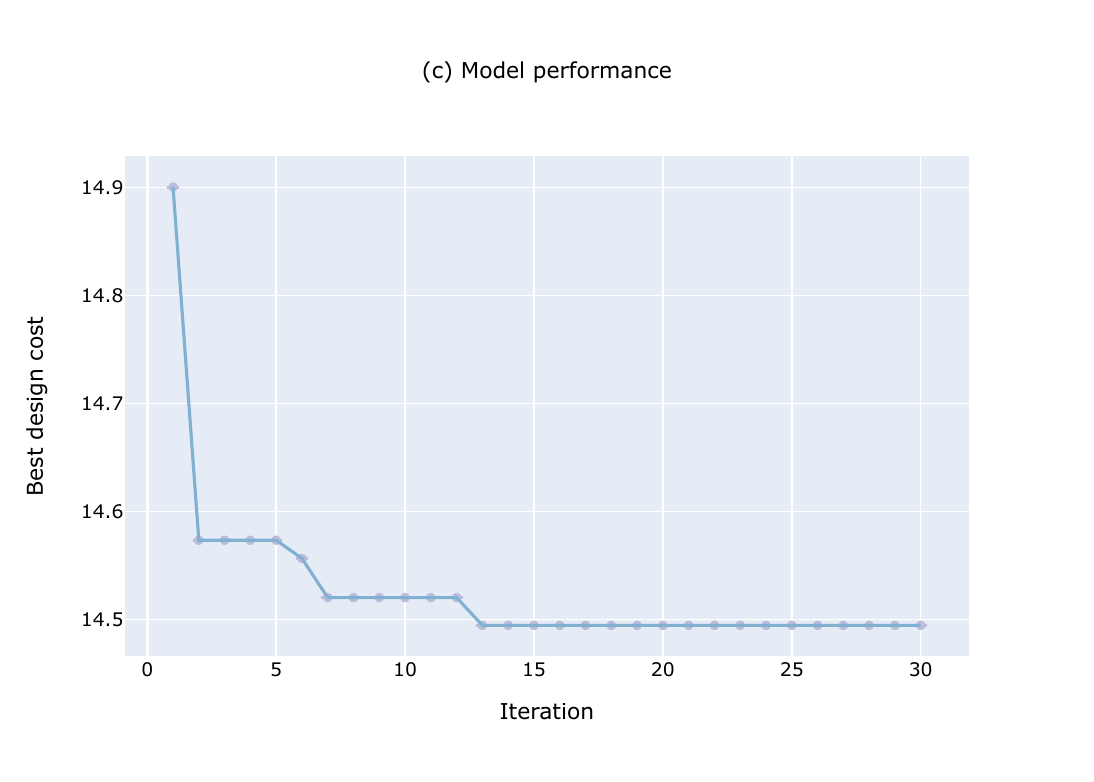}
\caption{(a) Contour plot illustrating the solution space of the design optimization problem. (b) Contour plot of the standard error in the design cost. c) Performance of the BO model with iteration count.}
\label{fig:bo_conv}
\end{figure}

First, we study the convergence of VQLS for different values of the design parameters $l, \alpha$. The unitary basis in the LCU decomposition of the $\Tilde{\Am}_{b1}$ and $\Tilde{\Am}_{b2}$  are computed once and the non-zero coefficients are updated using~\Cref{eq:Aim} for different values of the design parameters. The convergence of VQLS for two set of parameters values is shown in~\Cref{fig:vqls_conv}. The quantum solution is compared with the classical solution obtained by solving the linear system~(\ref{eq:linsysimp}). Qualitative comparison of the two solutions indicate that VQLS produces reasonable solutions to the problem. In addition, the Adagrad optimizer converges asymptotically to a global cost of less than $2\times 10^{-2}$. 

\subsection{Convergence of the BVQPCO framework}
\label{sec:conv_design}

Next, we study the performance of the BVQPCO framework on the heat transfer optimization problem (\ref{eq:optprob}).  The true optimum of the problem is found classically to be $l^* = 2.90 $, $\alpha^* = 0.278. $ Our algorithm returns the optimal solution as $l=2.73$ and $\alpha=0.285$ with less than $0.7\%$ error in the design cost.~\Cref{fig:bo_conv} shows the convergence of the BO loop. The contour plot in~\Cref{fig:bo_conv} (a) shows the design cost landscape in the parameter space and the sampled points. Note that more points are concentrated around the optimal point indicating convergence. The standard error shown in~\Cref{fig:bo_conv} (b) is a combination of the error input to the noisy EI acquisition function and the confidence of the model in predicting the design cost at a given point in the parameter space. Finally, ~\Cref{fig:bo_conv} (c) shows the performance of the model as a function of the iteration count. The best design cost found until a given iteration is plotted on the y-axis. These results demonstrate the overall accuracy and convergence of our BVQPCO framework.

\section{Conclusion}\label{sec:conc}
In this paper we presented BVQPCO, a novel variational quantum framework for PDE constrained design optimization problems. The proposed framework utilizes the VQLS algorithm and a black box optimizer nested in bi-level optimization structure. We presented a detailed computational error and complexity analysis to establish potential benefits of our framework over classical techniques. We demonstrated the framework on a heat transfer optimization problem and presented simulation results using Bayesian optimization as the black box optimizer. The analysis and results demonstrate the correctness of our framework for solving simulation-based design optimization problems. To the best of our knowledge, this is the first work of its kind in designing a bi-level variational quantum framework for solving PDE constrained design optimization problems.  

Future work will involve studying and mitigating the effect of device/ measurement noise and making the BVQPCO framework robust. It will also be worthwhile to study the scalability of the framework by applying it to larger problem sizes and implementing on quantum hardware. One of the bottlenecks in the BVQPCO framework is the computation of VQLS cost function which scales polynomially with the number of LCU terms. We can employ a more efficient tensor product decomposition~\cite{gnanasekaran2024efficient}, which exploits the underlying structure and sparsity of matrices such as arising from PDE discretizations, to provide an exponential speed-up in the VQLS cost evaluation. An adaptive multigrid strategy to accelerate convergence of variational quantum algorithms was recently proposed in~\cite{pool2024nonlinear}. Such techniques can be adapted in the context of VQLS to further improve our framework and scale to larger problem sizes. By leveraging Carleman linearization based embedding approach~\cite{liu2021efficient,surana2024efficient} we also plan to extend our framework for nonlinear PDE constrained optimization problems. Finally, adapting the proposed framework for fault tolerant quantum computing setting is also an avenue for future research.
\section{Acknowledgments}
This research was developed with funding from the Defense Advanced Research Projects Agency (DARPA). The views, opinions, and/or findings expressed are those of the author(s) and should not be interpreted as representing the official views or policies of the Department of Defense or the U.S. Government.

\bibliographystyle{unsrt}
\bibliography{references}

\begin{thebibliography}{10}

\bibitem{app1}
Alberto Peruzzo, Jarrod McClean, Peter Shadbolt, Man-Hong Yung, Xiao-Qi Zhou, Peter~J Love, Al{\'a}n Aspuru-Guzik, and Jeremy~L O’brien.
\newblock A variational eigenvalue solver on a photonic quantum processor.
\newblock {\em Nature communications}, 5(1):4213, 2014.

\bibitem{app2}
Ken~M Nakanishi, Kosuke Mitarai, and Keisuke Fujii.
\newblock Subspace-search variational quantum eigensolver for excited states.
\newblock {\em Physical Review Research}, 1(3):033062, 2019.

\bibitem{app3}
Tyson Jones, Suguru Endo, Sam McArdle, Xiao Yuan, and Simon~C Benjamin.
\newblock Variational quantum algorithms for discovering hamiltonian spectra.
\newblock {\em Physical Review A}, 99(6):062304, 2019.

\bibitem{app4}
Tyson Jones, Suguru Endo, Sam McArdle, Xiao Yuan, and Simon~C Benjamin.
\newblock Variational quantum algorithms for discovering hamiltonian spectra.
\newblock {\em Physical Review A}, 99(6):062304, 2019.

\bibitem{ds1}
Ying Li and Simon~C. Benjamin.
\newblock Efficient variational quantum simulator incorporating active error minimization.
\newblock {\em Phys. Rev. X}, 7:021050, Jun 2017.

\bibitem{ds2}
Cristina Cirstoiu, Zoe Holmes, Joseph Iosue, Lukasz Cincio, Patrick~J Coles, and Andrew Sornborger.
\newblock Variational fast forwarding for quantum simulation beyond the coherence time.
\newblock {\em npj Quantum Information}, 6(1):82, 2020.

\bibitem{bp_an1}
Edward Farhi, Jeffrey Goldstone, and Sam Gutmann.
\newblock A quantum approximate optimization algorithm, 2014.

\bibitem{c2}
Edward Farhi and Aram~W Harrow.
\newblock Quantum supremacy through the quantum approximate optimization algorithm.
\newblock {\em arXiv preprint arXiv:1602.07674}, 2016.

\bibitem{VQLS}
Carlos Bravo-Prieto, Ryan LaRose, Marco Cerezo, Yigit Subasi, Lukasz Cincio, and Patrick~J Coles.
\newblock Variational quantum linear solver.
\newblock {\em arXiv preprint arXiv:1909.05820}, 2019.

\bibitem{ls2}
Xiaosi Xu, Jinzhao Sun, Suguru Endo, Ying Li, Simon~C. Benjamin, and Xiao Yuan.
\newblock Variational algorithms for linear algebra.
\newblock {\em Science Bulletin}, 66(21):2181–2188, November 2021.

\bibitem{ls3}
Hsin-Yuan Huang, Kishor Bharti, and Patrick Rebentrost.
\newblock Near-term quantum algorithms for linear systems of equations.
\newblock {\em arXiv preprint arXiv:1909.07344}, 2019.

\bibitem{f1}
Eric Anschuetz, Jonathan Olson, Al{\'a}n Aspuru-Guzik, and Yudong Cao.
\newblock Variational quantum factoring.
\newblock In {\em Quantum Technology and Optimization Problems: First International Workshop, QTOP 2019, Munich, Germany, March 18, 2019, Proceedings 1}, pages 74--85. Springer, 2019.

\bibitem{pca1}
Seth Lloyd, Masoud Mohseni, and Patrick Rebentrost.
\newblock Quantum principal component analysis.
\newblock {\em Nature Physics}, 10(9):631--633, 2014.

\bibitem{pca2}
Ryan LaRose, Arkin Tikku, {\'E}tude O’Neel-Judy, Lukasz Cincio, and Patrick~J Coles.
\newblock Variational quantum state diagonalization.
\newblock {\em npj Quantum Information}, 5(1):57, 2019.

\bibitem{pca3}
M~Cerezo, Kunal Sharma, Andrew Arrasmith, and Patrick~J Coles.
\newblock Variational quantum state eigensolver.
\newblock {\em npj Quantum Information}, 8(1):113, 2022.

\bibitem{biamonte2017quantum}
Jacob Biamonte, Peter Wittek, Nicola Pancotti, Patrick Rebentrost, Nathan Wiebe, and Seth Lloyd.
\newblock Quantum machine learning.
\newblock {\em Nature}, 549(7671):195--202, 2017.

\bibitem{liu2022application}
YY~Liu, Zhen Chen, Chang Shu, Siou~Chye Chew, Boo~Cheong Khoo, Xiang Zhao, and YD~Cui.
\newblock Application of a variational hybrid quantum-classical algorithm to heat conduction equation and analysis of time complexity.
\newblock {\em Physics of Fluids}, 34(11), 2022.

\bibitem{demirdjian2022variational}
Reuben Demirdjian, Daniel Gunlycke, Carolyn~A Reynolds, James~D Doyle, and Sergio Tafur.
\newblock Variational quantum solutions to the advection--diffusion equation for applications in fluid dynamics.
\newblock {\em Quantum Information Processing}, 21(9):322, 2022.

\bibitem{liu2021variational}
Hai-Ling Liu, Yu-Sen Wu, Lin-Chun Wan, Shi-Jie Pan, Su-Juan Qin, Fei Gao, and Qiao-Yan Wen.
\newblock Variational quantum algorithm for the poisson equation.
\newblock {\em Physical Review A}, 104(2):022418, 2021.

\bibitem{leong2022variational}
Fong~Yew Leong, Wei-Bin Ewe, and Dax~Enshan Koh.
\newblock Variational quantum evolution equation solver.
\newblock {\em Scientific Reports}, 12(1):10817, 2022.

\bibitem{berry2017quantum}
Dominic~W Berry, Andrew~M Childs, Aaron Ostrander, and Guoming Wang.
\newblock Quantum algorithm for linear differential equations with exponentially improved dependence on precision.
\newblock {\em Communications in Mathematical Physics}, 356(3):1057--1081, 2017.

\bibitem{li2023potential}
Xiangyu Li, Xiaolong Yin, Nathan Wiebe, Jaehun Chun, Gregory~K Schenter, Margaret~S Cheung, and Johannes M{\"u}lmenst{\"a}dt.
\newblock Potential quantum advantage for simulation of fluid dynamics.
\newblock {\em arXiv preprint arXiv:2303.16550}, 2023.

\bibitem{babbush2023exponential}
Ryan Babbush, Dominic~W Berry, Robin Kothari, Rolando~D Somma, and Nathan Wiebe.
\newblock Exponential quantum speedup in simulating coupled classical oscillators.
\newblock {\em Physical Review X}, 13(4):041041, 2023.

\bibitem{jennings2023cost}
David Jennings, Matteo Lostaglio, Robert~B Lowrie, Sam Pallister, and Andrew~T Sornborger.
\newblock The cost of solving linear differential equations on a quantum computer: fast-forwarding to explicit resource counts.
\newblock {\em arXiv preprint arXiv:2309.07881}, 2023.

\bibitem{childs2021high}
Andrew~M Childs, Jin-Peng Liu, and Aaron Ostrander.
\newblock High-precision quantum algorithms for partial differential equations.
\newblock {\em Quantum}, 5:574, 2021.

\bibitem{liu2021efficient}
Jin-Peng Liu, Herman~{\O}ie Kolden, Hari~K Krovi, Nuno~F Loureiro, Konstantina Trivisa, and Andrew~M Childs.
\newblock Efficient quantum algorithm for dissipative nonlinear differential equations.
\newblock {\em Proceedings of the National Academy of Sciences}, 118(35):e2026805118, 2021.

\bibitem{jin2022quantum}
Shi Jin, Nana Liu, and Yue Yu.
\newblock Quantum simulation of partial differential equations via {S}chrodingerisation: technical details.
\newblock {\em arXiv preprint arXiv:2212.14703}, 2022.

\bibitem{surana2024efficient}
Amit Surana, Abeynaya Gnanasekaran, and Tuhin Sahai.
\newblock An efficient quantum algorithm for simulating polynomial dynamical systems.
\newblock {\em Quantum Information Processing}, 23(3):1--22, 2024.

\bibitem{gnanasekaran2023efficient}
Abeynaya Gnanasekaran, Amit Surana, and Tuhin Sahai.
\newblock Efficient quantum algorithms for nonlinear stochastic dynamical systems.
\newblock In {\em 2023 IEEE International Conference on Quantum Computing and Engineering (QCE)}, volume~2, pages 66--75. IEEE, 2023.

\bibitem{antil2018frontiers}
Harbir Antil, Drew~P Kouri, Martin-D Lacasse, and Denis Ridzal.
\newblock {\em Frontiers in PDE-constrained Optimization}, volume 163.
\newblock Springer, 2018.

\bibitem{biegler2003large}
Lorenz~T Biegler, Omar Ghattas, Matthias Heinkenschloss, and Bart van Bloemen~Waanders.
\newblock Large-scale pde-constrained optimization: an introduction.
\newblock In {\em Large-scale PDE-constrained optimization}, pages 3--13. Springer, 2003.

\bibitem{hinze2008optimization}
Michael Hinze, Ren{\'e} Pinnau, Michael Ulbrich, and Stefan Ulbrich.
\newblock {\em Optimization with PDE constraints}, volume~23.
\newblock Springer Science \& Business Media, 2008.

\bibitem{nielsen2010quantum}
Michael~A Nielsen and Isaac~L Chuang.
\newblock {\em Quantum computation and quantum information}.
\newblock Cambridge university press, 2010.

\bibitem{de2015numerical}
Juan~Carlos De~los Reyes.
\newblock {\em Numerical PDE-constrained optimization}.
\newblock Springer, 2015.

\bibitem{jiang2020surrogate}
Ping Jiang, Qi~Zhou, Xinyu Shao, Ping Jiang, Qi~Zhou, and Xinyu Shao.
\newblock {\em Surrogate-model-based design and optimization}.
\newblock Springer, 2020.

\bibitem{audet2000surrogate}
Charles Audet, J~Denni, Douglas Moore, Andrew Booker, and Paul Frank.
\newblock A surrogate-model-based method for constrained optimization.
\newblock In {\em 8th symposium on multidisciplinary analysis and optimization}, page 4891, 2000.

\bibitem{raissi2019physics}
Physics-informed neural networks: A deep learning framework for solving forward and inverse problems involving nonlinear partial differential equations.
\newblock {\em Journal of Computational physics}, 378:686--707, 2019.

\bibitem{cai2021physics}
Shengze Cai, Zhiping Mao, Zhicheng Wang, Minglang Yin, and George~Em Karniadakis.
\newblock Physics-informed neural networks (pinns) for fluid mechanics: A review.
\newblock {\em Acta Mechanica Sinica}, 37(12):1727--1738, 2021.

\bibitem{sun2023physics}
Yubiao Sun, Ushnish Sengupta, and Matthew Juniper.
\newblock Physics-informed deep learning for simultaneous surrogate modeling and pde-constrained optimization of an airfoil geometry.
\newblock {\em Computer Methods in Applied Mechanics and Engineering}, 411:116042, 2023.

\bibitem{lu2021physics}
Lu~Lu, Raphael Pestourie, Wenjie Yao, Zhicheng Wang, Francesc Verdugo, and Steven~G Johnson.
\newblock Physics-informed neural networks with hard constraints for inverse design.
\newblock {\em SIAM Journal on Scientific Computing}, 43(6):B1105--B1132, 2021.

\bibitem{morita2022applying}
Yuki Morita, Saleh Rezaeiravesh, Narges Tabatabaei, Ricardo Vinuesa, Koji Fukagata, and Philipp Schlatter.
\newblock Applying bayesian optimization with gaussian process regression to computational fluid dynamics problems.
\newblock {\em Journal of Computational Physics}, 449:110788, 2022.

\bibitem{sarkar2019multifidelity}
Soumalya Sarkar, Sudeepta Mondal, Michael Joly, Matthew~E Lynch, Shaunak~D Bopardikar, Ranadip Acharya, and Paris Perdikaris.
\newblock Multifidelity and multiscale bayesian framework for high-dimensional engineering design and calibration.
\newblock {\em Journal of Mechanical Design}, 141(12):121001, 2019.

\bibitem{diessner2022investigating}
Mike Diessner, Joseph O'Connor, Andrew Wynn, Sylvain Laizet, Yu~Guan, Kevin Wilson, and Richard~D Whalley.
\newblock Investigating bayesian optimization for expensive-to-evaluate black box functions: Application in fluid dynamics.
\newblock {\em Frontiers in Applied Mathematics and Statistics}, 8:1076296, 2022.

\bibitem{alarie2021two}
St{\'e}phane Alarie, Charles Audet, A{\"\i}men~E Gheribi, Michael Kokkolaras, and S{\'e}bastien Le~Digabel.
\newblock Two decades of blackbox optimization applications.
\newblock {\em EURO Journal on Computational Optimization}, 9:100011, 2021.

\bibitem{dervovic2018quantum}
Danial Dervovic, Mark Herbster, Peter Mountney, Simone Severini, Na{\"\i}ri Usher, and Leonard Wossnig.
\newblock Quantum linear systems algorithms: a primer.
\newblock {\em arXiv preprint arXiv:1802.08227}, 2018.

\bibitem{harrow2009quantum}
Aram~W Harrow, Avinatan Hassidim, and Seth Lloyd.
\newblock Quantum algorithm for linear systems of equations.
\newblock {\em Physical review letters}, 103(15):150502, 2009.

\bibitem{costa2022optimal}
Pedro~CS Costa, Dong An, Yuval~R Sanders, Yuan Su, Ryan Babbush, and Dominic~W Berry.
\newblock Optimal scaling quantum linear-systems solver via discrete adiabatic theorem.
\newblock {\em PRX quantum}, 3(4):040303, 2022.

\bibitem{patil2022variational}
Hrushikesh Patil, Yulun Wang, and Predrag~S Krsti{\'c}.
\newblock Variational quantum linear solver with a dynamic ansatz.
\newblock {\em Physical Review A}, 105(1):012423, 2022.

\bibitem{lcu}
Lukas Hantzko, Lennart Binkowski, and Sabhyata Gupta.
\newblock Tensorized pauli decomposition algorithm.
\newblock {\em arXiv preprint arXiv:2310.13421}, 2023.

\bibitem{shende2005synthesis}
Vivek~V Shende, Stephen~S Bullock, and Igor~L Markov.
\newblock Synthesis of quantum logic circuits.
\newblock In {\em Proceedings of the 2005 Asia and South Pacific Design Automation Conference}, pages 272--275, 2005.

\bibitem{mottonen2004transformation}
Mikko Mottonen, Juha~J Vartiainen, Ville Bergholm, and Martti~M Salomaa.
\newblock Transformation of quantum states using uniformly controlled rotations.
\newblock {\em arXiv preprint quant-ph/0407010}, 2004.

\bibitem{sim2019expressibility}
Sukin Sim, Peter~D Johnson, and Al{\'a}n Aspuru-Guzik.
\newblock Expressibility and entangling capability of parameterized quantum circuits for hybrid quantum-classical algorithms.
\newblock {\em Advanced Quantum Technologies}, 2(12):1900070, 2019.

\bibitem{pellow2021comparison}
Aidan Pellow-Jarman, Ilya Sinayskiy, Anban Pillay, and Francesco Petruccione.
\newblock A comparison of various classical optimizers for a variational quantum linear solver.
\newblock {\em Quantum Information Processing}, 20(6):202, 2021.

\bibitem{kulshrestha2022beinit}
Ankit Kulshrestha and Ilya Safro.
\newblock Beinit: Avoiding barren plateaus in variational quantum algorithms.
\newblock In {\em 2022 IEEE international conference on quantum computing and engineering (QCE)}, pages 197--203. IEEE, 2022.

\bibitem{grant2019initialization}
Edward Grant, Leonard Wossnig, Mateusz Ostaszewski, and Marcello Benedetti.
\newblock An initialization strategy for addressing barren plateaus in parametrized quantum circuits.
\newblock {\em Quantum}, 3:214, 2019.

\bibitem{shahriari2015taking}
Bobak Shahriari, Kevin Swersky, Ziyu Wang, Ryan~P Adams, and Nando De~Freitas.
\newblock Taking the human out of the loop: A review of bayesian optimization.
\newblock {\em Proceedings of the IEEE}, 104(1):148--175, 2015.

\bibitem{frazier2018tutorial}
Peter~I Frazier.
\newblock A tutorial on bayesian optimization.
\newblock {\em arXiv preprint arXiv:1807.02811}, 2018.

\bibitem{williams2006gaussian}
Christopher~KI Williams and Carl~Edward Rasmussen.
\newblock {\em Gaussian processes for machine learning}, volume~2.
\newblock MIT press Cambridge, MA, 2006.

\bibitem{letham2019constrained}
Benjamin Letham, Brian Karrer, Guilherme Ottoni, and Eytan Bakshy.
\newblock Constrained bayesian optimization with noisy experiments.
\newblock 2019.

\bibitem{balandat2020botorch}
Maximilian Balandat, Brian Karrer, Daniel Jiang, Samuel Daulton, Ben Letham, Andrew~G Wilson, and Eytan Bakshy.
\newblock Botorch: A framework for efficient monte-carlo bayesian optimization.
\newblock {\em Advances in neural information processing systems}, 33:21524--21538, 2020.

\bibitem{wilde2011classical}
Mark~M Wilde.
\newblock From classical to quantum shannon theory.
\newblock {\em arXiv preprint arXiv:1106.1445}, 2011.

\bibitem{2020SciPy-NMeth}
Pauli Virtanen, Ralf Gommers, Travis~E. Oliphant, Matt Haberland, Tyler Reddy, David Cournapeau, Evgeni Burovski, Pearu Peterson, Warren Weckesser, Jonathan Bright, St{\'e}fan~J. {van der Walt}, Matthew Brett, Joshua Wilson, K.~Jarrod Millman, Nikolay Mayorov, Andrew R.~J. Nelson, Eric Jones, Robert Kern, Eric Larson, C~J Carey, {\.I}lhan Polat, Yu~Feng, Eric~W. Moore, Jake {VanderPlas}, Denis Laxalde, Josef Perktold, Robert Cimrman, Ian Henriksen, E.~A. Quintero, Charles~R. Harris, Anne~M. Archibald, Ant{\^o}nio~H. Ribeiro, Fabian Pedregosa, Paul {van Mulbregt}, and {SciPy 1.0 Contributors}.
\newblock {{SciPy} 1.0: Fundamental Algorithms for Scientific Computing in Python}.
\newblock {\em Nature Methods}, 17:261--272, 2020.

\bibitem{gnanasekaran2024efficient}
Abeynaya Gnanasekaran and Amit Surana.
\newblock Efficient variational quantum linear solver for structured sparse matrices.
\newblock {\em arXiv preprint arXiv:2404.16991}, 2024.

\bibitem{pool2024nonlinear}
Albert~J Pool, Alejandro~D Somoza, Conor~Mc Keever, Michael Lubasch, and Birger Horstmann.
\newblock Nonlinear dynamics as a ground-state solution on quantum computers.
\newblock {\em arXiv preprint arXiv:2403.16791}, 2024.

\bibitem{ascher1998computer}
Uri~M Ascher and Linda~R Petzold.
\newblock {\em Computer methods for ordinary differential equations and differential-algebraic equations}.
\newblock SIAM, 1998.

\end{thebibliography}

\appendix
\section{Proofs for~\NoCaseChange{\Cref{sec:erranalysis}}}

\subsection{Proof of~\Cref{lem:kappa}}
\label{sec:proof_lem3}
\hchoice*
\begin{proof}
Let $\overline{\Am}=\Id+h\Am$, and so
\begin{equation*}
\|\overline{\Am}\|_2\leq 1+h\nu,
\end{equation*}
where, $\nu=\|\Am\|_2$.  Using, a sequence of triangular inequalities and the submultiplicavity of $\|\cdot\|_2$, one can easily deduce
\begin{equation}\label{eq:normAF}
\|\tilde{\Am}_f\|_2\leq \|\Id\|_2+\|\overline{\Am}\|_2\leq 2+h\nu.
\end{equation}
It is easy to show that the inverse of $\tilde{\Am}$ is composed of following blocks
\begin{equation*}
(\tilde{\Am}_f^{-1})_{ij}=\begin{cases}
\Id & i=j\\
\mathbf{0} & j>i\\
(\overline{\Am})^{i-j} & j<i.
\end{cases}
\end{equation*}
Consequently, again applying a sequence of triangular inequalities and the submultiplicavity property leads to
\begin{equation*}
\|\tilde{\Am}_f^{-1}\|_2\leq \|\Id\|_2+\|\overline{\Am}\|_2+\|\overline{\Am}^2\|_2+\cdots \|\overline{\Am}^{\tsteps-1}\|_2\leq \frac{(1+h\nu)^{\tsteps}-1}{h\nu}.
\end{equation*}
Thus, the condition number of $\tilde{\Am}$ can be bounded as
\begin{eqnarray*}
\kappa(\tilde{\Am}_f)&=&\|\tilde{\Am}_f\|_2\|\tilde{\Am}_f^{-1}\|_2\leq (2+h\nu)\frac{(1+h\nu)^{\tsteps}-1}{h\nu}\\
&\leq& \frac{(2+h\nu)}{h\nu} (1+h\nu)e^{\nu T} \quad \bigg(\because \Big(1+\frac{a}{n}\Big)^n\leq e^a\bigg)\\
&\leq& \frac{12e^{\nu T}}{h\nu},
\end{eqnarray*}
where, we have used $T=(\tsteps-1) h$ and $h\nu\leq 2$.
\end{proof}

\subsection{Proof of~\Cref{lem:eulerex}}
\label{sec:proof_lem4}

\eulerex*
\begin{proof}
For the forward Euler scheme to be stable, the step size should satisfy
\begin{equation}
h\leq \frac{2}{\rho(\Am)}.\label{eq:hstabcond}
\end{equation}
We choose $h\leq \frac{2}{\nu}$ where $\nu=\|\Am\|_2$ which satisfies the condition (\ref{eq:hstabcond}). Furthermore, under~\Cref{assumption1} the forward Euler scheme satisfies following global error bound~\cite{ascher1998computer}
\begin{equation*}
\|\uv((k-1)h)-\uvh_f^k\|\leq Ch, \quad k=1,\dots,\tsteps,
\end{equation*}
where, $C>0$  is a constant. It then follows that
\begin{eqnarray}
\|\uv_c-\tilde{\uv}_f\|^2&=& \sum_{k=1}^{\tsteps} \|\uv((k-1)h)-\uvh_f^k\|^2\leq (Ch)^2(\tsteps-1),\notag
\end{eqnarray}
which implies
\begin{eqnarray}
\|\uv_c-\tilde{\uv}_f\|&\leq & Ch(\tsteps-1)^{1/2}\leq C h^{1/2} T^{1/2}.
\end{eqnarray}
Thus, for any given $\epsilon^\prime$ one can always choose
\begin{equation*}
h<\min\Big\{\frac{(\epsilon^\prime)^2}{C^2T},\frac{2}{\nu}\Big\},
\end{equation*}
such that
\begin{equation*}
\|\uv_c-\tilde{\uv}_f\|\leq \epsilon^\prime,
\end{equation*}
and Euler scheme is stable. Finally, note that
\begin{eqnarray}
\bigg\|\frac{\uv_c}{\|\uv_c\|}-\frac{\tilde{\uv}_f}{\|\tilde{\uv}_f\|}\bigg\|&\leq& \bigg\|\frac{\uv_c}{\|\uv_c\|}-\frac{\tilde{\uv}_f}{\|\uv_c\|}+\frac{\tilde{\uv}_f}{\|\uv_c\|}-\frac{\tilde{\uv}_f}{\|\tilde{\uv}_f\|}\bigg\|,\notag\\
&\leq & \frac{\|\uv_c-\tilde{\uv}_f\|}{\|\uv_c\|}+\frac{|\|\uv_c\|-\|\tilde{\uv}_f\||}{\|\uv_c\|}\leq \frac{2\|\uv_c-\tilde{\uv}_f\|}{\|\uv_c\|}\leq \frac{2\epsilon^\prime}{\|\uv_c\|},
\end{eqnarray}
where, we have use the fact
\begin{equation*}
\frac{|\|\uv_c\|-\|\tilde{\uv}_f\||}{\|\uv_c\|}\leq \frac{\|\uv_c-\tilde{\uv}_f\|}{\|\uv_c\|}.
\end{equation*}
Thus, for given $\epsilon$, by choosing $\epsilon^\prime=\|\uv_c\|\epsilon/2$ and determining $h$ as described above, we can conclude
\begin{equation*}
\bigg\|\frac{\uv_c}{\|\uv_c\|}-\frac{\tilde{\uv}_f}{\|\tilde{\uv}_f\|}\bigg\|\leq \epsilon,
\end{equation*}
as required.
\end{proof}

\subsection{Proof of~\Cref{thm:normat}}
\label{sec:proof_thm2}
\normat*
\begin{proof}
Given $\Am$ is a normal matrix i.e. $\Am^*\Am=\Am\Am^*$, it can always be decomposed into the form $\Am=\Vm^{T}\Lambda\Vm$ where $\Vm$ is an unitary matrix and $\Lambda$ is a diagonal matrix with diagonal entries as the eigenvalues of $\Am$. Then
\begin{eqnarray*}
  \|\Id+h\Am\|_2 &=& \|\Vm^{T}(\Id+h\Lambda)\Vm\|_2\leq \rho(\Id+h\Lambda). 
\end{eqnarray*}
We choose $h$ such that $\rho(\Id+h\Lambda) \leq  1$, or
\begin{eqnarray*}
|1+h\lambda_j|^2&\leq & 1,
\end{eqnarray*}
for all $\lambda_j$ eigenvalues of $\Am$. Let, $\lambda_j=-\delta_j+i\omega_j$ where $\delta_j> 0$ (or $\delta_j \geq 0$ if $\omega_j=0$) by Assumption \ref{assumption1}, then above condition can be met by  choosing
\begin{eqnarray*}
h\leq \min_{j}\frac{2\delta_j}{|\lambda_j|^2},
\end{eqnarray*}
Thus, for
\begin{eqnarray}
h\leq h_0\equiv\min_{j}\left\{\frac{2\delta_j}{|\lambda_j|^2},\frac{2}{\max_{k}{|\lambda_k|}}\right\}, \label{eq:h0}
\end{eqnarray}
following holds
\begin{equation}\label{eq:diffhbound}
\|\Id+h\Am\|_2\leq 1,
\end{equation}
and the condition for stability of forward Euler is also satisfied.

\paragraph{Euler error:} We next revisit error bounds for forward Euler under the assumption (\ref{eq:diffhbound}). By Taylor expansion
\begin{equation*}
\uv((k+1)h)=\uv(kh)+h\Am\uv(kh)+h\bv(kh)+\tauv_k,
\end{equation*}
where, $\tauv_k$ is the local truncation error which can be expressed as
\begin{equation}
\|\tauv_k\|\leq\frac{M_k}{2}h^2,\label{eq:taubound}
\end{equation}
under the Assumptions \ref{assumption1}, where $M_k>0$ is a constant. Thus
\begin{equation*}
\uv(kh)-\uvh_f^{k+1}=\uv((k-1)h)+h\Am\uv((k-1)h)+h\bv((k-1)h)+\tauv_k-(\uvh_f^k+h\Am\uvh_f^k+h\bv((k-1)h)),
\end{equation*}
which implies the norm of the error $e_{k}=\|\uv((k-1)h)-\uvh_f^{k}\|$ satisfies
\begin{equation}
e_{k+1}\leq \|(\Id+h\Am)\|e_k+\|\tauv_k\|.
\end{equation}
Iterating on $k$ we get
\begin{equation}
e_{k+1}\leq \|(\Id+h\Am)\|^{k}e_1+(\|(\Id+h\Am)\|^{k-1}\|\tauv_1\|+\|(\Id+h\Am)\|^{k-2}\|\tauv_2\|+\cdots \|\tauv_k\|).
\end{equation}
Since $e_1=0$ for Euler method, using the condition (\ref{eq:diffhbound}) and (\ref{eq:taubound}) for some constant $C>0$, we conclude that
\begin{equation*}
\|\uv((k-1)h)-\uvh_f^{k}\|=e_{k}\leq C (k-1)h^2.
\end{equation*}
It then follows that
\begin{eqnarray}
\|\uv_c-\tilde{\uv}_f\|^2&=& \sum_{k=1}^{\tsteps} \|\uv((k-1)h)-\uvh_f^k\|^2\leq (Ch^2)^2\sum_{k=1}^{\tsteps}(k-1)^2,\notag
\end{eqnarray}
which implies
\begin{eqnarray}
\|\uv_c-\tilde{\uv}_f\|&\leq & Ch^2(\tsteps-1)^{3/2}\leq C h^{1/2} T^{3/2}.
\end{eqnarray}
Thus, for any given $\epsilon^\prime$ one can choose 
\begin{equation}
h<\min \left\{\frac{(\epsilon^\prime)^2}{C^2T^3},h_0\right\},
\end{equation}
where $h_0$ is as defined in (\ref{eq:h0}),  so that
\begin{equation*}
\|\uv_c-\tilde{\uv}_f\|\leq \epsilon^\prime.
\end{equation*}
Finally, for given $\epsilon$, by choosing $\epsilon^\prime=\|\uv_c\|\epsilon/2$, selecting $h$ as described above, and following the steps in the Lemma \ref{lem:eulerex}, we conclude
\begin{equation*}
\left\|\frac{\uv_c}{\|\uv_c\|}-\frac{\tilde{\uv}_f}{\|\tilde{\uv}_f\|}\right\|\leq \epsilon.
\end{equation*}

\paragraph{Condition number:} Under the condition (\ref{eq:diffhbound}), following similar steps as in the proof of the Lemma \ref{lem:kappa}, one can deduce
\begin{equation*}
\|\tilde{\Am}_f\|_2\leq \|\Id\|_2+\|\overline{\Am}\|_2\leq 2,
\end{equation*}
and
\begin{equation*}
\|\tilde{\Am}_f^{-1}\|_2\leq \|\Id\|_2+\|\overline{\Am}\|_2+\|\overline{\Am}^2\|_2+\cdots \|\overline{\Am}^{\tsteps-1}\|_2\leq \tsteps,
\end{equation*}
where we used the fact that $\|\overline{\Am}^k\|_2\leq (\|\overline{\Am}\|_2)^k\leq 1$ for $k=1,2,\dots, M-1$. Thus, the condition number of $\tilde{\Am}_f$ can be bounded as
\begin{eqnarray*}
\kappa(\tilde{\Am}_f)&=&\|\tilde{\Am}_f\|_2\|\tilde{\Am}_f^{-1}\|_2\leq 2 \tsteps.
\end{eqnarray*}

Finally, choose 
\begin{equation}
h<\min \left\{ \frac{(\epsilon\|\uv_c\|)^2}{16C^2T^3},h_0\right\},
\end{equation}
and
\begin{equation}\label{eq:gamma1}
\gamma=\frac{1-(1-\frac{\epsilon^2}{8})^2}{2\,(2M)^2\log(\nx\tsteps)}<1,
\end{equation}
and follow steps similar to the proof of~\Cref{thm:errfor} to conclude that 
\begin{equation}
\| |\uv_c\ra-|\uvo_f\ra\|\leq \epsilon.
\end{equation}

Furthermore, under~\Cref{assumption2} the query complexity can be determined as follows
\begin{eqnarray*}
 && \mathcal{O}\left(\log^{8.5} (\nx\tsteps)\kappa\log(1/\epsilon^\prime)\right) = \mathcal{O}\left(\log^{8.5} (\nx\, T/h)2(T/h)\log(1/\epsilon)\right)\\
  &=&  \mathcal{O}\left(\log^{8.5} \left(\nx T\left(\frac{1}{h_0}+\frac{4C^2T^3}{\|\uv_c\|^2\epsilon^2}\right)\right)\left(\frac{1}{h_0}+\frac{4C^2T^3}{\|\uv_c\|^2\epsilon^2}\right)\log(1/\epsilon)\right)\\
  &=& \mathcal{O}\left(\log^{8.5} \left(\frac{\nx T^4}{\|\uv_c\|^2\epsilon^2}\right)\frac{T^3}{\|\uv_c\|^2\epsilon^2}\log(1/\epsilon)\right),
\end{eqnarray*}
where, we have assumed $\mathcal{O}(h_0)=1$.

\end{proof}

\subsection{Proof of~\Cref{lem:kappab} }
\label{sec:proof_lem5}
\kappab*

Let $\overline{\Am}=\Id-h\Am$, and so
\begin{equation*}
1\leq \|\overline{\Am}\|_2\leq 1+h\nu,
\end{equation*}
where, $\nu=\|\Am\|_2$ and we have used the fact that $\|\overline{\Am}\|_2\geq \rho(\Id-h\Am)\geq 1$ based on~\Cref{assumption1}.  Using, a sequence of triangular inequalities and the submultiplicavity of $\|\cdot\|_2$, one can easily deduce
\begin{equation*}
\|\tilde{\Am}_b\|_2\leq \|\Id\|_2+\max\{\|I\|_2,\|\overline{\Am}\|_2\}\leq 1+\|\overline{\Am}\|_2\leq 2+h\nu.
\end{equation*}

Next note that $(\Id-h\Am)$ is invertible since its eigenvalues $\frac{1}{1-h\lambda}$ are non-zero under~\Cref{assumption1}, condition A1. Using the matrix inversion lemma, we get
\begin{equation*}
(\Id-h\Am)^{-1}=\sum_{k=0}^{\infty}(hA)^k,
\end{equation*}
and implies
\begin{equation}\label{eq:hboundback}
\|\overline{\Am}^{-1}\|=\|(\Id-h\Am)^{-1}\|\leq \frac{1}{1-h\nu}\leq 1 + 2\, h \nu,
\end{equation}
assuming $h$ is chosen such that $h\nu<0.5$.

It can be shown that the inverse of $\tilde{\Am}_b$ is composed of following blocks
\begin{equation*}
(\tilde{\Am}_b^{-1})_{ij}=\begin{cases}
(\overline{\Am}^{-1})^{i-1} & j=1\\
\overline{\Am}^{-1} & i=j,j\neq 1\\
(\overline{\Am}^{-1})^{(i-j)+1} & i>j, j\neq 1
\end{cases}.
\end{equation*}
Consequently, applying a sequence of triangular inequalities and using the submultiplicative property leads to
\begin{equation*}
\|\tilde{\Am}_b^{-1}\|_2\leq \max\{\|I\|_2,\|\overline{\Am}^{-1}\|_2\}+\max\{\|\overline{\Am}^{-1}\|_2,\|\overline{\Am}^{-2}\|_2\}+\cdots \max\{\|\overline{\Am}^{-(M-2)}\|_2,\|\overline{\Am}^{-(M-1)}\|_2\}+\|\overline{\Am}^{-(M-1)}\|_2.
\end{equation*}

We consider two cases.
\paragraph{Case 1:} $\|\overline{\Am}^{-1}\|\leq 1$, which implies $\|\overline{\Am}^{-k}\|\leq 1$
\begin{equation*}
\|\tilde{\Am}_b^{-1}\|_2\leq \tsteps.
\end{equation*}
Thus, the condition number of $\tilde{\Am}_b$ can be bounded as
\begin{eqnarray}
\kappa(\tilde{\Am}_b)&=&\|\tilde{\Am}_b\|_2\|\tilde{\Am}_b^{-1}\|_2\leq (2+h\nu)\tsteps\leq 2.5(T/h+1), \label{eq:kappaexpeuler1}
\end{eqnarray}
where, we have used $T=(\tsteps-1)h$ and assumed $h\nu<0.5$.

\paragraph{Case 2:} $\|\overline{\Am}^{-1}\|>1$, which implies 
\begin{eqnarray*}
\|\tilde{\Am}_b^{-1}\|_2&\leq &\|\overline{\Am}^{-1}\|_2+\|\overline{\Am}^{-2}\|+\cdots \|\overline{\Am}^{-(M-1)}\|+\|\overline{\Am}^{-M}\|_2\notag\\
&\leq & (1+2h\nu) \frac{(1+2h\nu)^M}{2h\nu} \leq \frac{(1+2h\nu)^2}{2h\nu}e^{2\nu T} \\
&\leq& \frac{2}{h\nu}e^{2\nu T}.
\end{eqnarray*}
Thus, condition number of $\tilde{\Am}_b$ can be bounded as
\begin{eqnarray}
\kappa(\tilde{\Am}_b)\leq (2+h\nu)\frac{2}{h\nu}e^{2\nu T}\leq \frac{5}{h\nu}e^{2\nu T}.\label{eq:kappaexpeuler2}
\end{eqnarray}

\subsection{Proof of~\Cref{lem:eulerexb}}
\label{sec:proof_lem6}
\eulerexb*
Under the Assumptions \ref{assumption1}, the backward Euler scheme is stable. For backward Euler, the error equation is 
\begin{equation*}
\uv(kh)-\uvh_b^{k+1}=\uv((k-1)h)+h\Am\uv(kh)+h\bv(kh)+\tauv_k-(\uvh_b^k+h\Am\uvh_b^{k+1}+h\bv(kh)),
\end{equation*}
which implies the norm of the error  $e_k=\|\uv((k-1)h)-\uvh_b^{k}\|$ satisfies
\begin{equation}
e_{k+1}\leq \|(\Id-h\Am)^{-1}\|\left(e_k+\|\tauv_k\|\right),
\end{equation}
where, as before the local truncation error $\tauv_k$ satisfies $\|\tauv_k\|\leq \frac{M_kh^2}{2}$ for $M_k\geq 0$.  

As in Lemma \ref{lem:kappab}, we consider two cases under the assumption $h\nu<0.5$.
\paragraph{Case I:} $\|(\Id-h\Am)^{-1}\|\leq 1$ under which backward scheme satisfies following global error bound
\begin{equation*}
\|\uv((k-1)h)-\uvh_b^k\|\leq C_1 h^2(k-1),
\end{equation*}
where, $C_1>0$  is a constant. Thus, following similar steps as in the proof of Theorem \ref{thm:normat}, for any given $\epsilon$ one can choose 
\begin{equation}
h<\min\left\{\frac{(\epsilon \|\uv_c\|)^2}{4C_1^2T^3},\frac{0.5}{\nu}\right\},\label{eq:hboundexp1}
\end{equation}
such that
\begin{equation}
\left\|\frac{\uv_c}{\|\uv_c\|}-\frac{\tilde{\uv}_b}{\|\tilde{\uv}_b\|}\right\|\leq \epsilon^\prime.\label{eq:errEb}
\end{equation}

\paragraph{Case II:} $\|(\Id-h\Am)^{-1}\|>1$ under which backward scheme satisfies following global error bound
\begin{equation*}
\|\uv((k-1)h)-\uvh_b^k\|\leq C_2 h,
\end{equation*}
where, $C_2>0$  is a constant. Thus, following similar steps as in the Lemma \ref{lem:eulerex}, for any given $\epsilon$ one can choose 
\begin{equation}
h<\min\left\{\frac{(\epsilon \|\uv_c\|)^2}{4C_2^2T},\frac{0.5}{\nu}\right\}\label{eq:hboundexp2}
\end{equation}
such that (\ref{eq:errEb}) holds.

\end{document}